\DeclareFixedFont{\ttb}{T1}{txtt}{bx}{n}{9} 
\DeclareFixedFont{\ttm}{T1}{txtt}{m}{n}{9}  
\definecolor{deepblue}{rgb}{0,0,0.5}
\definecolor{deepred}{rgb}{0.6,0,0}
\definecolor{deepgreen}{rgb}{0,0.5,0}
\ttb\color{deepblue},
\ttb\color{deepred},    
  \newcommand{\ismain}{0}
\tikzstyle{hedge}=[fill=white, draw=black, shape=rectangle, rounded corners=2mm, inner sep=0.2mm, outer sep=-2mm, scale=0.8, minimum height=8mm, minimum width=8mm, tikzit category=hypergraph]
\tikzstyle{hedge blue}=[hedge, fill={rgb,255: red,102; green,204; blue,255}, draw=black, shape=rectangle, tikzit category=hypergraph]
\tikzstyle{node}=[fill=black, draw=black, shape=circle, minimum size=1.5mm, inner sep=0mm, tikzit category=hypergraph]
\tikzstyle{red node}=[fill=red, draw=black, shape=circle, minimum size=1.5mm, inner sep=0mm, tikzit category=hypergraph]
\tikzstyle{node highlight}=[fill=black, draw=blue, thick, shape=circle, minimum size=1.5mm, inner sep=0mm, tikzit category=hypergraph]
\tikzstyle{red node highlight}=[fill=red, draw=blue, thick, shape=circle, minimum size=1.5mm, inner sep=0mm, tikzit category=hypergraph]
\tikzstyle{yellow hedge}=[hedge, fill=yellow, draw=black, shape=rectangle, tikzit category=hypergraph]
\tikzstyle{green hedge}=[hedge, fill=green, draw=black, shape=rectangle, tikzit category=hypergraph]
\tikzstyle{small box}=[fill=white, draw=black, shape=rectangle, minimum height=6mm, minimum width=6mm, tikzit category=string diagram]
\tikzstyle{vsmall box}=[fill=black, draw=black, shape=rectangle, minimum height=4mm, minimum width=1mm, tikzit category=string diagram, inner sep=0]
\tikzstyle{medium box}=[fill=white, draw=black, shape=rectangle, minimum height=11mm, minimum width=6mm, tikzit category=string diagram]
\tikzstyle{semilarge box}=[fill=white, draw=black, shape=rectangle, minimum height=16mm, minimum width=6mm, tikzit category=string diagram]
\tikzstyle{large box}=[fill=white, draw=black, shape=rectangle, minimum height=21mm, minimum width=6mm, tikzit category=string diagram]
\tikzstyle{black dot}=[fill=black, draw=black, shape=circle, minimum size=2mm, inner sep=0mm, tikzit category=string diagram]
\tikzstyle{white dot}=[fill=white, draw=black, shape=circle, minimum size=2mm, inner sep=0mm, tikzit category=string diagram]
\tikzstyle{red dot}=[fill=red, draw=black, shape=circle, minimum size=2mm, inner sep=0mm, tikzit category=string diagram]
\tikzstyle{wlabel}=[fill=none, draw=none, shape=rectangle, tikzit category=string diagram, font={\footnotesize}, inner sep=0pt, tikzit fill={rgb,255: red,102; green,204; blue,255}, tikzit draw={rgb,255: red,102; green,204; blue,255}, yshift=0.3mm]
\tikzstyle{BRchange}=[draw=black, shape=diamond, tikzit shape=circle, tikzit fill={rgb,255: red,96; green,0; blue,0}, diamond split part fill={black,red}, inner sep=-5mm, minimum width=2.7mm, minimum height=1.7mm]
\tikzstyle{RBchange}=[draw=black, shape=diamond, tikzit shape=circle, tikzit fill={rgb,255: red,165; green,0; blue,0}, diamond split part fill={red,black}, inner sep=0, minimum width=2.7mm, minimum height=1.7mm]
\tikzstyle{dummy}=[fill=none, draw=none, shape=circle, font={\small}, inner sep=1pt, tikzit draw=blue, tikzit fill=white]
\tikzstyle{node label}=[fill=none, draw=none, shape=rectangle, tikzit fill=cyan, tikzit draw=cyan, font={\scriptsize}, tikzit shape=circle, inner sep=0pt]
\tikzstyle{empty diag}=[fill=white, draw={rgb,255: red,165; green,165; blue,165}, shape=rectangle, minimum size=1.2 cm, dashed, thick]
\tikzstyle{dashed edge}=[-, dashed, very thick]
\tikzstyle{alt sort}=[-, dashed, dash pattern=on 2pt off 0.5pt, thick, draw=red]
\tikzstyle{diredge}=[->, >={Latex[length=1.5mm]}]
\tikzstyle{diredge highlight}=[->, >={Latex[length=1.5mm]}, draw=blue, thick]
\tikzstyle{diredge highlight alt}=[->, >={Latex[length=1.5mm]}, draw=red, thick]
\tikzstyle{boundary frame}=[-, draw={rgb,255: red,170; green,170; blue,255}, dashed, fill={rgb,255: red,238; green,238; blue,255}, thick, dash pattern=on 2pt off 0.5pt]
\tikzstyle{graph frame}=[-, draw={rgb,255: red,191; green,191; blue,191}, dashed, fill={rgb,255: red,238; green,238; blue,238}, thick, dash pattern=on 2pt off 0.5pt]
\tikzstyle{venn}=[-, draw={rgb,255: red,100; green,100; blue,100}, fill={rgb,255: red,238; green,238; blue,238}, thick, opacity=0.5]
\tikzstyle{def sort}=[-]
\tikzstyle{component}=[-, draw=red, thick]
\tikzstyle{map edge}=[{|->}, >=latex, shorten <=0.5mm, shorten >=0.5mm]
\tikzstyle{hypergraph map edge}=[{|->}, draw=red, shorten <=1mm, shorten >=1mm]
\tikzstyle{cdedge}=[->]
\tikzstyle{big cdedge}=[->, very thick, >=latex]
\tikzstyle{pointer edge}=[->, draw=gray, thick]
\newcommand{\ignora}[1]{ }
\setlist[itemize]{noitemsep, topsep=0pt}
\def \catC {\mathbb{C}}
\def \catA {\mathbb{A}}
\def \catC {\mathbb{C}}
\def \PROP {\mathsf{PROP}} 
\newcommand{\sg}{\!\lower1pt\hbox{$\includegraphics[width=8pt]{graffles/greenbullet.pdf}$}\!} 
\newcommand{\sr}{\!\lower1pt\hbox{$\includegraphics[width=8pt]{graffles/redbullet.pdf}$}\!} 
\newcommand{\sbl}{\!\lower1pt\hbox{$\includegraphics[width=8pt]{graffles/blackbullet.pdf}$}\!} 
\newcommand{\frob}{\ensuremath{\mathbf{Frob}}\xspace}
\newcommand{\old}[1]{}
\newcommand{\tr}[1]{\xrightarrow{#1}}    
\newcommand{\tl}[1]{\xleftarrow{#1}}    
\newcommand{\dlcorner}{{\ar@{}[dl]|(.8){\text{\large $\urcorner$}}}}
\newcommand{\drcorner}{{\ar@{}[dr]|(.8){\text{\large $\ulcorner$}}}}
\newcommand{\synTosem}[1]{[\! [ #1 ]\! ]}
\newcommand{\frobTosem}[1]{[ #1 ]}
\newcommand{\allTosem}[1]{\langle\! \langle #1 \rangle \! \rangle}
\newcommand{\Cospan}[1]{\mathsf{Csp}(#1)}
\newcommand\symNet{\lower3pt\hbox{$\includegraphics[width=20pt]{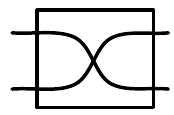}$}}
\newcommand\Idnet{\lower3pt\hbox{$\includegraphics[width=20pt]{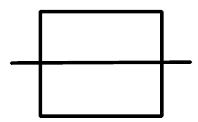}$}}
\newcommand\lccB{\lower5pt\hbox{$\includegraphics[width=25pt]{graffles/rccr.pdf}$}}
\newcommand\rccB{\lower5pt\hbox{$\includegraphics[width=25pt]{graffles/lccl.pdf}$}}
\newcommand\lccn{\lower5pt\hbox{$\includegraphics[width=20pt]{graffles/cup.pdf}$}}
\newcommand\rccn{\lower5pt\hbox{$\includegraphics[width=20pt]{graffles/cap.pdf}$}}
\def \df {\ \ensuremath{:\!\!=}\ }
\DeclareMathOperator{\id}{id}
\newcommand{\Defeq}
 {\stackrel{{def}}{=}}
\newcommand{\stran}{\raise1pt\hbox{$\centerdot$}}
\newcommand{\rring}[1]{\ensuremath{\mathbb{#1}}}
\newcommand{\N}{\rring{N}}
\newcommand{\ra}{\to}
\renewcommand{\emptyset}{\varnothing}
\newcommand{\ladj}[2]{\ar@/^/[#1]^-{#2} \ar@{}[#1]|-%

{\ifthenelse{\equal{#1}{r}}{\bot}{%

{\ifthenelse{\equal{#1}{rr}}{\bot}{%

{\ifthenelse{\equal{#1}{l}}{\top}{%

{\ifthenelse{\equal{#1}{u}}{\dashv}{%

{\vdash}}}}}}}}}}
\newcommand{\radj}[2]{\ar@/^/[#1]^-{#2}}
\newcommand{\radjff}[2]{\ar@{_{(}->}[#1]^{#2}}
\newcommand{\pullbacktop}[4]{%

{#1} \ar@/_/[ddr]_{#4} \ar@/^/[drr]^{#2}%

\ar@{.>}[dr]|-{#3} \\}
\newcommand{\ltsred}[1]
{ \setbox0=\hbox{$\ {}^{#1}\ $}
  \setbox1=\hbox{$\longrightarrow$}
  \loop\setbox1=\hbox{$-$\kern-0.3em\unhbox1}\ifdim\wd1<\wd0\repeat
  \hbox{$\ \ \mathop{\box1}\limits^{#1}\ \ $}
}
\newcommand{\arx}[2]{\!\xymatrix@=15pt{\ar[r]^{{#1}}_{{#2}}&}\!}
\newlength{\mylength}
\newcommand{\DCospan}[2]{\mathsf{Csp}_{#1}(#2)}
\newcommand{\syntax}[1]{\mathbf{S}_{#1}}
\newcommand{\FTerm}[1]{\DCospan{D}{\Hyp{{\scriptscriptstyle #1}}}}
\newcommand{\Hyp}[1]{\mathbf{Hyp}_{#1}}
\def\bR{\begin{color}{red}}
\def\bB{\begin{color}{blue}}
\def\bM{\begin{color}{magenta}}
\def\bC{\begin{color}{cyan}}
\def\bW{\begin{color}{white}}
\def\bBl{\begin{color}{black}}
\def\bG{\begin{color}{green}}
\def\bY{\begin{color}{yellow}}
\def\e{\end{color}\xspace}
\newcommand{\cgr}[2][scale=0.45]{\raisebox{0.1em}{\begingroup
\setbox0=\hbox{\includegraphics[#1]{graffles/#2}}%
\parbox{\wd0}{\box0}\endgroup}}
\def \poi {\,\ensuremath{;}\,}
\def \df {\ensuremath{:=}}
\def \tns {\ensuremath{\oplus}}
\def \: {\colon}
\newcommand\Wmult{\itikzfig{Wmult}\xspace}
\newcommand\Wcomult{\itikzfig{Wcomult}\xspace}
\newcommand\Wunit{\itikzfig{Wunit}\xspace}
\newcommand\Wcounit{\itikzfig{Wcounit}\xspace}
\newcommand\Bmult{\itikzfig{Bmult}\xspace}
\newcommand\Bcomult{\itikzfig{Bcomult}\xspace}
\newcommand\Bunit{\itikzfig{Bunit}\xspace}
\newcommand\Bcounit{\itikzfig{Bcounit}\xspace}
\newcommand{\rrule}[2]{\ensuremath{\left\langle #1,#2 \right\rangle}}
\newcommand{\out}[1]{\mathsf{out}(#1)}
\newcommand{\inp}[1]{\mathsf{in}(#1)}
\newcommand{\node}{\lower0pt\hbox{$\includegraphics[width=6pt]{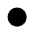}$}}
\newcommand{\hyperedge}{\lower2pt\hbox{$\includegraphics[width=25pt]{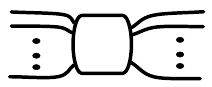}$}}
\newcommand{\ZeronetT}{\lower4pt\hbox{$\includegraphics[width=14pt]{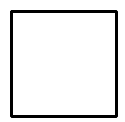}$}}
\newcommand\idncircuit{\lower4pt\hbox{$\includegraphics[width=18pt]{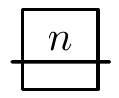}$}}
\newcommand{\NCB}[1]{\ensuremath{\mathbf{NB}_{#1}}\xspace}
\newcommand{\NLC}[1]{\ensuremath{\mathbf{NLC}_{#1}}\xspace}
\newtheorem{definition}{Definition}[section]
\newtheorem{proposition}{Proposition}[section]
\newtheorem{theorem}{theorem}[section]
\newtheorem{example}{Example}[section]
\newtheorem{assumption}{Assumption}[section]
\newtheorem{construction}{Construction}[section]
\newtheorem{lemma}{Lemma}[section]
\newtheorem{corollary}{Corollary}[section]
\newtheorem{remark}{Remark}[section]
\newtheorem{notation}{Notation}[section]
\begin{document}

\title{String Diagram Rewrite Theory III:\\ Confluence with and without Frobenius}
\date{}
\author{Filippo Bonchi \\
University of Pisa
\and
Fabio Gadducci\thanks{Supported from MIUR PRIN 2017FTXR7S ``IT-MaTTerS''.} \\
University of Pisa
\and
Aleks Kissinger \\
University of Oxford
\and
Pawe{\l} Soboci\'nski\thanks{Supported by the ESF funded Estonian IT Academy research measure (project
2014-2020.4.05.19-0001) and the Estonian Research Council grant PRG1210.} \\
Tallinn University of Technology
\and
Fabio Zanasi\thanks{Supported from EPSRC EP/V002376/1.} \\
University College London
}

\maketitle

\begin{abstract}
In this paper we address the problem of proving confluence for string diagram rewriting, which was previously shown
to be characterised combinatorially as double-pushout rewriting with interfaces (DPOI) on (labelled) hypergraphs.
For standard DPO rewriting without interfaces, confluence for terminating rewriting systems is, in general, undecidable.
Nevertheless, we show here that confluence for DPOI, and hence string diagram rewriting, \emph{is} decidable.
We apply this result to give effective procedures for deciding local confluence of symmetric monoidal theories with
and without Frobenius structure by critical pair analysis. For the latter, we introduce the new notion of path joinability
for critical pairs, which enables finitely many joins of a critical pair to be lifted to an arbitrary context in spite of the
strong non-local constraints placed on rewriting in a generic symmetric monoidal theory.
\end{abstract}


\maketitle


\section{Introduction}

This is the final installment in a series of three papers developing the rewriting theory of string diagrams. String diagrams are a practical and visually intuitive language for presenting compositions of morphisms in a symmetric monoidal category. These are particularly useful for expressing \textit{symmetric monoidal theories} (SMTs), which enable one to present symmetric monoidal categories by generators and relations, strictly generalising algebraic theories. In Part I~\cite{BGKSZ-partone}, we showed that when an SMT contains a Frobenius algebra, string diagrams are in one-to-one correspondence with (labelled) hypergraphs, and equational reasoning corresponds precisely to hypergraph rewriting. In Part II~\cite{BGKSZ-parttwo}, we extended this representation to all SMTs, at the expense of requiring certain restrictions on which hypergraphs and hypergraph rewrites are allowed.

In this paper, we address one of the pillars of rewriting theory: The question of \textit{confluence} for string diagram rewriting systems. For \emph{term} rewriting, both confluence \cite{bauer1984finite} and termination \cite{huet1978uniform} are, in general, undecidable. However, for term rewriting systems known to be terminating, confluence is decidable. The key, celebrated property observed by Knuth and Bendix~\cite{knuth1970simple} is that a terminating system is confluent exactly when \emph{all its critical pairs are joinable}.

Since diagrams can be represented combinatorially as hypergraphs, it stands to reason that we should turn to the
literature on \textit{graph rewriting theory} to find answers about confluence.
Here, unfortunately, the status of confluence is less certain because established properties of critical pair analysis fail: Plump~\cite{Plump1993}, working in the framework of the double-pushout (DPO) graph rewriting~\cite{Ehrig1976},
showed that joinability of critical pairs \emph{does not} entail confluence, and even worse: that confluence of
terminating DPO rewriting systems is, in general, undecidable.

Thankfully, in the case of string diagrams, compositionality comes to the rescue. Unlike the graphs considered by Plump,
string diagrams have a natural notion of an interface, namely the inputs and outputs of the diagram that represent the
domain and codomain of a morphism in a symmetric monoidal category. Consequently, the appropriate notion of rewriting
for string diagrams should preserve that interface. This is why in the first two parts of this series we have formalised string
diagram rewriting using \textit{DPO with interfaces} (DPOI).

The idea of performing rewrites that respect an interface is not new, and has emerged in several research threads,
including rewriting with borrowed contexts \cite{Ehrig2004}, 
encodings of process calculi \cite{Gadducci07,BonchiGK09}, 
and connecting DPO rewriting systems with computads in cospans categories \cite{Gadducci1998,Sassone2005a}.
Our key observation is that for DPOI rewriting, the Knuth-Bendix property is saved: Confluence of a
terminating DPOI system can be decided by checking whether its critical pairs are joinable.

In fact, this result is more general than our particular encoding of string diagrams as hypergraphs: Under some mild assumptions related to the computability of individual rewriting steps, our result holds for DPOI rewriting in generic
adhesive categories~\cite{Lack2005}.

Our results do not falsify Plump's: In DPOI rewriting one rewrites morphisms $G \leftarrow J$, thought of as a graph
(or graph-like object) $G$ with a fixed interface $J$. The latter allows to consider $G$ in a larger context, where
$J$ acts as the ``glue'' between $G$ and its context. This is analogous to how variables allow a single term to apply
to a variety of contexts via substitution.
In the light of our analysis, Plump's result states that it is undecidable to check whether rewriting is confluent for all morphisms $G \leftarrow 0$, i.e., objects with an empty interface. Intuitively, the failure of Knuth-Bendix for such morphisms is due to the loss of expressive power of critical pairs, if deprived of an interface.

This reveals an attractive analogy with term rewriting: Morphisms $G \leftarrow 0$ -- representing graphs that can be
only trivially attached to other graphs, since they have an empty interface -- are akin to \emph{ground terms}, that cannot be extended since they have no variables. The property that Plump showed to be undecidable should be compared to \emph{ground confluence} for term rewriting \cite{padawitz1980new}, i.e., confluence with respect to all ground terms. And in fact, this property is undecidable for terminating term rewriting systems \cite{kapur1990ground}. Summarising, for both term and DPOI rewriting, confluence of terminating rewriting systems is decidable, while ground confluence is not.

\vspace{-.35cm}
\begin{center}
\setlength\tabcolsep{5pt}
\begin{tabular}{c|c|c}
& Terminating term rewriting system & Terminating DPO system \\ \hline
Ground confluence & undecidable (Kapur et al.~\cite{kapur1990ground}) & undecidable (Plump~\cite{Plump1993}) \\ \hline
Confluence & decidable (Knuth and Bendix~\cite{knuth1970simple}) & decidable (this paper) \\
\hline
\end{tabular}
\end{center}

We can apply this result about confluence for DPOI rewriting to string diagrams. This problem is known to be particularly challenging, for example a directed form of the Yang-Baxter equation generates infinitely many critical pairs \cite{Lafont2003,Mimram14}.

We show that this issue can be avoided by using DPOI rewriting, and that confluence is decidable in many cases by checking only finitely many critical pairs. The generic result for DPOI rewriting applies essentially verbatim for SMTs containing Frobenius structure that we considered in Part I.

For generic SMTs that do not necessarily have Frobenius structure, the story is is a bit more nuanced. It was shown in Part II that in order to obtain a sound rewriting theory for generic SMTs and avoid introducing directed cycles in diagrams, one should consider \textit{convex} DPOI rewriting. As we foreshadowed in Part II, this can cause problems for confluence, as convex rewrites can sometimes have unexpected non-local effects. Namely, the validity of a convex match depends on the non-existence of paths from outputs to inputs in the image of a pattern graph in the target. Hence, rule applications that create paths can break convexity of matches elsewhere.

%

In this paper, we provide two solutions to this problem. The first solution solves the problem by putting a strong condition on the rewriting systems considered called \textit{left-connectedness}. This condition essentially requires the left-hand sides of all rules to take a form that guarantees that \textit{any} rewrite is already a convex rewrite, hence it gives a very simple way to sidestep the technical challenges of convex rewriting.
Many interesting rewriting systems arising from SMTs (e.g., \cite{Lafont2003,Ghica13,Fiore2013}), including the aforementioned Yang-Baxter rule, enjoy this property. Amongst these is the rewriting system for \emph{non-commutative bimonoids} that was shown to be terminating in Part II. In this paper, we will apply our technique to show that it is also confluent.

The second technique we provide is the first, to our knowledge, necessary and sufficient condition for checking local confluence of generic SMTs with finitely many critical pairs (although necessity comes with a caveat discussed at the end of Section~\ref{sec:path-extensions}). The key point for such systems is that the presence of paths from outputs to inputs in the context of a critical pair can affect its joinability, so we formally adjoin certain additional generators to a theory, which enable us to check not only the critical pairs themselves, but \textit{path extensions} of the critical pairs, which account for these troublesome paths. Perhaps surprisingly, it is sufficient to check only finitely many of these to guarantee a critical pair is joinable in \textit{any} context. We apply this technique to show confluence of a simple SMT that is not left-connected.

\paragraph*{Synopsis.}
The paper has the following structure. Section~\ref{sec:Background} recalls the basic notions concerning
DPO rewriting for graphs with interfaces (DPOI) and PROP rewriting.
Section~\ref{sec:confluence} presents the main technical result, namely, local confluence for DPOI rewriting.
Thanks to the correspondence results established in the previous papers of the series,
this is exploited to prove local confluence for PROP rewriting with a
Frobenius structure in Section~\ref{sec:confluenceFrob} and for two different
proposals of PROP rewriting without Frobenius in Section~\ref{sec:confluenceSMC}.
Section~\ref{case studies} provide two case studies to support the
relevance of our results, while Section~\ref{conclusions} wraps up the paper with some final considerations.

Much of the content of this article is based on a paper published in the proceedings of
ESOP 2017~\cite{BGKSZ-esop17}.
In addition to updating the paper, extending with more examples and explanation, and adapting for consistency with
the previous two \textit{String Diagram Rewrite Theory} papers, this version goes beyond the conference paper
in two directions. First, it makes precise the distinction between pre-critical and critical pairs, providing an equivalent
of the parallel independence theorem for DPOI rewriting (see Proposition~\ref{confparallelpairs}). Second,
and more substantial, the technique for proving local confluence for generic convex rewriting systems using
formal path extensions (see Section~\ref{sec:path-extensions}) is completely new.
This technique is put to work on a new case study (see Section~\ref{nlc confluence}).

\paragraph*{Related work.}
Confluence is a classical topic for both term and graph rewriting, and it has been studied for quite some time.
The key observation is always the same: Identifying a set of rewrite instances whose check could ensure the
Knuth-Bendix property. Classically, this is the set of critical pairs. For DPO rewriting, despite the
undecidability result recalled before, local confluence has been shown to hold with respect to a stronger notion
of joinability for critical pairs~\cite{Plump1993}, and confluence is decidable whenever all critical pairs satisfy a 
syntactic condition, coverability~\cite{Plump10}. More recently, we mention the work on confluence up-to garbage, whose
intuition is to check if the rewriting system is confluent on a subclass of graphs that are of interest~\cite{0001P20}.
And, on a similar note, the work on confluence for DPO with applications conditions~\cite{EhrigHLOG10} seems
also relevant for our investigation. Indeed, both proposals are reminiscent of our restriction to monogamous acyclic
hypergraphs and convex rewriting, as introduced in Section~\ref{sec:confluenceSMC}, and establishing a precise
correspondence is left  for future work.
Despite the introduction of interfaces, our approach to critical pairs is rather classical. A different proposal concerns
\emph{initial conflicts}, a restricted class of critical pairs that still guarantees the Knuth-Bendix property~\cite{LambersO20}:
Also pursuing the adaptation of this notion in the context of DPOI  is left for future work.
Instead, our same notion of confluence has been studied in~\cite{BrugginkCHK11} in the setting of Milner's reactive systems.
By instantiating Proposition 22 in~\cite{BrugginkCHK11} to the category of input-linear cospans (of hypergraphs) and by using
the results relating borrowed context DPO rewriting with reactive systems over cospans in~\cite{sobocinski:thesis}, one
obtains a variant of our Theorem~\ref{th:locconfl}. One restriction of that approach is that the matches are required to be
mono, which rules out our applications to SMTs.

\section{Background}\label{sec:Background}

\begin{notation}
The composition of arrows $f \colon a \to b,\, g\colon b \to c$ in a category $\catC$
is written as $f\poi g$.
For $\catC$ symmetric monoidal, $\tns$ is its monoidal product and $\sigma_{a,b} \colon a \tns b \to b \tns a$ is the symmetry for
objects $a,b \in \catC$. 
\end{notation}

\subsection{DPO rewriting}\label{ssec:DPOrewriting}
%
%

\subsubsection{Adhesive categories and (typed) hypergraphs.}
In order not to restrict ourselves to any concrete model of graphs, we work with adhesive categories~\cite{Lack2005}. Adhesive categories are relevant because they have well-behaved
pushouts along monomorphisms, and for this reason they are convenient as ambient categories for DPO rewriting.
%
%
%
%

An important
example is the category of finite directed hypergraphs $\Hyp{}$.
%
%
An object $G$ of $\Hyp{}$ is a hypergraph with a finite set of \emph{nodes} $G_\star$
and for each $k,l\in\N$ a finite set of \emph{hyperedges} $G_{k,l}$ with $k$ (ordered) sources and $l$ (ordered) targets, i.e., for each $0\leq i < k$ there is the $i$\textsuperscript{th} source map $s_i\colon G_{k,l}\to G_\star$
and for each $0\leq j < l$ the $j$\textsuperscript{th} target map $t_j \colon G_{k,l}\to G_{\star}$. The arrows of $\Hyp{}$ are homomorphisms: functions $G_\star \to H_\star$
such that for each $k,l$, $G_{k,l}\to H_{k,l}$ they respect the source and target maps in the obvious way. The seasoned reader will recognise $\Hyp{}$ as a presheaf topos, and as such, it is adhesive~\cite{Lack2005}.

We shall visualise hypergraphs as follows: $\node$ is a node and $\hyperedge$ is a hyperedge, with ordered tentacles attached to the left boundary linking to sources and those on the right linking to targets
\ctikzfig{hypergraph-unlab2}

A signature $\Sigma$ consists of a set of \emph{generators} $o\colon n \to m$ with arity $n$ and coarity $m$ where $m,n\in \N$. Any signature $\Sigma$ can be considered as a hypergraph  $G_{\Sigma}$ with a single node, in the obvious way.  We can then express \emph{$\Sigma$-labelled hypergraphs} (briefly, $\Sigma$-hypergraphs)
as the objects of the slice
%
category $\Hyp{} \downarrow G_{\Sigma}$, denoted by $\Hyp{\Sigma}$, which is adhesive, since adhesive categories are closed under slice~\cite{Lack2005}.
$\Sigma$-hypergraphs are drawn by labeling hyperedges with generators in $\Sigma$ 
\ctikzfig{hypergraph-lab2}

\subsubsection{DPO rewriting.}
We recall the DPO approach~\cite{Ehrig1976} to rewriting in an adhesive category $\catC$.
A \emph{DPO rule} is a span $L \xleftarrow{} K \xrightarrow{} R$ in $\catC$.
A \emph{DPO system} $\mathcal{R}$ is a finite set of DPO rules.
Given objects $G$ and $H$ in $\catC$, we say that $G$ \emph{rewrites} into $H$ --notation $G \DPOstep{\mathcal{R}} H$--
if there esist $L \tl{} K  \tr{} R$ in $\mathcal{R}$, object $C$ and morphisms
such that the squares below are pushouts
\begin{equation*}\label{eq:dpo2}
\raise12pt\hbox{$
\xymatrix@R=10pt@C=20pt{
L \ar[d]_m   &  K \ar[d]
\ar@{}[dl]|(.8){\text{\large $\urcorner$}}
\ar@{}[dr]|(.8){\text{\large $\ulcorner$}}
\ar[l] \ar[r]  & R \ar[d] \\
 G &  C \ar[l] \ar[r]  & H }$}
\end{equation*}
A \emph{derivation} from $G$ into $H$ 
is a sequence of such rewriting steps.
The arrow $m \colon L \to G$ is called a \emph{match} of $L$ in $G$. A rule $L \tl{} K  \tr{} R$ is said to be \emph{left-linear} if the morphism $K\to L$ is mono. In this case, the matching $m$ fully determines the graphs $C$ and $H$, i.e., for a fixed rule and a matching there is a unique $H$ such that $G \DPOstep{\mathcal{R}} H$, if it exists.
Here, by unique, we mean unique up-to isomorphism. More generally, the rewriting steps will always be \emph{up-to iso}: in a step $G \DPOstep{\mathcal{R}} H$, $G$ and $H$ should not be thought of as single graphs but rather as equivalence classes of isomorphic graphs.

\paragraph{Undecidability of  confluence.}
In DPO rewriting,
the confluence of terminating systems is not decidable, even if we restrict to left-linear rules.

\begin{theorem}[\cite{Plump1993}]\label{thm:plump}
Confluence of terminating DPO systems over $\Hyp{\Sigma}$ is undecidable.
\end{theorem}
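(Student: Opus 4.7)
The plan is to prove undecidability by reduction from a known undecidable problem. Given the analogy highlighted in the introduction, the most natural choice is \emph{ground confluence} of terminating first-order term rewriting systems, which is undecidable by Kapur et al.~\cite{kapur1990ground}. The goal is to transform each terminating term rewriting system $(\Sigma, R)$ computably into a terminating DPO system $\mathcal{R}'$ over $\Hyp{\Sigma'}$ for a suitable $\Sigma'$, such that $R$ is ground confluent if and only if $\mathcal{R}'$ is confluent.

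First I would define an encoding: each symbol $f \in \Sigma$ of arity $n$ becomes a generator in $\Sigma'$ with $n$ sources and $1$ target, so that every ground term $t$ is represented as a tree-shaped $\Sigma'$-hypergraph $\sem{t}$ with a distinguished root node. Then I would translate each rule $l \to r$ of $R$ into a DPO rule $\sem{l} \leftarrow K \to \sem{r}$ where $K$ is the discrete hypergraph consisting of the root node and one node per variable occurring in $l$ (or in $l$ and $r$, depending on how non-left-linearity is handled). After this, I would verify two easy facts: (i) the encoding is computable, and (ii) termination is preserved, for instance by lifting any reduction order on terms to a well-founded order on the tree-shaped encodings, or by exhibiting an explicit weight function on hyperedges.

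The technical heart of the argument is the correspondence between rewriting in the two settings. On tree-shaped encodings of ground terms one can show that any DPO step $\sem{s} \DPOstep{\mathcal{R}'} H$ has $H \cong \sem{t}$ for some $t$ with $s \to_R t$, and conversely every term rewrite lifts to a DPO step. Combined with the fact that ground confluence is precisely confluence restricted to the image of the encoding, this should establish the reduction.

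The main obstacle is that DPO rewriting on $\Hyp{\Sigma'}$ operates on \emph{arbitrary} hypergraphs, not only tree-shaped ones: if non-tree graphs in $\Hyp{\Sigma'}$ are not confluent then confluence of $\mathcal{R}'$ fails even when $R$ is ground confluent. Two issues in particular arise: hypergraphs may contain cycles or shared subgraphs (aliasing of subterms), and non-left-linear term rules translated naively may fire on pairs of distinct subgraphs that would correspond to distinct terms rather than to a single substituted variable. The cleanest way I would address this is to enrich $\Sigma'$ with additional ``guard'' generators that are produced by auxiliary rules sufficient to force any non-tree-shaped input to diverge into a fresh normal form that cannot interfere with the simulation, or alternatively to show that reachable forms from encoded ground terms remain within the tree-shaped fragment. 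Handling the non-left-linear case cleanly is the most delicate point, and one may have to pre-process $R$ into an equivalent left-linear system (or restrict $R$ to the left-linear fragment, for which ground confluence is still undecidable) before performing the reduction.
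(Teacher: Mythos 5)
The paper does not prove this statement: it is imported verbatim from Plump~\cite{Plump1993}, so there is no internal proof to measure your argument against, and it must be judged on its own. Your choice of source problem is the natural one and matches the analogy the introduction draws between ground confluence for terms and interface-free DPO confluence, but the reduction as sketched has a genuine gap, which you half-see but do not close: both ``terminating'' and ``confluent'' for a DPO system over $\Hyp{\Sigma'}$ quantify over \emph{all} objects of $\Hyp{\Sigma'}$, not just tree-shaped encodings of ground terms. For the confluence direction, if $R$ is ground confluent you must still prove $\mathcal{R}'$ confluent on every hypergraph, including cyclic graphs and graphs with sharing or dangling structure. Your proposed repair via ``guard'' generators cannot work as described: a DPO left-hand side is a fixed finite graph and can only recognise bounded local patterns, whereas cyclicity is a global property that no finite set of left-hand sides detects; moreover any garbage-collapsing rules you add overlap with the simulation rules and generate new joinability obligations of their own, so they may themselves destroy confluence. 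The alternative repair --- showing that graphs reachable from encodings stay tree-shaped --- is beside the point, because confluence is not a statement about reachable graphs only.

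The second gap is termination. Termination of $R$ yields a well-founded measure only on tree encodings; a ``terminating DPO system'' must terminate from every hypergraph. The graph translation of a terminating rule can loop on non-tree inputs: for instance a rule encoding $f(g(x)) \to g(f(x))$ terminates on all finite terms but can fire forever on a two-node cycle in which an $f$-edge and a $g$-edge alternate, since each application reproduces the same cycle up to isomorphism. So ``termination is preserved'' is in general false for the naive compilation and has to be engineered into the construction. These two difficulties are exactly why the proof in the literature does not generically compile an arbitrary terminating TRS, but instead constructs a specific family of hypergraph rewriting systems for which termination on \emph{all} graphs and the equivalence with the undecidable source problem are verified directly. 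Your worry about non-left-linear rules is legitimate but secondary --- it can be dodged by reducing from a left-linear undecidable fragment --- whereas the garbage and termination issues cannot be dodged without substantially redesigning the construction.
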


Indeed, critical pair analysis for traditional DPO systems fails: for terminating DPO systems, joinability of critical pairs does not necessarily imply confluence.

\begin{definition}[Pre-critical pair and joinability]\label{defn:DPOcritpair}
 Let $\mathcal{R}$ be a DPO system with rules $L_1 \tl{} K_1 \tr{} R_1$ and $L_2 \tl{} K_2 \tr{} R_2$. Consider two derivations with common source $S$
\begin{equation*}
\xymatrix@R=15pt{
    R_1 \ar[d] & \ar[l] K_1 \dlcorner  \ar[d] \ar[r] & L_1 \ar@{}[dr]|(.8){\text{\large $\ulcorner$}\qquad\quad} \ar@/^5pt/[dr]^{f_1} && \ar@/_5pt/[dl]_{f_2} \ar@{}[dl]|(.8){\qquad\quad\text{\large $\urcorner$}} L_2 & \ar[l] K_2 \ar[d] \drcorner\ar[r] & R_2 \ar[d] \\
    H_1  & \ar[l] C_1 \ar[rr] & & S  & & \ar[ll] C_2  \ar[r] & H_2 \\
    }
\end{equation*}
We say that $ H_1 \LDPOstep{} S\DPOstep{}H_2$ is a \emph{pre-critical pair} if $[f_1,f_2] \colon L_1 + L_2 \to S$ is epi;
it is \emph{joinable} if there exists $W $ such that $H_1 \DPOstep{}^* W {\,\,{}^*\!\!\LDPOstep{}} \, H_2$.
\end{definition}

We use the standard notation $\DPOstep{}^*$ for defining a possibly empty sequence of rewriting steps.
Intuitively, in a pre-critical pair $S$ should not be bigger than $L_1+L_2$.
In a critical pair, $L_1$ and $L_2$ must overlap in $S$,
so that the two rewriting steps do not form a \emph{parallel pair}.

\begin{definition}[Parallel pair]\label{defn:DPOparpair}
 Let $\mathcal{R}$ be a DPO system with rules $L_1 \tl{} K_1 \tr{} R_1$ and $L_2 \tl{} K_2 \tr{} R_2$. Consider two derivations with common source $S$
as in Definition~\ref{defn:DPOcritpair}.
%
We say that $ H_1 \LDPOstep{} S\DPOstep{}H_2$ is a \emph{parallel pair} if there exist $g_1 \colon L_1 \to C_2$ and
$g_2 \colon L_2 \to C_1$ making the diagram below commute
\begin{equation*}
\xymatrix@R=15pt{
    K_1  \ar[d] \ar[r] & L_1 \ar@{}[dr]|(.8){\text{\large $\ulcorner$}\qquad\quad} \ar@/^5pt/[dr]  \ar@/^5pt/[drrr] && \ar@/_5pt/[dl] \ar@{}[dl]|(.8){\qquad\quad\text{\large $\urcorner$}} L_2 \ar@/_5pt/[dlll] & \ar[l] K_2 \ar[d] \\
    C_1 \ar[rr] & & S  & & \ar[ll] C_2 \\
    }
\end{equation*}
\end{definition}

The key result for us is that parallel pairs are always joinable (see e.g.~\cite{CorradiniStaf}), and
a pre-critical pair is thus \emph{critical} if it is not a parallel one.
However, for the purposes of this paper, this restriction is immaterial, and we
will mostly stick to pre-critical pairs in our results, as proofs are less tedious.
For the sake of succinctness, most of the examples will instead display only the critical pairs.
For a pre-critical pair which is also a parallel pair, see for instance the first picture of Section~\ref{sec:confluenceBimonoids}.

A notable feature of DPO rewriting is that, unlike in the case of term rewriting, joinability of all critical pairs is not enough to guarantee confluence, even for a terminating rewriting system.

\begin{example}[\cite{Plump1993}]\label{ex:Plump}
Consider a DPO system $\mathcal{R}$ consisting of the following two rules, where we labeled nodes with numbers in order to make the graph morphisms explicit
\[
  \scalebox{0.75}{\tikzfig{plump-ex1}}
  \qquad\qquad
  \scalebox{0.75}{\tikzfig{plump-ex2}}
\]
Amongst the several pre-critical pairs, only the following two have non-trivial overlap
\[
  \scalebox{0.75}{\tikzfig{plump-ex3}}
  \qquad\qquad
  \scalebox{0.75}{\tikzfig{plump-ex4}}
\]
Both are obviously joinable. However, $\mathcal{R}$ is not confluent, as witnessed by the following
\[
  \scalebox{0.75}{\tikzfig{plump-ex5}}
\]
\end{example}

However, this `bug' in DPO rewriting can be fixed by considering graphs with interfaces, and DPO rules that respect the interface.

\subsection{DPO rewriting with interfaces.}



Morphisms $G \tl{}J$ will play a special role in our exposition. When $\catC$ is $\Hyp{\Sigma}$, we will call them \emph{(hyper)graphs with interface}. The intuition is that $G$ is a hypergraph and $J$ is an interface that allows $G$ to be ``glued'' to a context.
Note however that such morphisms are not necessarily mono, even if this will be the case in most of our examples.

Given $G \leftarrow J$ and $H \leftarrow J$ in $\catC$, \emph{$G$ rewrites into $H$ with interface $J$} --notation $(G \tl{} J) \DPOstep{\mathcal{R}} (H \tl{} J)$-- if there exist rule $L \tl{} K  \tr{} R$ in $\mathcal{R}$, object $C$, and morphisms
such that the diagram below commutes and the squares are pushouts

\begin{equation*}\label{eq:dpo2a}
\raise25pt\hbox{$
\xymatrix@R=10pt@C=20pt{
L \ar[d]_m   &  K \ar[d]
\ar@{}[dl]|(.8){\text{\large $\urcorner$}}
\ar@{}[dr]|(.8){\text{\large $\ulcorner$}}
\ar[l] \ar[r]  & R \ar[d] \\
 G &  C \ar[l] \ar[r]  & H \\
&  J \ar[u] \ar[ur]  \ar[ul]
}$}
\end{equation*}
Hence, the interface $J$ is preserved by individual rewriting steps.

When $\catC$ has an initial object $0$ (for instance, in $\Hyp{\Sigma}$ $0$ is the empty hypergraph), ordinary DPO rewriting can be considered as a special case, by taking $J$ to be $0$.

Like for traditional DPO, rewriting steps are modulo isomorphism: $G_1 \leftarrow J: f_1$ and $G_2 \leftarrow J: f_2$ are isomorphic if there is an isomorphism $\varphi \colon G_1 \to G_2$ with $f_1\poi \varphi = f_2$.




\begin{example}\label{ex:againstPlump}
Consider the system $\mathcal{R}$ from Example \ref{ex:Plump} and the graph with interface below
\[ \scalebox{0.65}{\tikzfig{plump-ex-int}} \]
It can be rewritten in two different ways
\begin{equation}\label{eq:criticalpairplumprevisited}
\scalebox{0.65}{\tikzfig{plump-ex-int2}}
\end{equation}
which, unlike as in Example~\ref{ex:Plump}, results in two distinct hypergraphs with interface.
Notably, the interface $\{0,1\}$ 
maintains the distinct identities of the two nodes initially connected to the hyperedge labelled by $a$, even after that hyperedge is removed.
Notice that if~\eqref{eq:criticalpairplumprevisited} were considered as a critical pair, it would not be joinable. Hence, the counterexample of Plump~\cite{Plump1993} (Example~\ref{ex:Plump}) would not work. This is the starting observation for our formulation of critical pair analysis: in Section~\ref{sec:confluence} we will introduce pre-critical pairs for rewriting with interfaces and we will show that, as in term rewriting, joinability of pre-critical pairs entails confluence.
\end{example}
%



\subsection{PROP rewriting}\label{ssec:PROPrewriting}

\subsubsection{SMTs and PROPs.} A uniform way to express an algebraic structure within a symmetric monoidal category is with a \emph{symmetric monoidal theory} (SMT). A (one-sorted) SMT is a pair $(\Sigma, \mathcal E)$ where $\Sigma$ is a \emph{signature} defined as in Section~\ref{ssec:DPOrewriting}. The set of $\Sigma$-terms is obtained by combining generators in $\Sigma$, the \emph{unit} $\id \colon 1\to 1$ and the \emph{symmetry} $\sigma_{1,1} \colon 2\to 2$ with $;$ and $\tns$. That means, given $\Sigma$-terms $t \colon k\to l$, $u \colon l\to m$, $v \colon m\to n$, one constructs new $\Sigma$-terms $t \poi u \colon k\to m$ and $t \tns v \colon k+m \to l+n$.
The set $\mathcal E$ of \emph{equations} contains pairs $(t,t')$ of $\Sigma$-terms, with the requirement that $t$ and $t'$ have the same arity and coarity. 

Just as ordinary (cartesian) algebraic theories have a categorical rendition as Lawvere categories~\cite{hyland2007category}, the corresponding linear notion (i.e., in the sense that variables can neither be copied, nor discarded) for SMTs is a PROP~\cite{MacLane1965} (\textbf{pro}\-duct and \textbf{p}ermutation category). A PROP is a symmetric strict monoidal category with objects the natural numbers, where $\tns$ on objects is addition. Morphisms between PROPs are identity-on-objects strict symmetric monoidal functors. PROPs and their morphisms form a category $\PROP$. Any SMT $(\Sigma,\mathcal E)$ freely generates a PROP by letting the arrows $n\to m$ be the $\Sigma$-terms $n\to m$ modulo the laws of symmetric monoidal categories and the (smallest congruence containing the) equations $t=t'$ for any $(t,t')\in \mathcal E$.

We write $\syntax{\Sigma}$ to denote the PROP freely generated by $(\Sigma,\varnothing)$.
There is a graphical representation of the arrows of $\syntax{\Sigma}$ as string diagrams, which we now sketch, referring to~\cite{Selinger2009} for the details. A $\Sigma$-term $n \to m$ is pictured as a box with $n$ ports on the left and $m$ ports on the right, which are ordered and referred to with top-down enumerations $1,\dots,n$ and $1,\dots,m$. Compositions via $\poi$ and $\tns$ are drawn respectively as horizontal and vertical juxtaposition, that means, $t \poi s$ is drawn $\cgr[height=.4cm]{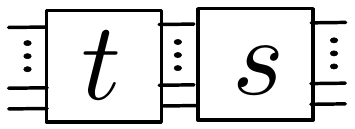}$ and $t \tns s$ is drawn $\cgr[height=.6cm]{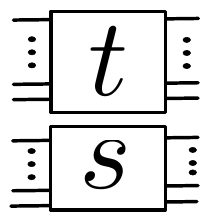}$.
    There are specific diagrams for the $\Sigma$-terms responsible for the symmetries: these are $\id_1 \colon 1 \to 1$, represented as $\Idnet$, the symmetry $\sigma_{1,1} \colon 1+1 \to 1+1$, represented as $\symNet$, and the unit object for $\tns$, that is, $\id_0 \colon 0 \to 0$, whose representation is an empty diagram $\ZeronetT$. Graphical representation for arbitrary identities $\id_n$ and symmetries $\sigma_{n,m}$ are generated using the pasting rules for $\poi$ and $\tns$.
%
It will be sometimes convenient to represent $\id_n$ with the shorthand diagram $\idncircuit$ and, similarly, $t \colon n \to m$ with $\lower5pt\hbox{$\includegraphics[height=15pt ]{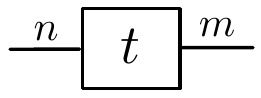}$}$.

\begin{example}\label{exm:props}~
\begin{enumerate}[label=(\alph*)]
\item \label{ex:cmonoids} A basic example is the theory $(\Sigma_{\textbf{Mon}},\mathcal{E}_{\textbf{Mon}})$ of \emph{commutative monoids}. The signature $\Sigma_{\textbf{Mon}}$ contains two generators:  \emph{multiplication} --- which we depict
 $\Bmult \colon 2 \to 1$ --- and \emph{unit}, represented as $\Bunit \colon 0 \to 1$.
 Equations in $\mathcal{E}_{\textbf{Mon}}$ are given in the leftmost column of Figure~\ref{fig:EQFrobenius}: they assert commutativity, associativity and unitality. 

\item \label{ex:frob} An SMT that plays a key role in our exposition is the theory $(\Sigma_{\textbf{Frob}},\mathcal{E}_{\textbf{Frob}})$ of \emph{special Frobenius monoids}.
The signature $\Sigma_{\textbf{Frob}}$ is as follows and $\mathcal{E}_{\textbf{Frob}}$ is depicted in Figure~\ref{fig:EQFrobenius}
\[
 \{ \Bmult \colon 2 \to 1,\, \Bunit \colon 0 \to 1,\, \Bcomult \colon 1\to 2,\, \Bcounit \colon 1\to 0\}
 \]
 \begin{figure}[t]
\[
\includegraphics[height=2cm]{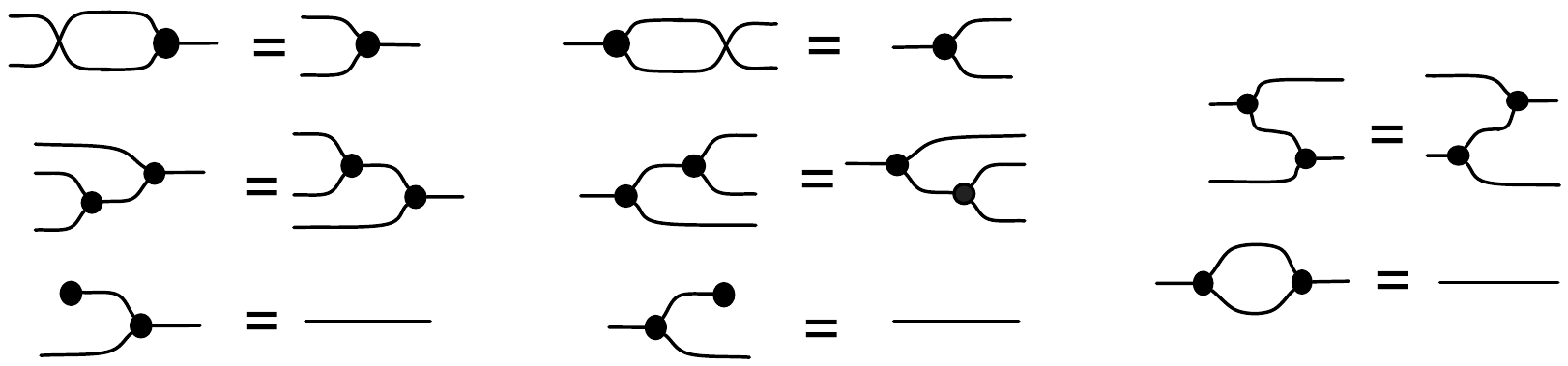}
\]
\caption{The equations $\mathcal{E}_{\textbf{Frob}}$ of special Frobenius monoids.}\label{fig:EQFrobenius}
\end{figure}
%
$E_F$ includes the theory of commutative monoids in the leftmost column.
Dually, the equations in the middle column assert that $\Bcomult$ and $\Bcounit$ form a cocommutative comonoid.
Finally, the two rightmost equations describe an interaction between these two structures.
We call $\frob$ the PROP freely generated by $(\Sigma_{\textbf{Frob}},\mathcal{E}_{\textbf{Frob}})$.

\item \label{ex:ncbialgebras} The theory of \emph{non-commutative bimonoids} has signature $\Sigma_{\textbf{NBiM}}$
\[ \{\Wmult \colon 2 \to 1,\Wunit \colon 0 \to 1,\Wcomult \colon 1\to 2,\Wcounit \colon 1\to 0\}\]
and the following equations $\mathcal{E}_{\textbf{NBiM}}$ \vspace{-1.5cm}
\begin{equation*} \label{eq:ncbialgebralaws}
\lower75pt\hbox{$\includegraphics[height=5cm]{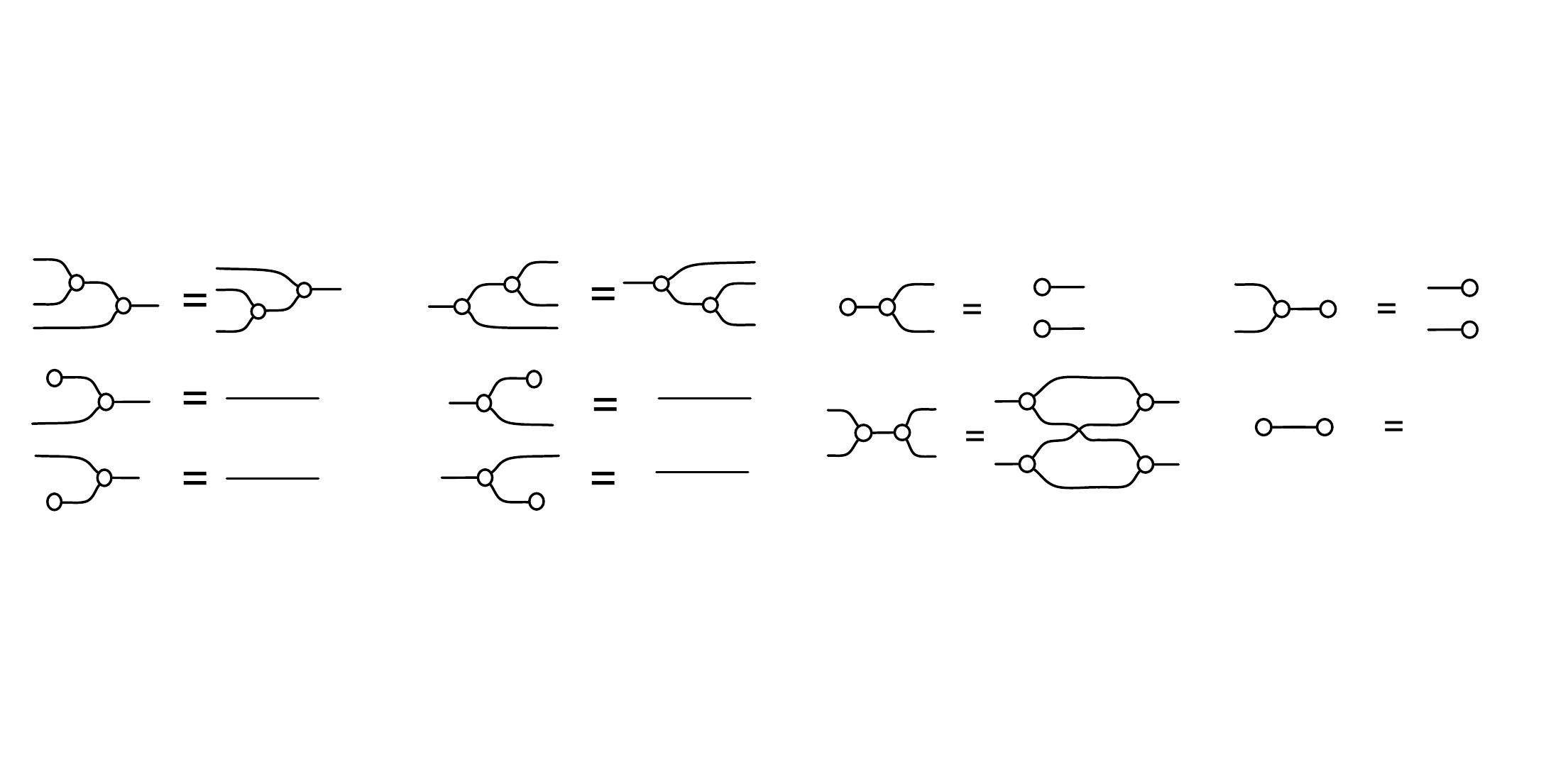}$}\vspace{-1.5cm}
\end{equation*}
We call $\NCB{}$ the PROP freely generated from $(\Sigma_{\textbf{NBiM}},\mathcal{E}_{\textbf{NBiM}})$. 
In~\cite{BGKSZ-parttwo}
we showed that the 
rewriting system that is obtained by orienting the equalities from left to right terminates. In this paper, we will show that is also confluent. For this, it will be convenient to use $\mu,\eta,\nu,\epsilon$, respectively, to refer to the generators in $\Sigma_{\textbf{NBiM}}$.
\end{enumerate}
\end{example}

\subsubsection{Rewriting in a PROP.}

\begin{notation}
Note that we write generic pairs and tuples using parentheses and reserve the notation 
$\langle l, r \rangle$ specifically for the case when the pair $(l, r)$ forms a rewriting rule.
\end{notation}

\begin{definition} \label{defn:rewprop}
A \emph{rewriting system} $\RS$ in a PROP $\catA$ consists of a set of \emph{rewriting rules}, i.e. pairs $\rrule{l}{r}$ of arrows $l, r \: i \to j$ in $\catA$ with the same arities and coarities. Given $a,b \: m \to n$ in $\catA$, $a$ rewrites into $b$ via $\RS$, written $a \Rew{\RS} b$, if they are decomposable as follows, for some rule $\rrule{l}{r} \in \RS$
\begin{equation}
\label{eq:rewpropmatch}
\tikzfig{lhs-ctx} \qquad\qquad
\tikzfig{rhs-ctx}
\end{equation}
In this situation, we say that $a$ contains a \emph{redex} for $\rrule{l}{r}$.
\end{definition}

The following well-known example illustrates the subtlety of critical pair analysis when rewriting in monoidal categories.
\begin{example}[From \cite{Lafont2003}, see also \cite{Mimram14}]\label{ex:mimram}
Fix $\Sigma = \{ \gamma \colon 2 \to 2 \}$ and consider the rewriting system on $\syntax{\Sigma}$ consisting of the following rule
\begin{equation}\label{eq:mimramYB-prop}
\cgr[height=1cm]{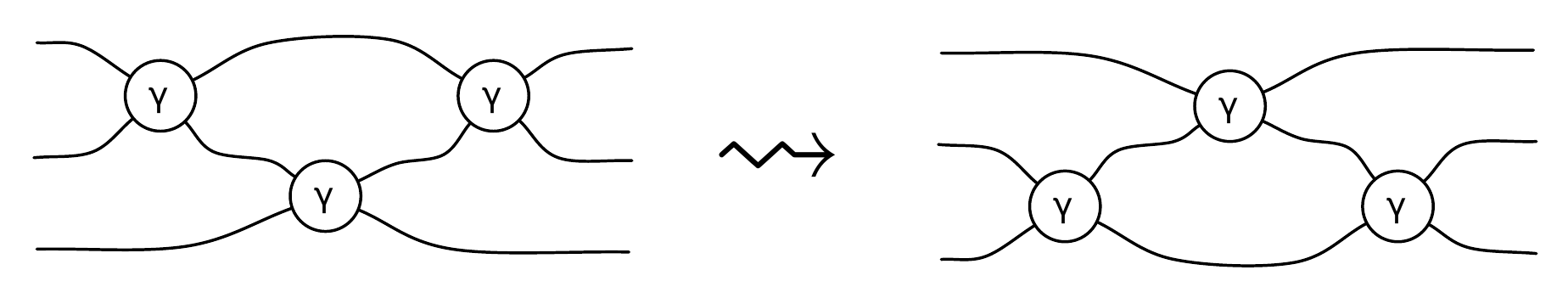}
\end{equation}
A critical pair analysis yields an infinite number of critical pairs.
Indeed, as shown in~\cite{Lafont2003,Mimram14}, any diagram $\phi:1+m \to 1+n$ that does not decompose non-trivially into
$\phi=\mu + \nu$ for some $\mu,\nu$ yields a critical pair

\[
\includegraphics[height=2.5cm]{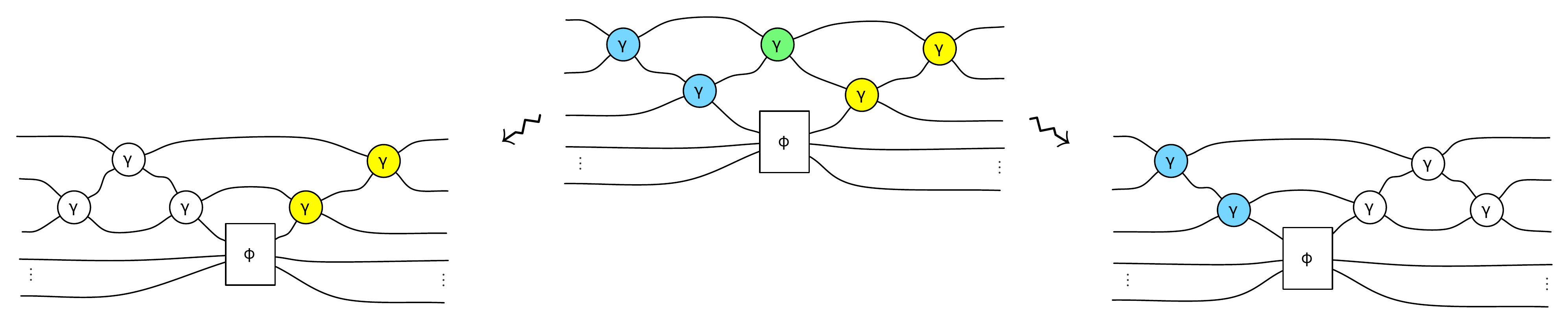}
\]
in which clearly there are two embeddings of the left-hand side\ of~\eqref{eq:mimramYB-prop} (depicted in blue and yellow, respectively,
in a colour version of the paper) with an overlap (in green).
\end{example}

In \cite{MimramFix} this problem was solved by adding duals to monoidal categories.
In Section~\ref{sec:confluenceFrob}, we will show another solution based on~\cite{BGKSZ-parttwo}:
a translation from PROPs to DPO rewriting with interfaces. The example below anticipates this encoding. It will be useful as a running example for the next section, which is devoted to critical pair analysis and confluence in DPO rewriting with interfaces.


\begin{example}\label{ex:mimramdpo}
Treating the rewriting system of Example~\ref{ex:mimram} as DPO system over $\Hyp{\Sigma}$
with $\gamma:2\to 2\in \Sigma$ yields the following DPO rule
\[ \scalebox{0.7}{\tikzfig{yang-baxter}} \] 
The formal correspondence between PROPs and DPO rewriting with interfaces will be explained in Section~\ref{sec:confluenceFrob}. For the time being, the reader can observe the similarities between the left-hand side of~\eqref{eq:mimramYB-prop} and the left-hand side of the above rule:  each $\gamma$ in ~\eqref{eq:mimramYB-prop} corresponds to an hyperedge (labeled with $\gamma$) in the hypergraph above; moreover, each wires in~\eqref{eq:mimramYB-prop} corresponds to a node above; finally, dangling wires in~\eqref{eq:mimramYB-prop} are exactly the numbered nodes. A similar correspondence holds for the right-hand side, while the interface of the DPO rule, depicted in light blue, just collects all the numbered nodes.  
 
Below, we give a DPO derivation with interface (in light blue), corresponding to a critical pair from the family identified in Example~\ref{ex:mimram}
\[
  \scalebox{0.4}{\tikzfig{big-critical-pair}}
\]
\end{example}




\section{Confluence for DPO rewriting with interfaces}\label{sec:confluence}

Differently from Definition~\ref{defn:DPOcritpair}, the interface of the pre-critical pair plays a crucial role when considering the setting of DPO with interfaces.

\begin{definition}[Pre-critical pair with interface]\label{pre-critical} Let $\mathcal{R}$ be a DPO system with rules $L_1 \tl{} K_1 \tr{} R_1$ and $L_2 \tl{} K_2 \tr{} R_2$. Consider two derivations with source $S\leftarrow J$
\begin{equation}\label{diag:criticalpairInterface}
\vcenter{
    \xymatrix@R=12pt{
    R_1 \ar[d] & \ar[l] K_1 \dlcorner  \ar[d] \ar[r] & L_1 \ar@{}[dr]|(.8){\text{\large $\ulcorner$}\qquad\quad} \ar[dr]^{f_1} && \ar[dl]_{f_2} \ar@{}[dl]|(.8){\qquad\quad\text{\large $\urcorner$}} L_2 & \ar[l] K_2 \ar[d] \drcorner\ar[r] & R_2 \ar[d] \\
    H_1  & \ar[l] C_1 \ar[rr] & & S  & & \ar[ll] C_2  \ar[r] & H_2 \\
    &&& \ar@/^/[ull] J \ar@{}[u]|{(\dagger)} \ar@/_/[urr] & & &
    }
}
\end{equation}
We say that $ (H_1\tl{}J) \LDPOstep{}  (S \tl{}J ) \DPOstep{} (H_2\tl{}J)$ is a \emph{pre-critical pair} if $[f_1,f_2] \colon L_1 + L_2 \to S$ is epi
and $(\dagger)$ is a pullback;
it is \emph{joinable} if there exists $W\tl{}J$ such that
$(H_1\tl{}J) \DPOstep{}^* (W\tl{}J) {\,\,{}^*\!\!\LDPOstep{} \,\,\,} (H_2\tl{}J)$.
\end{definition}

Definition~\ref{pre-critical} augments Definition~\ref{defn:DPOcritpair} with the interface $J$, given by ``intersecting'' $C_1$ and $C_2$. Intuitively, $J$ is the largest interface that allows both rewriting steps.

\begin{example}\label{exm:plumpinterface}
Consider the pair of rewriting steps \eqref{eq:criticalpairplumprevisited} in Example \ref{ex:againstPlump}. This is a pre-critical pair: the reader can check that the interface is indeed a pullback, constructed as in $(\dagger)$. Observe moreover that this pair is \emph{not} joinable.
Should we consider rewriting without interfaces, i.e., should $J$ be the empty graph, 
the pair 
$(H_1\tl{}J) \LDPOstep{}  (S \tl{}J ) \DPOstep{} (H_2\tl{}J)$
would not be a pre-critical pair anymore.
\end{example}

Plump's Example~\ref{ex:Plump} shows that in ordinary DPO, joinability of pre-critical pairs does not imply confluence. Our Example~\ref{exm:plumpinterface} shows that the argument does not work for DPO with interfaces. Indeed,  as we shall see in Theorem \ref{th:locconfl}, in the presence of interfaces  joinability suffices for confluence. To prove it, we assume the following property.

\begin{assumption}\label{assumption}
Our ambient category $\catC$ is assumed (1) to possess an epi-mono factorisation system, (2) to have binary coproducts, pushouts and pullbacks, and (3) to be adhesive with (4) all the pushouts stable under pullbacks.
\end{assumption}

The above conditions hold in any presheaf category. Additionally, they are closed under slice.
It follows that $\Hyp{\Sigma}$ is an example of such a category.

\medskip

We could now mimic the definition of parallel pairs given in Definition~\ref{defn:DPOparpair}.
However, the existence of pullbacks in property (2) allows for a simpler characterisation.

\begin{definition}[Parallel pair with interface]
\label{parallelpairs}
Let $\mathcal{R}$ be a DPO system with rules $L_1 \tl{} K_1 \tr{} R_1$ and $L_2 \tl{} K_2 \tr{} R_2$. Consider two derivations with common source $S\leftarrow J$
as in Definition~\ref{pre-critical} and the diagram below formed by pullbacks

\begin{equation*}
\xymatrix@R=10pt{
   & L_1 \ar[rd] & & L_2 \ar[ld] \\
   X \ar[rd] \ar[ru] & & S & & Y \ar[ld] \ar[lu] \\
   & C_2 \ar[ru] & & C_1 \ar[lu] \\
   & & J \ar[ru] \ar[lu] \\
      }
\end{equation*}

\noindent
We say that $ (H_1\tl{}J) \LDPOstep{} (S \tl{}J ) \DPOstep{} (H_2\tl{}J)$ is a \emph{parallel pair} if $X \to L_1$ and $Y\to L_2$ are iso and $C_1 \to S$ and $C_2\to S$ are mono.
\end{definition}

The definition is slightly stronger than the one for parallel independence for DPO rewriting without
interfaces shown in Definition~\ref{defn:DPOparpair}, even if it coincides whenever rules are
left-linear~\cite[Definition~5 and Proposition~1]{CorradiniDLRMCA18}.
However, this formulation is better suited for our notion of rewriting \emph{with} interfaces, and indeed,
it is easy to see that parallel pairs are joinable.

Before moving to the proof, though,
we need a technical lemma, the following simple pushout decomposition result
(aka ``mixed decomposition'' from \cite{BaldanGS11}). The proof uses only
stability of pushouts under pullbacks, which is encompassed by our Assumption~\ref{assumption}.

\begin{lemma}\label{lem:decomp}
Suppose that in the diagram below $m$ is mono, $(\dagger)+(\ddagger)$ is a pushout, and $(\ddagger)$
is a pullback. Then both $(\dagger)$ and $(\ddagger)$ are pushouts.
\begin{equation}\label{eq:pushpull}
\raise15pt\hbox{$
\xymatrix@R=10pt@C=25pt{
K \ar@{}[dr]|{(\dagger)}
\ar[d] \ar[r] & {C'}
\ar@{}[dr]|{(\ddagger)}
\ar[d] \ar[r] & C \ar[d] \\
L \ar[r] & {G'} \ar[r]_m & G
}$}
\end{equation}
\end{lemma}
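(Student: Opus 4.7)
The plan is to exploit stability of pushouts under pullback (Assumption~\ref{assumption}(4)) together with the monicity of $m$, and then finish with the pushout pasting lemma.

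First I would observe that because $m \colon G' \to G$ is mono, any morphism with codomain $G$ that factors through $m$ does so uniquely. In particular, the composites $L \to G' \to G$ and $K \to C' \to G' \to G$ provide (unique) factorizations of $L \to G$ and $K \to G$ through $m$. Moreover, for any object $X$ equipped with a map $X \to G$ that factors as $X \to G' \to G$, the pullback $X \times_G G'$ is canonically isomorphic to $X$ (via the factoring map), since monicity of $m$ forces uniqueness of the second projection.

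Next I would pull the outer pushout $(\dagger)+(\ddagger)$ (i.e.\ the square $L \leftarrow K \to C$, $L \to G \leftarrow C$) back along $m \colon G' \to G$. Computing the three relevant pullbacks:
\begin{itemize}
\item $C \times_G G' \cong C'$ because $(\ddagger)$ is a pullback by hypothesis;
\item $L \times_G G' \cong L$ by the observation above, since $L \to G$ factors through $m$;
\item $K \times_G G' \cong K$ for the same reason, using that $K \to G$ factors through $m$ via $C' \to G'$.
\end{itemize}
The resulting pulled-back square is precisely $(\dagger)$. By stability of pushouts under pullback, $(\dagger)$ is therefore a pushout.

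Finally, with $(\dagger)$ shown to be a pushout and the composite $(\dagger)+(\ddagger)$ a pushout by hypothesis, the standard pushout pasting lemma (valid in any category with pushouts) yields that $(\ddagger)$ is a pushout as well. The only delicate step is the computation that $L \times_G G' \cong L$ and $K \times_G G' \cong K$; this is where the hypothesis that $m$ is mono is essential, and I would make sure the induced comparison maps agree with the structure maps already present in the diagram so that the reconstructed square really equals $(\dagger)$ rather than merely being isomorphic to it. Everything else is formal manipulation using properties already listed in Assumption~\ref{assumption}.
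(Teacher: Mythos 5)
Your proposal is correct and follows exactly the route the paper indicates: the paper does not spell out the argument but notes that this is the ``mixed decomposition'' lemma and that its proof uses only stability of pushouts under pullbacks, which is precisely your strategy of pulling the outer pushout back along the mono $m$, identifying $C\times_G G'\cong C'$ via $(\ddagger)$ and $L\times_G G'\cong L$, $K\times_G G'\cong K$ via monicity of $m$, and then finishing with pushout decomposition. Your explicit check that the induced comparison maps coincide with the given ones is the right point to be careful about, and the argument goes through.
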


\begin{proposition}
\label{confparallelpairs}
Let $\mathcal{R}$ be a DPO system with rules $L_1 \tl{} K_1 \tr{} R_1$ and $L_2 \tl{} K_2 \tr{} R_2$.
Consider two derivations with common source $S\leftarrow J$ as in Definition~\ref{pre-critical}.
If $ (H_1\tl{}J) \LDPOstep{} (S \tl{}J ) \DPOstep{} (H_2\tl{}J)$ is a parallel pair then it is joinable.
\end{proposition}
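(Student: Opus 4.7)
The proof adapts the classical local Church-Rosser theorem for parallel independent DPO derivations to the setting with interfaces. Since $X \to L_1$ is an isomorphism, composing its inverse with $X \to C_2$ yields a morphism $m_1' \colon L_1 \to C_2$; symmetrically, the iso $Y \to L_2$ produces $m_2' \colon L_2 \to C_1$. The monos $C_1 \to S$ and $C_2 \to S$ force $m_1' \poi (C_2 \to S) = f_1$ and $m_2' \poi (C_1 \to S) = f_2$, so these lifts consistently extend the original matches.

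Let $D$ be the pullback of $C_1 \to S \leftarrow C_2$. Since $K_i \to C_i \to S$ and $K_i \to L_i \to S$ agree (both being part of the same pushout square defining $S$), we obtain canonical factorisations $K_1, K_2 \to D$. By Lemma~\ref{lem:decomp}, applied to the pushout $L_1 \leftarrow K_1 \to C_1$ defining $S$ together with the mono $C_2 \to S$, the square $L_1 \leftarrow K_1 \to D$ is itself a pushout with apex $C_2$; symmetrically, $L_2 \leftarrow K_2 \to D$ is a pushout with apex $C_1$. I then define $W$ as the pushout of $R_1 \leftarrow K_1 \to D_2$, where $D_2$ is the pushout of $R_2 \leftarrow K_2 \to D$; by pushout pasting this is equivalently the pushout of $R_2 \leftarrow K_2 \to D_1$ with $D_1$ the pushout of $R_1 \leftarrow K_1 \to D$. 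Invoking Lemma~\ref{lem:decomp} once more shows that applying rule $1$ to $H_2$ with match $L_1 \to C_2 \to H_2$ produces exactly $W$; symmetrically, applying rule $2$ to $H_1$ with match $L_2 \to C_1 \to H_1$ also produces $W$.

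For the interface, $J$ maps compatibly to $C_1$ and $C_2$ as part of the two rewriting steps with interface, so the universal property of the pullback $D$ yields a unique $J \to D$. This lifts through the pushouts producing $W$, giving a witness $(W \tl{} J)$ together with the desired rewriting steps $(H_1 \tl{} J) \DPOstep{} (W \tl{} J)$ via rule $2$ and $(H_2 \tl{} J) \DPOstep{} (W \tl{} J)$ via rule $1$. The main obstacle is the bookkeeping: carefully organising the cascade of pushout and pullback squares so that Lemma~\ref{lem:decomp} and the stability of pushouts under pullbacks from Assumption~\ref{assumption}(4) apply at each step. The core argument is standard adhesive-categorical reasoning; the only new ingredient is checking that the interface $J$ is carried through, which follows almost for free from the universal property of $D$.
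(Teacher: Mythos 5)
Your proof is correct and follows essentially the same route as the paper's: pull back $C_1 \to S \leftarrow C_2$, factor the $K_i$ through it, apply Lemma~\ref{lem:decomp} to split the original left-hand pushouts, construct the intermediate pushouts $D_1, D_2$ (the paper's $E_1,E_2$) and $W$, and reassemble the two one-step derivations while carrying the interface along via the universal property of the pullback. The only quibble is that your final reassembly step should invoke ordinary pushout pasting/decomposition rather than Lemma~\ref{lem:decomp} (whose pullback-plus-mono hypotheses are not what is available there), but the needed fact is standard and the argument goes through exactly as in the paper.
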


\begin{proof}
Let us assume that $X = L_1$ and $Y = L_2$, so that the arrows
$L_1 \to C_2$ and $L_2 \to C_1$ are obviously defined, and the conditions of
Definition~\ref{defn:DPOparpair} satisfied.
The existence of a parallel pair for graph without interfaces is a standard result,
see for example the survey~\cite{HabelMP01}.

Note also that $C_i \to S$ are both mono, even if $K_i \to L_i$ might not be so.
Hence, we can retrace the steps of the classical proof presented in~\cite[Section 9.7]{Ehrig79}.
So, consider the two diagrams below

\begin{equation}\label{eq:pardec1}
\raise15pt\hbox{$
\xymatrix@R=10pt@C=25pt{
 & {K_2}
\ar@{}[dr]|{(3)}
\ar[d] \ar[r] & L_2 \ar[d] \\
K_1 \ar@{}[dr]|{(2)}
\ar[d] \ar[r] & {J}
\ar@{}[dr]|{(1)}
\ar[d] \ar[r] & C_1 \ar[d] \\
L_1\ar[r] & {C_2} \ar[r] & S
}$}
\hspace{1cm}
\raise15pt\hbox{$
\xymatrix@R=10pt@C=25pt{
 & {K_2}
\ar@{}[dr]|{(6)}
\ar[d] \ar[r] & R_2 \ar[d] \\
K_1 \ar@{}[dr]|{(5)}
\ar[d] \ar[r] & {J}
\ar@{}[dr]|{(4)}
\ar[d] \ar[r] & E_2 \ar[d] \\
R_1\ar[r] & {E_1} \ar[r] & W
}$}
\end{equation}

The square (1) is a pullback, hence $K_i \to J$ are obtained by the universal property.
By construction the squares (2)+(1) and (3)+(1) are pushouts, hence
by the mixed decomposition lemma the squares (1), (2), and (3) are also pushouts.
Squares (4), (5), and (6) are now obtained by pushout.

Consider then the diagrams below

\begin{equation}\label{eq:pardec2}
\raise15pt\hbox{$
\xymatrix@R=10pt@C=25pt{
 {K_1}
\ar@{}[dr]|{(5)}
\ar[d] \ar[r] & R_1 \ar[d] \\
{J}
\ar@{}[dr]|{(7)}
\ar[d] \ar[r] & E_1 \ar[d] \\
{C_1} \ar[r] & H_1
}$}
\hspace{1cm}
\raise15pt\hbox{$
\xymatrix@R=10pt@C=25pt{
 {K_2}
\ar@{}[dr]|{(6)}
\ar[d] \ar[r] & R_2 \ar[d] \\
{J}
\ar@{}[dr]|{(8)}
\ar[d] \ar[r] & E_2 \ar[d] \\
{C_2} \ar[r] & H_2
}$}
\hspace{1cm}
\raise15pt\hbox{$
\xymatrix@R=10pt@C=25pt{
L_2  \ar[d]& {K_2} \ar@{}[dl]|{(3)}
\ar@{}[dr]|{(6)}
\ar[l] \ar[d] \ar[r] & R_2 \ar[d] \\
C_1 \ar@{}[dr]|{(7)}
\ar[d]  & {J} \ar[l]
\ar@{}[dr]|{(4)}
\ar[d] \ar[r] & E_2 \ar[d] \\
H_1 & {E_1} \ar[l] \ar[r] & W
}$}
\end{equation}

Squares (5), (6), (5)+(7), and (6)+(8) are pushouts, hence $E_i \to H_i$ are obtained by the universal property and
also squares (7) and (8) are pushouts.

We then have all in place to obtain two derivations
$(H_1\tl{}J) \DPOstep{} (W\tl{}J) {\,\,{}\!\!\LDPOstep{} \,\,\,} (H_2\tl{}J)$: The
derivation $(H_1\tl{}J) \DPOstep{} (W\tl{}J)$ is depicted on the right
in the picture above.
\end{proof}

As for the rewriting without interfaces, in our results we stick to pre-critical pairs, the distinction being immaterial, even if
in the examples we usually show just the critical ones.

The following construction mimics \cite{EhrigHPP04}. It allows us to
restrict --or ``clip''-- a DPO rewriting step with match $f \colon L\to G$ to any subobject
of $G'$ through which $f$ factors.
\begin{construction}[One-step clipping] \label{constr:clipping}
Suppose we have a DPO rewriting step as below left, 
together with factorisation $L\tr{} G'\xrightarrow{m} G$ where $m$ is mono. As shown below right, we get $C'$ by pulling back $G'\tr{}G \tl{} C $ and $K \to C'$ by the universal property
$$\xymatrix@R=5pt@C=10pt{
& L \ar[dd] \ar[ld] & & K \ar[rr] \ar[ll]  \ar[dd] & & R \ar[dd] \ \\
G' \ar[rd]_m & &   & &  \\
& G & & C \ar[rr] \ar[ll] & & H \\
}
\qquad \qquad
\xymatrix@R=5pt@C=10pt{
& L \ar[dd] \ar[ld] & & K \ar[rr] \ar[ll] \ar@{.>}[ld] \ar[dd] & & R \ar[dd]  \\
G' \ar[rd]_m & & C' \ar[ll] \ar[rd] & & \\
& G & & C \ar[rr] \ar[ll] & & H \\
}$$
By Lemma~\ref{lem:decomp}
the two leftmost squares are both pushouts.
Next, $H'$ is the pushout of $C' \tl{}  K \tr{} R$ and $H'\tr{} H$ follows from its universal property
$$\xymatrix@R=5pt@C=10pt{
& L \ar[dd] \ar[ld] & & K \ar[rr] \ar[ll] \ar@{.>}[ld] \ar[dd] & & R \ar[dd]  \ar[ld]\\
G' \ar[rd]_m & & C' \ar[ll] \ar [rr] \ar[rd] & & H' \ar@{.>}[rd]\\
& G & & C \ar[rr] \ar[ll] & & H \\
}$$
By pushout pasting also the bottom-rightmost square is a pushout.
Finally, observe that $C'\to C$ is mono since it is the pullback of $m$
along $C\to G$. This means that each of the two squares in diagram below is, as well as being a pushout, also a pullback, since each is a pushout along a mono in an adhesive category
\[
\xymatrix@R=15pt@C=25pt{
G' \ar[d]_m & C' \ar[d] \ar[l] \ar[r] & H' \ar[d] \\
G & C \ar[l] \ar[r] & H
}
\]
 \end{construction}

\begin{example}\label{ex:clipping}
We use the clipping construction to restrict
 pairs of derivations with common source into pre-critical pairs.
For example, consider the
two DPO rewriting rules illustrated in Example~\ref{ex:mimramdpo}.
We can factorise the two matches through their common image, and clip, as illustrated below
\[
  \scalebox{0.45}{\tikzfig{big-critical-pair-clipping}}
\]
\end{example}
Note that the clipped derivations result with the two matches being jointly epi,
which is one of the properties of a pre-critical pair. This generalises:
given two rewriting steps with common source
 $(G_{1,1} \tl{} I)  \LDPOstep{} (G_0 \tl{} I) \DPOstep{} (G_{1,2} \tl{} I)$,
the next construction produces a pre-critical pair
$(G_{1,1}'\tl{} J') \LDPOstep{} ({G_0}' \tl{} J') \DPOstep{} (G_{1,2}'\tl{} J')$
using clipping.
\begin{construction}[Pre-critical pair extraction]\label{constr:extraction}
Start with two rewrites from $G_0\tl{} I$
$$\xymatrix@R=12pt{
    R_{1,1} \ar[d] & \ar[l] K_{1,1} \dlcorner  \ar[d] \ar[r] & L_{1,1} \ar@{}[dr]|(.8){\text{\large $\ulcorner$}\qquad\quad} \ar[dr]^{f_1} && \ar[dl]_{f_2} \ar@{}[dl]|(.8){\qquad\quad\text{\large $\urcorner$}} L_{1,2} & \ar[l] K_{1,2} \ar[d] \drcorner\ar[r] & R_{1,2} \ar[d] \\
    G_{1,1}  & \ar[l] C_{1,1} \ar[rr] & & G_0  & & \ar[ll] C_{1,2}  \ar[r] & G_{1,2} \\
    &&& \ar@/^/[ull] I \ar[u] \ar@/_/[urr] & & &
    }$$
and factorise $[f_1,f_2]\colon L_{1,1}+L_{1,2} \to G_0$ to obtain
  $$ \vcenter{
    \xymatrix@R=12pt@C=5pt{
    L_{1,1} \ar[rd] \ar@(d,ul)[rdd]_{f_1} & & L_{1,2} \ar[ld] \ar@(d,ur)[ldd]^{f_2} \\
    & G_0' \ar@{_{(}->}[d] \\
    & G_0.
    }
}$$

Next apply Construction~\ref{constr:clipping} twice, obtaining

$$ \scalebox{.8}{    \xymatrix@R=10pt@C=10pt{
  &  R_{1,1} \ar[dd]|\hole \ar[dl] & &  \ar[ll] K_{1,1} \ar[dl]  \ar[dd]|\hole \ar[rr] & &  L_{1,1} \ar[drr] \ar@/_1pc/[ddrr]|(.36)\hole && && \ar@/^1pc/[ddll]|(.36)\hole  L_{1,2} \ar[dll] & &  \ar[ll] K_{1,2}  \ar[dr] \ar[rr] \ar[dd]|\hole && R_{1,2} \ar[dr] \ar[dd]|\hole \\
  H_{1,1}' \ar[dr] && C_{1,1}' \ar[dr] \ar[ll]  \ar[rrrrr] & & & & & G_0' \ar[d] & & &&& C_{1,2}' \ar[dl]  \ar[lllll] \ar[rr] && H_{1,2}'   \ar[dl]\\
  &  H_{1,1}  & & \ar[ll] C_{1,1} \ar[rrrr] && & & G_0 && & & \ar[llll] C_{1,2}  \ar[rr] & & H_{1,2}. \\
  & &&& &&& \ar@/^/[ullll] I \ar[u] \ar@/_/[urrrr] & & &
    }
}$$

Finally, pull back $C_{1,1}' \tr{} G_0' \tl{} C_{1,2}'$ to obtain the pre-critical pair

 $$ \scalebox{.8}{
    \xymatrix@R=8pt@C=10pt{
  &  R_{1,1}  \ar[dl] & &  \ar[ll] K_{1,1} \ar[dl]  \ar[rr] & &  L_{1,1} \ar[drr]  && &&   L_{1,2} \ar[dll] & &  \ar[ll] K_{1,2}  \ar[dr] \ar[rr]  && R_{1,2} \ar[dr] \\
  G_{1,1}'  && C_{1,1}'  \ar[ll]  \ar[rrrrr] & & & & & G_0'  & & &&& C_{1,2}'  \ar[lllll] \ar[rr] && G_{1,2}'.   \\
  &   & &  && & & J'  \ar@/^5pt/[ulllll] \ar@/_5pt/[urrrrr] && & &    }
}$$
\end{construction}

\begin{example}\label{ex:extraction}
We can now complete the pre-critical pair extraction process, commenced in Example~\ref{ex:clipping},
following the steps of Construction~\ref{constr:extraction}
\[
  \scalebox{0.4}{\tikzfig{big-critical-pair-clipped}}
\]
\end{example}

Construction~\ref{constr:extraction} means that we are able to extract a pre-critical pair from two rewriting steps with common source. If the pre-critical pair is joinable, we would then like to embed the joining derivations to the original context.

The following is a useful step in this direction. Assuming a mono $G_0'\to G_0$, it allows us
to extend a derivation from $G_0' \tl{} J'$ to a corresponding one from $G_0 \tl{} J$, if
we can obtain $G_0$ by glueing $G_0'$ and some context $C_0$ along $J'$. Stated more formally, we want the following diagram commute and $(\dagger)$ be a pushout
\begin{equation}\label{diag:embeddingAssumption}
\begin{aligned}
\xymatrix@R=10pt{
& J' \ar[r] \ar[d] \ar@{}[dr]|(.5){(\dagger)} & G_0' \ar[d] \\
J  \ar[r]  \ar@/_10pt/[rr]
& C_0  \ar[r] & G_0
}
\end{aligned}
\end{equation}

 \begin{construction}[Embedding]\label{constr:embedding}
The extended derivation is constructed as in the commuting diagram below, where each square is a pushout diagram

\begin{equation*}\label{diag:embedding}
\begin{aligned}
      \scalebox{.7}{
     \xymatrix@R=20pt@C=7pt{
       & & & L_1 \ar[ld] & K_1 \ar[d]\ar[l] \ar[r] & R_1\ar[rd] & & L_2 \ar[ld] & K_2 \ar[d]\ar[l] \ar[r] & R_2 \ar[rd] & & \dots & & L_n \ar[ld] & K_n \ar[d] \ar[l] \ar[r] & R_n \ar[rd] \\
     J' 
        \ar@{}[rrd]|(0.5){(\dagger)} \ar[rr] \ar[d]
       & & G_0' \ar[d] & & C_1' \ar@{}[dll]|(0.50){(\gamma_1)} \ar@{}[drr]|(0.50){(\delta_1)}  \ar[d] \ar[ll] \ar[rr] & & G_1' \ar[d]
       & & C_2'  \ar@{}[dll]|(0.50){(\gamma_2)} \ar@{}[drr]|(0.50){(\delta_2)}  \ar[d] \ar[ll] \ar[rr] && G_2' \ar[d] & \dots &  G_n' \ar[d]
       & & C_n' \ar@{}[dll]|(0.50){(\gamma_n)} \ar@{}[drr]|(0.50){(\delta_n)}  \ar[d] \ar[ll] \ar[rr] && G_n' \ar[d] \\
     C_0 \ar[rr] 
       & & G_0 & & C_1 \ar[ll] \ar[rr] & & G_1 & & C_2 \ar[ll] \ar[rr] & & G_2 & \dots &  G_n & & C_n \ar[ll] \ar[rr] & & G_n \\
       & & & & & & & & & & &  \\
       & & & & & & & & & &
       & J' 
           \ar[d] \ar[uuulll]|(0.3){} \ar@(dl,d) [uuulllllll]| (0.3) {} \ar[uuurrr]|(0.3){} \ar@{}[uulll]|(0.5){(\epsilon_2)} \ar@{}[uurrr]|(0.5){(\epsilon_n)} &
       & & & & & \\
       & & & & & & & & & &
       & C_0 \ar[uuulll]|(0.3){} \ar@(dl,d)[uuulllllll]|(0.56){} \ar[uuurrr]|(0.3){}  \ar@{}[uuullllllll]|(0.5){(\epsilon_1)} & & J \ar[ll] & & & &
      }
    }
\end{aligned}
\end{equation*}

We shall now explain each of the components. The upper row of pushouts together with morphisms $J'\tr{}C_i'$ witnesses the original derivation $(G_0'\tl{}J') \DPOstep{}^* (G'_n \tl{}J')$.

For $i=1\dots n$, $(\epsilon_i)$ is formed as the pushout of
$C_0 \tl{} J'  \tr{ } C'_i$ and $(\delta_i)$ as the pushout of $ C_i \tl{} C_i'  \tr{ }G'_i$, as shown in the diagram below 

\begin{equation} \label{diag:embeddingEi-Di}
\begin{aligned}
     \xymatrix@R=15pt@C=20pt{
     J' \ar[r] \ar[d] &C_i' \ar[d] \ar[r]& G_i'  \ar[d] \\
     C_0 \ar[r]  \ar@{}[ru]|{(\epsilon_i)} & C_i \ar[r] \ar@{}[ru]|{(\delta_i)}  & G_i
     }
\end{aligned}
\end{equation}

It remains to construct pushouts $(\gamma_i)$, which is done in the following diagram
\begin{equation}
\label{diag:embeddingCi}
\begin{aligned}
\scalebox{1}{
     \xymatrix@R=4pt@C=20pt{
      J'   \ar[rr] \ar[dr] \ar[ddd] && G_{i-1}'  \ar[ddd]\\
       &C_i' \ar[d] \ar[ur]& \\
       & C_i \ar@{.>}[dr]& \\
      C_0 \ar[ur] \ar[rr] \ar@{}[ruuu]|{(\epsilon_i)} && G_{i-1}  \ar@{}[luuu]|{(\gamma_i)}
     }}
\end{aligned}
\end{equation}

The exterior square in~\eqref{diag:embeddingCi} is a pushout: for $i=1$ it is $(\dagger)$ from~\eqref{diag:embeddingAssumption}, while for $i\geq 2$ it is obtained by composing $(\epsilon_{i-1})$ and $(\delta_{i-1})$ from~\eqref{diag:embeddingEi-Di}. The universal property of $(\epsilon_i)$ yields the morphism $C_i \to G_{i-1}$. By pushout decomposition, the diagram $(\gamma_i)$ is a pushout.

 \end{construction}

\begin{example}
In Example~\ref{ex:extraction} we saw two derivations from
\[
  \scalebox{0.75}{\tikzfig{big-critical-pair-graph}}
\]
These can be extended to
\[
  \scalebox{0.75}{\tikzfig{big-critical-pair-ext}}
\]
following the steps in Construction~\ref{constr:embedding} because the
square below is a pushout
\[
  \scalebox{0.5}{\tikzfig{big-critical-pair-ext-pushout}}
\]
\end{example}

Constructions~\ref{constr:extraction} and~\ref{constr:embedding} are the main ingredients for
showing the Knuth-Bendix property for DPOI. Before we prove it, we need
one technical lemma from the theory of adhesive categories.

\begin{lemma}\label{lemmacube}
Consider the cube below, where the top and bottom faces are pullbacks,
the rear faces are both pullbacks and pushouts, and $m$ is mono. Then, the front faces are also pushouts.
\end{lemma}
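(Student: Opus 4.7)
The plan is to derive this from the van Kampen (VK) property of pushouts along monomorphisms, which holds in any adhesive category. Recall that a VK square satisfies a descent condition: given a cube having it as one face, there is a natural bijection between pullback structures on the four adjacent faces and pushout structures on the opposite face.

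First, I would locate the rear face that contains $m$ as an edge. Since this face is a pushout in the adhesive category $\catC$ and $m$ is mono, adhesivity guarantees that it is a van Kampen square. The second rear face, being itself both a pullback and a pushout, together with the hypothesis that the top and bottom of the cube are pullbacks, will supply exactly the pullback data on the ``vertical'' faces required to invoke VK descent over the first rear square.

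Applying VK then forces the front face opposite the VK rear pushout to be a pushout. A symmetric argument, with the roles of the two rear pushouts interchanged, yields the remaining front face as a pushout. Alternatively, once one front face has been shown to be a pushout — note that the corresponding image of $m$ is still mono because pullbacks preserve monos — one may restart the argument inside a sub-configuration of the cube, which is a common technique in this kind of proof.

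The main obstacle will be the combinatorial bookkeeping: matching the orientation of the given cube to the standard form required by VK, and verifying that each of the four vertical squares involved is indeed a pullback (using pullback-pasting together with the top/bottom pullback hypotheses). The hypothesis that $m$ is mono is indispensable, since without it the rear pushouts need not be VK and the conclusion fails in general — this is the same essential role that monos play in Lemma~\ref{lem:decomp} above.
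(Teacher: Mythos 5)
The paper states this lemma without proof, offering it only as ``one technical lemma from the theory of adhesive categories,'' so there is no in-paper argument to compare against; your van Kampen strategy is the standard route and it does go through. One imprecision first: the rear face on $C_{1,2}'\to G_0'$ over $C_{1,2}\to G_0$ is the pushout of the span $G_0'\leftarrow C_{1,2}'\to C_{1,2}$, and $m$ is one of its \emph{coprojections}, not a leg of that span, so ``$m$ is mono'' does not by itself exhibit it as a pushout along a mono. You recover this from the hypothesis that the same face is also a pullback: $C_{1,2}'\to C_{1,2}$ is then a pullback of $m$, hence mono, and the face is a pushout along that mono and therefore VK by adhesivity.

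The genuine gap is in the pullback data you feed into VK. Standing the cube on that rear face, the four ``vertical'' faces are the top, the bottom, the \emph{other rear} face, and one of the \emph{front} faces (namely $J'\to C_{1,2}'$ over $J\to C_{1,2}$). The first three are pullbacks by hypothesis; the fourth is not, since nothing is assumed about the front faces, so your claim that the remaining rear face plus the top/bottom ``supply exactly the pullback data'' is not correct as stated. You must first prove that both front faces are pullbacks: the two rear pullbacks together with the top and bottom pullbacks give $J'\cong C_{1,1}'\times_{G_0'}C_{1,2}'\cong (C_{1,1}\times_{G_0}G_0')\times_{G_0'}(C_{1,2}\times_{G_0}G_0')\cong J\times_{G_0}G_0'$, so the composite square $J'\to G_0'$ over $J\to G_0$ is a pullback, and pullback cancellation against each rear face then shows each front face is a pullback. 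With that established, VK over one rear face (back faces: top and one front face, both now known pullbacks; front faces: bottom and the other rear face, pullbacks by hypothesis) yields that the opposite front face is a pushout, and the symmetric application gives the other. This step is the mathematical crux of the lemma, not bookkeeping, and a complete write-up must include it.
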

$$\xymatrix@R=3pt@C=5pt{
& G_0' \ar[dd]^(.3)m|\hole & & C_{1,2}' \ar[ll] \ar[dd] \\
C_{1,1}' \ar[ur] \ar[dd]_n && J' \ar[dd] \ar[ur] \ar[ll]\\
& G_0 & & C_{1,2} \ar[ll]|\hole \\
C_{1,1} \ar[ur] && J \ar[ur] \ar[ll]
}
$$

 \begin{theorem}[Local confluence] \label{th:locconfl} For a DPO system with interfaces, if all pre-critical pairs are joinable then rewriting is locally confluent: given
 $(G_{1,1} \tl{} I) \LDPOstep{} (G_0 \tl{} I) \DPOstep{} (G_{1,2} \tl{} I)$, there exists $W \tl{} I$ such that
 \[
\xymatrix@R=5pt@C=10pt{
&  \ar@{~>}[dl]  (G_0 \tl{} I) \ar@{~>}[dr] & \\
(G_{1,1} \tl{} I) \ar@{~>}[dr]^>>>{*} && (G_{1,2} \tl{} I) \ar@{~>}[dl]_>>>{*} \\
&(W \tl{} I).&
}
\]
 \end{theorem}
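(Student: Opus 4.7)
The plan is to reduce an arbitrary pair of rewrites with common source to a pre-critical pair, then lift the joining derivations back to the original context. Concretely, given $(G_{1,1} \tl{} I) \LDPOstep{} (G_0 \tl{} I) \DPOstep{} (G_{1,2} \tl{} I)$, first apply Construction~\ref{constr:extraction}: factorise the joint match $L_{1,1} + L_{1,2} \to G_0$ through its image $G_0' \hookrightarrow G_0$, clip both rewriting steps to $G_0'$ using Construction~\ref{constr:clipping}, and form the pullback $J'$ of the clipped context legs $C_{1,1}' \to G_0' \gets C_{1,2}'$. This produces a pre-critical pair $(G_{1,1}' \tl{} J') \LDPOstep{} (G_0' \tl{} J') \DPOstep{} (G_{1,2}' \tl{} J')$, which by hypothesis is joinable at some $W' \tl{} J'$.

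The central step is to identify a context $C_0$ with $I \to C_0$ such that the square
\[
\xymatrix@R=10pt@C=15pt{
J' \ar[r] \ar[d] & G_0' \ar[d] \\
C_0 \ar[r] & G_0
}
\]
is a pushout, which is exactly the precondition $(\dagger)$ of Construction~\ref{constr:embedding}. To build this, let $J$ be the pullback of $C_{1,1} \to G_0 \gets C_{1,2}$, so that the universal property of pullback induces $I \to J$ (from the original interface) and $J' \to J$ (from the top pullback defining $J'$). Define $C_0$ as the pushout of $C_{1,1} \gets J \to C_{1,2}$, which then carries canonical maps $I \to C_0$, $J' \to C_0$, and $C_0 \to G_0$. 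Applying the cube lemma (Lemma~\ref{lemmacube}) to the cube whose top face is the pullback defining $J'$, whose bottom face is the pullback defining $J$, and whose rear faces are the pushout-pullbacks produced by clipping, yields that the front faces are pushouts; a standard pushout-pasting argument then upgrades this to the required pushout $G_0 = G_0' +_{J'} C_0$.

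Once $(\dagger)$ is established, apply Construction~\ref{constr:embedding} twice, once to each joining derivation emanating from the pre-critical pair, using the same data $C_0$ and $I$. Each application lifts a derivation $(G_{1,j}' \tl{} J') \DPOstep{}^* (W' \tl{} J')$ to $(G_{1,j} \tl{} I) \DPOstep{}^* (W \tl{} I)$ where $W$ is the pushout $W' +_{J'} C_0$; crucially, both embedded derivations terminate at the same $W$, since this pushout depends only on $W'$, $J'$, and $C_0$. This provides the common object witnessing local confluence.

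The main technical obstacle is verifying $(\dagger)$, i.e., showing that the constructed $C_0$ really does glue with $G_0'$ along $J'$ to recover $G_0$. This depends essentially on the cube lemma together with the careful interplay between pullbacks, pushouts, and monomorphisms in the ambient adhesive category guaranteed by Assumption~\ref{assumption}. Once that pushout is in hand, the embedding construction turns the single joining inside the clipped pre-critical context into the desired joining in the original setting.
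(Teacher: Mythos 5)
Your overall architecture is exactly the paper's: extract a pre-critical pair via Constructions~\ref{constr:clipping} and~\ref{constr:extraction}, use the cube lemma (Lemma~\ref{lemmacube}) to establish the pushout precondition $(\dagger)$ of Construction~\ref{constr:embedding}, embed the joining derivations, and identify the two resulting reducts because both are pushouts of $C_0 \tl{} J' \tr{} W'$. The error is in your choice of $C_0$. You define $C_0$ as the pushout $C_{1,1} +_J C_{1,2}$ (which, being a pushout over a pullback of monos in an adhesive category, is the union $C_{1,1}\cup C_{1,2}$ inside $G_0$) and claim that pasting upgrades the front faces of the cube to the pushout $G_0 \cong G_0' +_{J'} C_0$. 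No pasting yields that square. What the cube lemma actually gives is that the front face $(J' \to C_{1,1}',\ J \to C_{1,1})$ is a pushout; composing it with the rear clipping square $(C_{1,1}' \to G_0',\ C_{1,1} \to G_0)$ gives that $(J' \to G_0',\ J \to G_0)$ is a pushout, i.e.\ $G_0 \cong G_0' +_{J'} J$. The correct choice is therefore $C_0 = J$, the pullback object itself, which is what the paper takes.

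With your $C_0$ the square $(\dagger)$ genuinely fails to be a pushout. Take $G_0$ with nodes $a,b,c$ and edges $e \colon a \to b$, $f \colon b \to c$, and two left-linear rules whose left-hand sides are $\{a,b,e\}$ and $\{b,c,f\}$ and which delete $e$ and $f$ respectively; the matches overlap in the node $b$. Then $G_0' = G_0$, $C_{1,1} = \{a,b,c,f\}$, $C_{1,2} = \{a,b,c,e\}$, $J = J' = \{a,b,c\}$, so your $C_0 = C_{1,1}\cup C_{1,2} = G_0$ and $G_0' +_{J'} C_0$ is $G_0$ glued to a second copy of itself along its three nodes, duplicating both edges --- not $G_0$. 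Since $(\dagger)$ is precisely the hypothesis \eqref{diag:embeddingAssumption} needed to run Construction~\ref{constr:embedding}, the embedding step cannot proceed as you wrote it. Everything else in your argument --- the extraction, the use of the cube lemma, the observation that both embedded derivations terminate in the same $W$ because that object is determined as a pushout of $C_0 \tl{} J' \tr{} W'$, and the final precomposition with $I \to J$ to restore the original interface --- is sound and coincides with the paper's proof once $C_0$ is replaced by $J$.
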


\begin{proof}
 Following the steps of Construction~\ref{constr:extraction},
 we obtain a pre-critical pair
 \[
 (G_{1,1}'\leftarrow J') \LDPOstep{} (G_0'\leftarrow J')\DPOstep{} (G_{1,2}' \leftarrow J')
 \]
Because pre-critical pairs are by assumption joinable we have derivations $$(G_{1,1}' \tl{} J') \DPOstep{}^* (W' \tl{\beta'} J' ) \DPOtlrewr{} (G_{1,2}' \tl{} J') \text{.}$$
Suppose that the leftmost derivation requires $n$ steps and the rightmost $m$. To keep the notation consistent with Construction~\ref{constr:embedding}, we fix notation $G_{n,1}' \df W' {=:} \ G_{m,2}'$.

Now let $J$ be the pullback object of $C_{1,1}\tr{} G_0 \tl{}C_{1,2}$. By the universal property, we obtain maps $\iota\colon I\to J$ and $\xi\colon J'\tr{} J$
\begin{equation}\label{diag:cubemainproof}
\begin{aligned}
\xymatrix@R=5pt@C=10pt{
& G_0' \ar[dd]|\hole & & C_{1,2}' \ar[ll] \ar[dd] &&& \\
C_{1,1}' \ar[ur] \ar[dd] && J' \ar@{.>}[dd]^(0.3)\xi \ar[ur] \ar[ll] &&& \\
& G_0 & & C_{1,2} \ar[ll]|(.53)\hole &&& \\
C_{1,1} \ar[ur] && J \ar[ur] \ar[ll] &&& I \ar@{.>}[lll]_\iota \ar@/_1pc/[ull] \ar@/^1pc/[lllll]
}
\end{aligned}
\end{equation}

\smallskip
Recall by Construction~\ref{constr:clipping} that the rear faces of \eqref{diag:cubemainproof} are both pullbacks and pushouts. Then, by Lemma \ref{lemmacube}, the square below is a pushout
$$
\xymatrix@R=10pt@C=10pt{J' \ar[d] \ar[r]  \ar@{}[dr]|(0.5){(\dagger)} & G_0' \ar[d] \\
J \ar[r] & G_0
}
$$

We are now in position to apply Construction~\ref{constr:embedding} by taking $C_0=J$, which yields
$$(G_0 \tl{} J) \DPOstep{} (G_{1,1} \tl{} J) \DPOstep{}^* (G_{n,1} \tl{\beta_1} J )$$
 extending $(G_0' \tl{} J') \DPOstep{} (G_{1,1}' \tl{} J') \DPOstep{}^* (G_{n,1}' \tl{\beta'} J' )$
   and
  $$(G_0 \tl{} J) \DPOstep{} (G_{1,2} \tl{} J) \DPOstep{}^* (G_{m,2} \tl{\beta_2} J )$$
   extending $(G_0' \tl{} J') \DPOstep{} (G_{1,1}' \tl{} J') \DPOstep{}^* (G_{m,2}' \tl{\beta'} J' )$.

 The next step is to prove that $(G_{n,1} \tl{\beta_1} J ) \cong (G_{m,2} \tl{\beta_2} J ) $. To see this, it is enough to observe that both the following  squares are pushouts of $J \tl{\xi} J' \tr{\beta'} W'=G_{n,1}'=G_{m,2}'$
  $$
 \xymatrix@=10pt{
 J' \ar[d]_\xi \ar[r]^{\beta'} & G_{n,1}' \ar[d] \\
 J \ar[r]_{\beta_1} & G_{n,1}
 }
 \qquad
 \xymatrix@=10pt{
 J' \ar[d]_\xi \ar[r]^{\beta'} & G_{m,2}' \ar[d] \\
 J \ar[r]_{\beta_2} & G_{m,2}
 }
$$
Indeed, the leftmost is a pushout by composition of squares $(\epsilon_n)$ and $(\delta_n)$ in the embedding construction and the rightmost by composition of $(\epsilon_m)$ and $(\delta_m)$.

 To complete the proof, it remains to show that, in the above derivations, interface $J$ extends to interface $I$ as in the statement of the theorem. But this trivially holds by precomposing with
 $\iota\colon I\to J$.
\end{proof}

We are now ready to give our decidability result. To formulate it at the level of generality of adhesive categories we need some additional definitions.

A \emph{quotient}
of an object $X$ is an equivalence class of epis with domain $X$. Two epis $e_1: X\ra X_1$,
$e_2\colon X\ra X_2$ are equivalent when there exists an isomorphism $\varphi\colon X_1\to X_2$ such that
$e_1 \poi \varphi  = e_2$.
Note that quotient is the dual of \emph{subobject}.

A DPO rewriting system with interfaces
is \emph{computable} when
\begin{itemize}
\item pullbacks are computable,
\item for every pair of rules $L_i\tl{}K_i \to R_i$, $L_j\tl{}K_j \to R_j$, the set of quotients of $L_i+L_j$ is finite and computable,
\item for all $G\tl{} J$, it is possible to compute every $H\tl{}J$ such that $(G\tl{} J) \DPOstep{} (H\tl{}J)$.
\end{itemize}

Computability refers to the possibility of effectively computing each rewriting step as well as to have a finite number of pre-critical pairs.
More precisely, the first two conditions ensure that the set of all pre-critical pairs is finite (since every object has finitely many quotients)
and each of them can be computed,
while the last one ensures that any possible rewriting step can also be computed.
Thus, these assumptions rule out the rewriting of infinite structures, singleing out instead those structures where it is reasonable to
apply the DPO mechanism,
like finite hypergraphs in $\Hyp{\Sigma}$, which are exactly what is needed for implementing rewriting of SMTs.

\begin{corollary}\label{cor:comput}
For a computable terminating DPO system with interfaces, confluence is decidable.
\end{corollary}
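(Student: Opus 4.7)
The plan is to reduce the decidability of confluence to a finite search over pre-critical pairs, combining Newman's Lemma with Theorem~\ref{th:locconfl} and the computability hypotheses. Since the rewriting system is terminating, Newman's Lemma---applied to the abstract rewriting relation $\DPOstep{\mathcal{R}}$ on isomorphism classes of graphs with interface---tells us that confluence is equivalent to local confluence. Theorem~\ref{th:locconfl} shows that joinability of all pre-critical pairs implies local confluence; conversely, every pre-critical pair is by definition a pair of one-step rewritings from a common source, so local confluence implies its joinability. Hence confluence is equivalent to joinability of all pre-critical pairs, and it suffices to show that this property is decidable.

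First I would argue that the set of pre-critical pairs is finite and can be effectively enumerated. For each ordered pair of rules $L_i \tl{} K_i \tr{} R_i$ and $L_j \tl{} K_j \tr{} R_j$ in $\mathcal{R}$, the computability hypothesis provides finitely many quotients of $L_i + L_j$, and each epi $[f_1,f_2]\colon L_i+L_j \twoheadrightarrow S$ witnesses a candidate common source. From such an $S$ one obtains the two rewriting steps $(H_1 \tl{} S) \LDPOstep{} (S\tl{} S) \DPOstep{} (H_2 \tl{} S)$ via the two rules, and the canonical interface $J$ is the pullback of $C_i \to S \tl{} C_j$, which is computable by hypothesis. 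Ranging over all pairs of rules and all quotients yields the full (finite) list of pre-critical pairs.

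Next, I would decide joinability of a given pre-critical pair $(H_1 \tl{} J) \LDPOstep{} (S \tl{} J) \DPOstep{} (H_2 \tl{} J)$ by computing the sets of normal forms reachable from $(H_1 \tl{} J)$ and from $(H_2 \tl{} J)$ and testing them for a common element. The computability assumption bounds the set of one-step successors of any interfaced object to a finite, computable set, so the rewriting trees are finitely branching; termination forbids infinite branches, so by K\"onig's Lemma these trees are finite and can be fully expanded, their leaves being exactly the reachable normal forms. The pair is joinable iff the two normal-form sets intersect: the ``if'' direction is immediate, while the ``only if'' direction uses termination to reduce any common reduct $W$ to a common normal form.

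There is no serious mathematical obstacle beyond threading the computability hypotheses carefully. The only subtle point is ensuring that isomorphism classes can be distinguished effectively when comparing normal forms and when enumerating quotients, but this is implicit in the assumption that the set of quotients of $L_i+L_j$ is not merely finite but \emph{computable}, together with the requirement that rewriting steps be computable up to the ambient equivalence on graphs with interface. All remaining work is essentially bookkeeping over these decidable operations.
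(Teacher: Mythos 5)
Your proposal is correct and follows essentially the same route as the paper: reduce confluence to joinability of the finitely many computable pre-critical pairs (using Theorem~\ref{th:locconfl} for one direction and the trivial observation that a non-joinable pre-critical pair witnesses non-confluence for the other), then decide joinability of each pair by exhaustively expanding the finite, terminating rewrite trees. You are merely more explicit than the paper about the appeal to Newman's and K\"onig's lemmas and about comparing sets of normal forms rather than all reachable objects, which are presentational rather than substantive differences.
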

\begin{proof}
We first observe that an arbitrary DPO system with interface is confluent if and only if all pre-critical pairs are joinable.
\begin{enumerate}
\item If all pre-critical pairs are joinable then, by Theorem \ref{th:locconfl},  the system is confluent.
\item  If not all pre-critical pairs are joinable, then at least one pair witnesses the fact that the system is not confluent. 
\end{enumerate}
Therefore, to decide confluence, it suffices to check that all pre-critical pairs are joinable.

\medskip

Since the system is computable, there are only finitely many pre-critical pairs and these can be computed. For each pair, one can decide joinability: Indeed, each rewriting step can be computed (since the system is computable) and there are only finitely many $(H\tl{}J)$ such that $(G \tl{}J) \DPOstep{}^* (H\tl{}J)$ (since the system is terminating).
\end{proof}

It is worth to remark that this result is not in conflict with Theorem \ref{thm:plump}: Corollary~\ref{cor:comput} refers to the confluence of all hypergraphs with interfaces $G\tl{}J$. 
The property that Theorem \ref{thm:plump} states as undecidable is whether the rewriting is confluent for all \emph{hypergraphs with empty interface} $G\tl{}0$. Observe that 
the restriction to hypergraphs with empty interface would make the above proof fail in point (2): indeed, thanks to Theorem~\ref{th:locconfl}, point (1) would hold also for 
hypergraphs with empty interface, but a non-joinable pre-critical pair originating from $S\tl{}J$ with $J$ non empty does not necessarily witness that rewriting is not confluent 
for all $G\tl{}0$.

A similar problem arises with term rewriting, when restricting to the confluence of \emph{ground terms} \cite{kapur1990ground}. As an example, consider the following term rewriting system defined on the signature with two unary symbols, $f$ and $g$, and one constant $c$
$$f(g(f(x))) \to x \quad \quad \quad f(c)\to c \quad \quad \quad g(c)\to c$$
The critical pair $f(g(x)) \leftarrow f(g(f(g(f(x))))) \to g(f(x))$ is not joinable, but the system is obviously ground confluent, as every ground term will eventually rewrite into $c$.

Our work therefore allows one to view Theorem \ref{thm:plump} in a new light: as hypergraphs with empty interface are morally the graphical analogous of ground terms, we can say that ground confluence is not decidable for DPO rewriting with interfaces.


\section{Confluence for PROP rewriting with Frobenius structure}\label{sec:confluenceFrob}

As emphasised in the introduction, a major reason for interest in DPO rewriting with interfaces is that PROP rewriting (\S\ref{ssec:PROPrewriting}) may be interpreted therein. In this section we investigate how our confluence result behaves with respect to this interpretation. The outcome is that confluence is decidable for terminating PROP rewriting systems, where terms are taken modulo a chosen special Frobenius structure (Corollary~\ref{cor:confluenceFROB}). For arbitrary symmetric monoidal theories, confluence is also decidable, provided that certain additional conditions hold (Corollary~\ref{cor:confluenceSMTs}).

\subsection{From PROPs to hypergraphs with interfaces}\label{ssec:PROPDPOrewriting}

In this subsection we report a result from~\cite{BGKSZ-partone}
that is crucial for the encoding of PROP rewriting into DPO rewriting with interfaces in $\Hyp{\Sigma}$ (cf. Section \ref{ssec:DPOrewriting}).



First, we obtain our domain of interpretation by restricting the category $\Cospan{\Hyp{\Sigma}}$ whose objects are  hypergraphs 
and arrows cospans, i.e. pairs $G_1 \tr{} G_2 \tl{}G_3$ of $\Sigma$-hypergraphs morphisms, up-to isomorphism in the choice of $G_2$.

\begin{definition}[Hypergraphs with interfaces] \label{def:frobtermgraph}
Any $k\in \N$ can be seen as a discrete hypergraph (i.e., with an empty set of edges) with $k$ vertices.
The objects of the PROP
 $\FTerm{\Sigma}$ 
 are natural numbers and arrows $n\to m$ are cospans $n \tr{} G\tl{} m$ in $\Hyp{\Sigma}$ (where $n$, $m$ are considered as hypergraphs).
 $\FTerm{\Sigma}$, therefore, is a full subcategory of $\Cospan{\Hyp{\Sigma}}$.
\end{definition}
Explicitly, composition in $\FTerm{\Sigma}$ is defined by pushout as in $\Cospan{\Hyp{\Sigma}}$ and the monoidal product $\tns$ by the coproduct in $\Hyp{\Sigma}$.
 The idea behind the discreteness restriction is that the arrows of the cospan tell what are the ``left and right dangling wires'' in the string diagram encoded by $G$. In pictures, we shall represent $n$ and $m$ as actual discrete graphs --with $n$ and $m$ nodes respectively-- and use number labels (and sometimes colours, whenever available to the reader) to help visualise how they get mapped to nodes of $G$.

Given a signature $\Sigma$, we define a PROP morphism $\synTosem{\cdot} \colon \mathbf{S}_\Sigma \to \FTerm{\Sigma}$.
Since $\syntax{\Sigma}$ is the PROP freely generated by an SMT with no equations, it suffices to define $\synTosem{\cdot}$ on the generators: for each $o \colon n \to m$ in $\Sigma$, we let $\synTosem{o}$ be the following cospan of type $n \to m$

\begin{equation*}\label{eq:phi}
  \scalebox{0.75}{\tikzfig{generator-cospan}}
\end{equation*}

\begin{example} \label{eq:mimramFormalIntepretation}
The two sides of the PROP rewriting rule~\eqref{eq:mimramYB-prop} (Example~\ref{ex:mimram}) get interpreted as the following cospans in $\FTerm{\Sigma}$
\begin{eqnarray*}
\cgr[height=40pt]{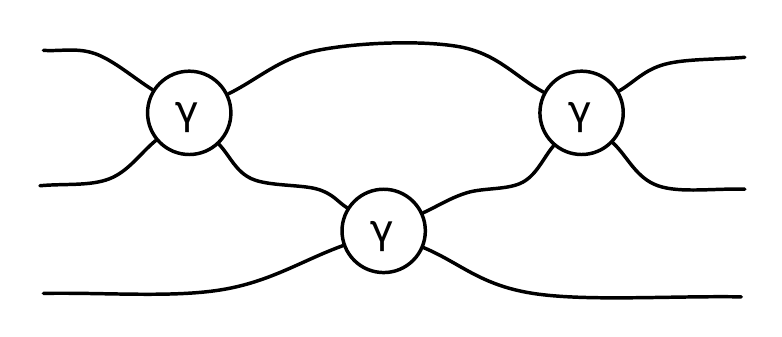} & \quad \xmapsto{\synTosem{\cdot}} \quad &
  \scalebox{0.75}{\tikzfig{yang-lhs-cospan}}
\\
\cgr[height=40pt]{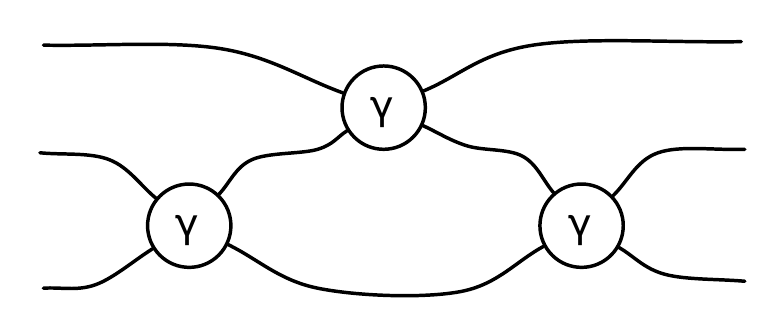} & \quad \xmapsto{\synTosem{\cdot}} \quad &
  \scalebox{0.75}{\tikzfig{yang-rhs-cospan}}
\end{eqnarray*}
\end{example}

\begin{proposition}[\cite{BGKSZ-partone}]\label{prop:faithful}
$\synTosem{\cdot} \colon \syntax{\Sigma} \to \FTerm{\Sigma}$ is faithful.
\end{proposition}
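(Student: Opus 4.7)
The plan is to establish faithfulness by exhibiting a left inverse to $\synTosem{\cdot}$ on its image. Concretely, I would build a ``readout'' procedure that, given a cospan $n \to G \leftarrow m$ in the image of $\synTosem{\cdot}$, reconstructs a $\Sigma$-term whose $\syntax{\Sigma}$-equivalence class is mapped to that very cospan.

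First, I would characterise the image of $\synTosem{\cdot}$: by induction on the structure of $\Sigma$-terms, every cospan in the image has underlying hypergraph that is \emph{monogamous and directed acyclic}. Monogamy here means that each node of $G$ is the target of exactly one hyperedge or is in the image of the left interface (exclusively), and symmetrically on the right; acyclicity rules out any directed cycles of hyperedges. The base cases (identities, symmetries, the empty diagram, and the cospan $\synTosem{o}$ of a single generator) satisfy these properties, and both pushout composition along discrete interfaces and coproduct preserve them, so the inductive step goes through.

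Next, for any such hypergraph with interfaces, I would define a readout: topologically sort the hyperedges as $e_1,\ldots,e_k$, and translate this sort into a $\Sigma$-term of the form $s_0 \poi (g_1 \tns \id) \poi s_1 \poi (g_2 \tns \id) \poi \cdots \poi s_k$, where each $g_i$ is the generator labelling $e_i$, the identities account for ``pass-through'' wires, and the $s_j$ are symmetries rewiring sources and targets to the positions dictated by the monogamy structure. I would then show that the choice of topological sort does not matter up to the axioms of symmetric monoidal categories: any two topological sorts of a finite DAG are connected by a sequence of adjacent swaps of independent elements, and each such swap corresponds to an instance of the interchange law $(f \tns \id)\poi(\id \tns g)=f\tns g=(\id\tns g)\poi(f\tns \id)$ together with naturality of the symmetry, both of which are equalities in $\syntax{\Sigma}$.

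Finally, I would verify that the readout $\psi$ satisfies $\psi \circ \synTosem{\cdot} = \id_{\syntax{\Sigma}}$. On generators this is immediate from the definitions, on identities and symmetries it is again direct, and the inductive step reduces to showing that the readout is compatible with $\poi$ (realised as pushout in $\FTerm{\Sigma}$) and with $\tns$ (realised as coproduct). Faithfulness follows: if $\synTosem{s}=\synTosem{t}$ as cospans up to iso, then applying $\psi$ gives $s=t$ in $\syntax{\Sigma}$. The main obstacle is the well-definedness of $\psi$ in the previous paragraph; it is essentially a coherence statement for free symmetric monoidal PROPs, so the bulk of the work is the combinatorial analysis of swaps in topological sorts and their correspondence with SMC-equalities, rather than in the inductive verification that closes the argument.
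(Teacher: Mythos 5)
This proposition is quoted from Part I and carries no proof in the present paper, so the only meaningful comparison is with the argument in the cited source: your strategy --- cut down to the image, characterised as the monogamous acyclic cospans (cf.\ Theorem~\ref{thm:charactImage} here), and construct a readout by topological sorting whose well-definedness reduces to the interchange law and naturality of symmetry --- is essentially that argument, and it is sound. The only points worth tightening are that the readout must also account for isolated nodes of $G$ (which decode to identity wires), that well-definedness depends not only on the choice of topological sort but also on the choice of the interleaving permutations $s_j$, and that $\psi$ must be invariant under isomorphism of cospans (since arrows of $\FTerm{\Sigma}$ are iso-classes); all three are absorbed by the same coherence analysis you already flag as the bulk of the work.
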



The encoding $\synTosem{\cdot}$ is an important part of Theorem~\ref{thm:coproduct} below. This is a pivotal result in our exposition, as it serves as a bridge between algebraic and combinatorial structures. Indeed, it provides a presentation, by means of generators and equations, for the PROP $\FTerm{\Sigma}$: the disjoint union of the SMTs of $\syntax{\Sigma}$ and $\frob$.

\begin{theorem}[\cite{BGKSZ-partone}]\label{thm:coproduct}
There is an isomorphism of PROPs $\allTosem{\cdot} \to \syntax{\Sigma} + \frob \tr{ } \FTerm{\Sigma}$.
\end{theorem}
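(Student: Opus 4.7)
The plan is to construct $\allTosem{\cdot}$ explicitly on generators, verify it is a well-defined PROP morphism, and then establish it is both full and faithful. Since we are working with PROPs (which share the same object set $\mathbb{N}$), showing the morphism is bijective on homsets of each arity will be enough to yield an isomorphism.

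First, I would define $\allTosem{\cdot}$ on generators by using the universal property of the coproduct $\syntax{\Sigma} + \frob$. On the $\syntax{\Sigma}$-component, I would set $\allTosem{\cdot}$ to coincide with $\synTosem{\cdot}$ of Proposition~\ref{prop:faithful}. On the $\frob$-component, I would interpret the Frobenius generators as the canonical ``wire merging'' cospans in $\FTerm{\Sigma}$: namely, $\Bmult\colon 2\to 1$ goes to the cospan $2\tr{} 1 \tl{} 1$ with $2\tr{}1$ the unique such map, and similarly $\Bunit\mapsto (0\to 1\tl{} 1)$, $\Bcomult\mapsto(1\tr{} 1\tl{} 2)$, $\Bcounit\mapsto(1\tr{}1\tl{} 0)$, all on the empty hypergraph (no edges). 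To see this is well-defined, I would verify that the Frobenius equations of Figure~\ref{fig:EQFrobenius} hold between these cospans; this is a routine check in $\Cospan{\Hyp{\Sigma}}$, relying on the fact that the relevant pushouts over discrete hypergraphs simply quotient the apex by the smallest equivalence identifying the relevant vertices.

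Next, for faithfulness, I would leverage Proposition~\ref{prop:faithful} together with the fact that every arrow of $\syntax{\Sigma}+\frob$ admits a normal form as a $\Sigma$-term interleaved with Frobenius ``spider'' structure. Concretely, using the standard theorem that $\frob$ is the PROP of cospans of finite sets (i.e.\ cospans in $\FTerm{\varnothing}$), any morphism in $\syntax{\Sigma}+\frob$ can be brought to the shape ``Frobenius prelude; parallel $\Sigma$-generators; Frobenius postlude.'' If two such normal forms are interpreted to the same cospan of $\Sigma$-hypergraphs, then the generator multiset is forced to agree (read off the edges of the apex) and the boundary wirings agree (read off the legs and node-identifications of the apex), so the normal forms must coincide up to the coherence of the Frobenius structure.

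For fullness, I would show that every cospan $n\tr{} G\tl{} m$ in $\FTerm{\Sigma}$ lies in the image. The argument proceeds by induction on the number of hyperedges of $G$. The base case $G$ discrete is exactly the result that $\frob$ is the PROP of cospans of finite sets: any cospan of discrete hypergraphs can be built from the Frobenius generators and symmetries. For the inductive step, one isolates a single hyperedge $e$ labelled by some $o\in\Sigma$ with $n'$ sources and $m'$ targets; then $G$ is the pushout of the smaller cospan (obtained by removing $e$ and exposing its tentacles as part of the interface) with the cospan $n'\tr{}\synTosem{o}\tl{}m'$, which precisely corresponds to composing in $\FTerm{\Sigma}$ with the interpretation of the generator $o$. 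Expressing this composition via Frobenius wirings completes the induction.

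The main obstacle, as usual in results of this shape, is organising the fullness argument cleanly: one must handle the fact that the same hyperedge can contribute its tentacles to the interface in many ways, and that the cospan legs $n\to G$, $m\to G$ need not be injective and may identify arbitrary subsets of nodes. Keeping track of this bookkeeping via a careful choice of normal form, and invoking the ``cospans of finite sets $=\frob$'' lemma to absorb all node-level identifications into Frobenius structure, is where the work concentrates.
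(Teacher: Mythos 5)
You should first note that this paper contains no proof of Theorem~\ref{thm:coproduct}: it is imported from Part I, and the surrounding text only records the definition of $\allTosem{\cdot}$ as the copairing $[\synTosem{\cdot},\frobTosem{\cdot}]$ together with the images of the four Frobenius generators. Your definition of the morphism agrees exactly with that data, and your overall strategy --- well-definedness by checking the Frobenius axioms on discrete cospans, fullness by induction on the number of hyperedges with Lack's $\frob\cong\Cospan{\FINSET}$ as the base case, faithfulness via a ``spider'' normal form --- is the standard route to this result, so there is nothing in this paper to compare it against; the assessment below is of the argument on its own terms.

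The faithfulness step, as written, has a genuine gap. The normal form ``Frobenius prelude; parallel $\Sigma$-generators; Frobenius postlude'' is not available for free: a composite such as $o \poi o$ for a generator $o\colon 1\to 1$ is visibly not of that shape, and flattening it into a single parallel layer requires first establishing that $\syntax{\Sigma}+\frob$ is self-dual compact closed (cups and caps built as unit-then-comultiplication and multiplication-then-counit, satisfying the snake equations coherently on every object $n$), so that sequential composition of generators can be traded for Frobenius rewiring of a tensor of generators. That lemma needs to be stated and proved before the normal form can be invoked. More seriously, the sentence ``so the normal forms must coincide up to the coherence of the Frobenius structure'' is precisely the content of faithfulness and is asserted rather than argued: two normal forms with the same cospan image differ by a permutation of the middle layer and by preludes/postludes whose underlying cospans of finite sets agree, and one must exhibit an actual derivation in the equational theory of $\syntax{\Sigma}+\frob$ (Lack's theorem for the Frobenius parts, naturality of the symmetry for the middle layer, and the moves that slide wirings across the identity wires of the middle layer) identifying them. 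Without that, the argument is circular, since uniqueness of the normal form modulo the axioms is essentially equivalent to the theorem being proved. The fullness induction is sound in outline, modulo the bookkeeping you already acknowledge.
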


The isomorphism $\allTosem{\cdot}$ is given as the pairing $[\synTosem{\cdot}, \frobTosem{\cdot}] \colon \syntax{\Sigma} + \frob \to \FTerm{\Sigma}$, where $\frobTosem{\cdot} \colon \frob \to \FTerm{\Sigma}$ is the PROP morphism mapping the generators of $\frob$ as follows
\[
\begin{array}{ccccccc}
\Bmult & \mapsto\quad & \tikzfig{csp-mult}
&\qquad\qquad &
\Bcomult & \mapsto\quad & \tikzfig{csp-comult} \\[1cm]
\Bunit & \mapsto\quad & \tikzfig{csp-unit}
&\qquad\qquad&
\Bcounit & \mapsto\quad & \tikzfig{csp-counit}
\end{array}
\]
Here, $\frob$ is used to model those features of the graph domain that are not part of the syntactic domain,
e.g. the ability of building a ``feedback loop'' around some $\alpha \colon 1 \to 1$ in $\Sigma$
\[ \scalebox{0.75}{\tikzfig{feedback-loop}} \quad \xmapsto{\allTosem{\cdot}} \quad \scalebox{0.75}{\tikzfig{feedback-loop-csp}} \]

\subsection{Confluence for rewriting in $\syntax{\Sigma}+\frob$}\label{sec:discrete}

We can use Theorem~\ref{thm:coproduct} to apply results for graphs with interfaces to $\syntax{\Sigma}+\frob$.  To this aim, first we need to interpret string diagrams as graphs with a \emph{single} interface, instead of two as in their usual cospan interpretation. This can be easily achieved by applying the transformation $\rewiring{d}$ (introduced in~\cite{BGKSZ-partone}), which ``rewires'' a syntactic term $d$ of $\syntax{\Sigma}+\frob$ by turning all of the inputs into outputs
\[
  \tikzfig{pf-syntax-d}
  \qquad \xmapsto{\rewiring{\cdot}} \qquad
  \tikzfig{pf-d-cup}
\]
Syntactic rewriting with ``rewired'' graphs is equivalent to rewriting with the original ones, in the sense that $d \Rew{\rrule{l}{r}} e$ if and only if $\rewiring{d} \Rew{\rrule{\rewiring{l}}{\rewiring{r}}} \rewiring{e}$.
However, since the rewired rules have only one boundary, they are readily interpreted as hypergraphs with interfaces: if $\allTosem{d} = i \rightarrow G \leftarrow j$, then $\allTosem{\rewiring{d}} = 0 \rightarrow G \leftarrow i + j$, which we may simply write as the graph with interface $G \leftarrow i + j$.

\begin{example} \label{ex:folding} The PROP rewriting system of Example \ref{ex:mimram} consists of just a single rule, 
let us call it $\langle d,  e \rangle$. The resulting DPO rewriting system with interfaces  is then presented 
in Example~\ref{ex:mimramdpo}. Also, Example~\ref{eq:mimramFormalIntepretation} is an intermediate step of this transformation, 
as it shows the cospans $\synTosem{c} = \allTosem{c}$ and $\synTosem{d} = \allTosem{d}$. One can obtain both 
graphs with interfaces $\allTosem{\rewiring{c}}$ and 
$\allTosem{\rewiring{d}}$ by ``folding'' the domain/codomain into the interface of Example~\ref{ex:mimramdpo}.\end{example}

Observe that a rule in the rewrite system $\allTosem{\rewiring{\mathcal{R}}}$ (defined  as $\allTosem{\rewiring{\mathcal{R}}} = \{ \rrule{\allTosem{\rewiring{l}}}{\allTosem{\rewiring{r}}} \mid \rrule{l}{r}  \in \mathcal{R} \}$) just consists of a pair of hypergraphs with a common interface, i.e., it is a DPO rule of the form $L \leftarrow n+m \to R$.
Thus, PROP rewriting in $\FTerm{\Sigma}$ coincides with DPOI rewriting: together with Theorem~\ref{thm:coproduct}, this 
correspondence yields the following result.
%

\begin{theorem}[\cite{BGKSZ-partone}]\label{thm:frobeniusrewriting}
Let $\mathcal{R}$ be a rewriting system on $\syntax{\Sigma}+\frob$. Then
$$d \Rew{\RS} e \mbox{   iff   } \allTosem{\rewiring{d}} \DPOstep{\scriptscriptstyle{\allTosem{\rewiring{\mathcal{R}}}}} \allTosem{\rewiring{e}}.$$

\end{theorem}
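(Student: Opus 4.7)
The plan is to unpack both sides of the biconditional through the isomorphism \allTosem{\cdot} of Theorem~\ref{thm:coproduct} and the rewiring transformation $\rewiring{\cdot}$, and then show that the PROP-theoretic notion of ``plugging a rule into a context'' from Definition~\ref{defn:rewprop} corresponds, on the nose, to the pushout construction of a DPO step with interface. Since \allTosem{\cdot} is an iso of PROPs and $\rewiring{\cdot}$ is a bijective syntactic manipulation preserving derivability (as noted just before the statement), the work reduces to matching up one rewriting step on each side.

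For the ($\Rightarrow$) direction, assume $d \Rew{\RS} e$ via a rule $\rrule{l}{r}$. By Definition~\ref{defn:rewprop}, $d$ and $e$ decompose in the same way with $l$ (resp.\ $r$) plugged in at the redex position. I would apply the PROP morphism \allTosem{\cdot} to these decompositions: functoriality and monoidality turn each occurrence of $\poi$ into a pushout and each occurrence of $\tns$ into a coproduct in $\Hyp{\Sigma}$, so the context around the redex assembles into a single hypergraph $C$ equipped with a map from $n+m$ (the interface inherited from the boundary of $l$, equivalently $r$). After applying $\rewiring{\cdot}$ to fold inputs onto the output side, the composite cospan $\allTosem{\rewiring{d}}$ manifests exactly as the pushout of $L \leftarrow n+m \to C$, and similarly $\allTosem{\rewiring{e}}$ as the pushout of $R \leftarrow n+m \to C$, giving the two pushout squares required by the DPO-with-interface diagram, with match $L \to G$ coming from the first pushout and the interface $0 \to n+m \to C$ witnessed by construction.

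For the ($\Leftarrow$) direction, start with the DPO diagram and run the argument backwards. The rewriting rules in $\allTosem{\rewiring{\mathcal{R}}}$ have shape $L \leftarrow n+m \to R$ with $L = \allTosem{\rewiring{l}}$ and $R = \allTosem{\rewiring{r}}$ by construction. Given pushouts $L \leftarrow n+m \to C \Rightarrow G$ and $R \leftarrow n+m \to C \Rightarrow H$ with common interface, the object $C$ with its two maps from $n+m$ constitutes a cospan in $\FTerm{\Sigma}$ that, together with the canonical rewiring of the two-sided interface back into inputs plus outputs, is an $\FTerm{\Sigma}$-context around the redex. Using Theorem~\ref{thm:coproduct} to transport this context back along $\allTosem{\cdot}^{-1}$ yields a syntactic context in $\syntax{\Sigma}+\frob$; plugging $l$ into it produces $d$ (up to the rewiring bijection) and plugging $r$ produces $e$, so $d \Rew{\RS} e$ by Definition~\ref{defn:rewprop}.

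The main obstacle is the precise bookkeeping in the first direction: one needs to argue that the cospan composite obtained by applying \allTosem{\cdot} to a PROP-context really does coincide, after rewiring, with the pushout square of a DPO step, and in particular that the map $n+m \to C$ extracted from the context is the correct pushout complement. This is essentially a diagram chase exploiting the fact that pushout composition in $\Hyp{\Sigma}$ is the composition of $\FTerm{\Sigma}$ (since $\FTerm{\Sigma}$ is a full subcategory of $\Cospan{\Hyp{\Sigma}}$) together with faithfulness of \synTosem{\cdot} (Proposition~\ref{prop:faithful}) and the fact that $\frob$ is used exactly to witness the ``wiring'' freedom of cospan composition. Once this correspondence is verified for a single rewriting step, both directions follow by iterating, and the statement is established.
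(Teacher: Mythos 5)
This theorem is recalled from Part I~\cite{BGKSZ-partone} and not reproved in the present paper, but your argument correctly reconstructs the standard proof: reduce to the rewired forms via the stated equivalence $d \Rew{\rrule{l}{r}} e \Leftrightarrow \rewiring{d} \Rew{\rrule{\rewiring{l}}{\rewiring{r}}} \rewiring{e}$, then use that $\allTosem{\cdot}$ is a (full and faithful) isomorphism of PROPs together with the fact that composition in $\FTerm{\Sigma}$ is pushout in $\Hyp{\Sigma}$, so that a factorisation $\rewiring{d}=\rewiring{l}\poi c$ corresponds exactly to the left pushout square of a DPOI step with complement $C=\allTosem{c}$ and interface $n+m\to C$. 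Aside from some swapping of the rule interface $i+j$ with the global interface $n+m$ in your bookkeeping, this is essentially the same approach as the cited proof.
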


One can read Theorem~\ref{thm:frobeniusrewriting} as: DPO rewriting with interfaces is sound and complete for any symmetric monoidal theory with a chosen special Frobenius structure, i.e. one of shape $(\Sigma + \Sigma_F, \mathcal{E} + \mathcal{E}_F)$, with $(\Sigma_F, \mathcal{E}_F)$ the SMT of $\frob$. 
There are various relevant such theories in the literature, such as the ZX-calculus~\cite{Coecke2008}, the calculus of signal flow graphs~\cite{Bonchi2014b}, the calculus of stateless connectors~\cite{Bruni2006} and monoidal computer~\cite{Pavlovic13}.

\medskip

The combination of the result above with Theorem~\ref{th:locconfl} is however not sufficient for
ensuring the decidability of the confluence for a terminating rewriting system $\mathcal{R}$ on $\syntax{\Sigma}+\frob$. Indeed, Theorem~\ref{th:locconfl} and Theorem \ref{thm:frobeniusrewriting} ensure that if all the pre-critical pairs in $\allTosem{\rewiring{\mathcal{R}}}$ are joinable, then the rewriting in $\mathcal{R}$ is confluent. 
However, for the decidability of confluence in $\mathcal{R}$ the reverse is also needed: if one pre-critical pair in $\allTosem{\rewiring{\mathcal{R}}}$ is not joinable, then $\mathcal{R}$ should not be confluent. To conclude this fact, it is enough to check that all pre-critical pairs of $\allTosem{\rewiring{\mathcal{R}}}$ lay in the image of $\allTosem{\rewiring{\cdot}}$, i.e., that they all have discrete interfaces.
The key observation is given  by the lemma below.

\begin{lemma}[Pre-critical pair with discrete interface]\label{lemma:precritical-discrete} Consider a pre-critical pair in $\Hyp{\Sigma}$ as in~\eqref{diag:criticalpairInterface}, Definition~\ref{pre-critical}. 
If both $K_1$ and $K_2$ are discrete, so is the interface $J$.
\end{lemma}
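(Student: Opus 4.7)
The plan is to work componentwise. The category $\Hyp{\Sigma}$ is a slice of the presheaf topos $\Hyp{}$, and in both pushouts and pullbacks are formed pointwise: on the node set and, for each $k,l \in \N$, on the set $(-)_{k,l}$ of $(k,l)$-ary edges. An object is discrete precisely when each edge component is empty, so it suffices to fix an arity $(k,l)$, abbreviate $E(-) := (-)_{k,l}$, and prove $E(J) = \varnothing$ for each such $(k,l)$.

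The assumption that $K_1$ is discrete gives $E(K_1) = \varnothing$, so the pushout square with apex $S$ restricts on edges to a set-theoretic pushout under the empty set, i.e., a disjoint union $E(S) \cong E(L_1) \sqcup E(C_1)$. Applying the same reasoning to the second rule yields a second canonical decomposition $E(S) \cong E(L_2) \sqcup E(C_2)$. In each decomposition the map $E(C_i) \to E(S)$ is the inclusion into its own summand and is therefore disjoint from the image of $E(L_i)$. Now the pre-critical pair hypothesis that $[f_1,f_2] \colon L_1 + L_2 \to S$ is epi is pointwise surjectivity in the presheaf topos, so $E(L_1) \cup E(L_2)$ covers $E(S)$. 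Combining with the above, every element of $E(C_1)$ must lie in the image of $E(L_2)$, and dually every element of $E(C_2)$ lies in the image of $E(L_1)$.

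The pullback $J$ is also formed pointwise, so $E(J)$ consists of pairs $(e_1,e_2) \in E(C_1) \times E(C_2)$ whose images in $E(S)$ coincide. By the previous step, the image of $e_1$ lies in the $L_2$-summand of the second decomposition while the image of $e_2$ lies in the $C_2$-summand; these summands are disjoint, so no such pair exists. Hence $E(J) = \varnothing$ for every $(k,l)$, and $J$ is discrete. The only real bookkeeping is in correctly identifying the two decompositions of $E(S)$ and tracking that each $E(C_i) \to E(S)$ meets the image of $E(L_j)$ (for $j \neq i$) precisely in the ``cross'' summand of the opposite decomposition; everything else is routine componentwise manipulation in a presheaf topos.
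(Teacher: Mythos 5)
Your proof is correct and follows essentially the same route as the paper's: discreteness of $K_i$ makes the edge-level pushout a disjoint union, so the edges of $C_i$ are exactly the edges of $S$ outside $f_i(L_i)$, and joint surjectivity of $[f_1,f_2]$ then forces the edge component of the pullback $J$ to be empty. You have merely spelled out the pointwise computation of (co)limits and epis in the presheaf/slice category that the paper's three-line argument leaves implicit.
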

\begin{proof}
For $i=1,2$, since $K_i$ is discrete, the hyperedges of $C_i$ are exactly those of $G_i$ that are not in $f_i(L_i)$.
Since $[f_1,f_2]\colon L_1+L_2 \to S$ is epi, all the hyperedges of $G$ are either in $f_1(L_1)$ or $f_2(L_2)$. Therefore, $J$ cannot contain any hyperedge.
\end{proof}

Since in every rule $L \tl{} K \tr{} R$ in $\allTosem{\rewiring{\mathcal{R}}}$, $K$ is discrete, from Lemma~\ref{lemma:precritical-discrete} and Theorem \ref{th:locconfl} we derive the following result.

\begin{corollary}\label{cor:confluenceFROB}
Confluence is decidable for terminating rewriting systems on ${\syntax{\Sigma}+\frob}$.
\end{corollary}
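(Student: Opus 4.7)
The plan is to reduce to Corollary~\ref{cor:comput} via the encoding $\allTosem{\rewiring{\cdot}}$, and then patch the mismatch between ``DPOI confluence on all interfaces'' and ``PROP confluence''. Given a terminating rewriting system $\mathcal{R}$ on $\syntax{\Sigma}+\frob$, first form the DPOI system $\allTosem{\rewiring{\mathcal{R}}}$ on $\Hyp{\Sigma}$. By construction of $\rewiring{\cdot}$, every rule in $\allTosem{\rewiring{\mathcal{R}}}$ has the shape $L \leftarrow n+m \rightarrow R$, so in particular its interface object $K=n+m$ is discrete. The system $\allTosem{\rewiring{\mathcal{R}}}$ is computable in the sense of the paragraph preceding Corollary~\ref{cor:comput}: pullbacks in $\Hyp{\Sigma}$ are computable, finite hypergraphs have finitely many (computable) quotients, and individual DPOI steps are effectively constructible. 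Moreover, Theorem~\ref{thm:frobeniusrewriting} transports termination of $\mathcal{R}$ to termination of $\allTosem{\rewiring{\mathcal{R}}}$ on graphs with interfaces in the image of $\allTosem{\rewiring{\cdot}}$.

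The decision procedure is: compute all pre-critical pairs of $\allTosem{\rewiring{\mathcal{R}}}$ (finite by computability), and for each, enumerate the finitely many normal forms reachable from the two sides (finite by termination) to test joinability. It remains to argue that this test is both sufficient and necessary for confluence of $\mathcal{R}$.

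\textbf{Sufficiency.} If every pre-critical pair of $\allTosem{\rewiring{\mathcal{R}}}$ is joinable, Theorem~\ref{th:locconfl} gives local confluence of the DPOI system, which combined with termination and Newman's lemma yields confluence of $\allTosem{\rewiring{\mathcal{R}}}$ on all graphs with interface. Restricting along $\allTosem{\rewiring{\cdot}}$ and invoking Theorem~\ref{thm:frobeniusrewriting} then gives confluence of $\mathcal{R}$.

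\textbf{Necessity.} This is the delicate step and the main obstacle: a priori, a non-joinable pre-critical pair $(H_1 \leftarrow J) \leftsquigarrow (S \leftarrow J) \rightsquigarrow (H_2 \leftarrow J)$ in the DPOI system could involve an interface $J$ not in the image of $\allTosem{\rewiring{\cdot}}$, and hence fail to witness non-confluence of $\mathcal{R}$. Here Lemma~\ref{lemma:precritical-discrete} does the work: since each $K_i$ in a rule of $\allTosem{\rewiring{\mathcal{R}}}$ is discrete, the interface $J$ of every pre-critical pair is discrete too. Thus $(S \leftarrow J)$ lies in the image of $\allTosem{\rewiring{\cdot}}$, so by Theorem~\ref{thm:frobeniusrewriting} it arises from some term $s \in \syntax{\Sigma}+\frob$ with two divergent rewrites $s \Rew{\RS} h_1$ and $s \Rew{\RS} h_2$ whose DPOI images are $(H_1 \leftarrow J)$ and $(H_2 \leftarrow J)$. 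Non-joinability of the pre-critical pair (again via the reflection provided by Theorem~\ref{thm:frobeniusrewriting}, applied to any candidate joining derivation) then forces $h_1$ and $h_2$ to be non-joinable in $\mathcal{R}$, witnessing non-confluence. Combining sufficiency and necessity gives the desired decidability.
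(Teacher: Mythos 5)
Your proposal is correct and follows essentially the same route as the paper's proof: sufficiency via Theorems~\ref{th:locconfl} and~\ref{thm:frobeniusrewriting}, necessity via Lemma~\ref{lemma:precritical-discrete} (discreteness of the pre-critical pair's interface puts it in the image of $\allTosem{\rewiring{\cdot}}$, so a non-joinable pair yields a syntactic witness of non-confluence), and decidability from computability plus termination as in Corollary~\ref{cor:comput}.
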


\begin{proof}
To decide confluence of a rewriting system $\mathcal{R}$ on ${\syntax{\Sigma}+\frob}$, it is enough to check whether all pre-critical pairs in 
$\allTosem{\rewiring{\mathcal{R}}}$ are joinable. Indeed, if all pre-critical pairs are joinable, then $\rewr_{\mathcal{R}}$ is confluent by 
Theorems \ref{th:locconfl} and \ref{thm:frobeniusrewriting}. For the other direction, suppose that there exists a pre-critical pair 
${}_{\scriptscriptstyle{\allTosem{\rewiring{\mathcal{R}}}}}\LDPOstep{} (S\tl{}J ) \DPOstep{\scriptscriptstyle{\allTosem{\rewiring{\mathcal{R}}}}}$ that is not joinable. By construction, in every rule $L \tl{} K \tr{} R$ in 
$\allTosem{\rewiring{\mathcal{R}}}$, $K$ is discrete. Therefore, by Lemma \ref{lemma:precritical-discrete}, also $J$ is discrete. This is the key fact to entail that there exists $d$ in $\syntax{\Sigma}+\frob$, such that $\allTosem{\rewiring{d}} = (S\tl{}J)$.  By Theorem \ref{thm:frobeniusrewriting}, $d$ witnesses that  $\rewr_{\mathcal{R}}$ is not confluent.

Now, if $\mathcal{R}$ is terminating, then by Theorem~\ref{thm:frobeniusrewriting} also $\allTosem{\rewiring{\mathcal{R}}}$ is terminating. The latter is also computable and therefore joinability of pre-critical pairs of $\allTosem{\rewiring{\mathcal{R}}}$ can easily be decided by following the steps in the second part of the proof of Corollary \ref{cor:comput}. 
\end{proof}

\section{Confluence for PROP rewriting without Frobenius structure}\label{sec:confluenceSMC}


The presence of a chosen Frobenius structure simplifies the connection between syntactic PROP rewriting and hypergraph rewriting in two ways. The first is that Frobenius algebras give a natural, syntactic analogue to hypergraph vertices that are connected to many different hyperedges at once. These correspond to wires `splitting' and `merging', which are exactly captured by the Frobenius algebra.
The second, and perhaps more notable freedom provided by Frobenius algebras is the ability to interpret feedback loops, and in particular hypergraphs and hypergraph rewriting that ignores any kind of acyclicity constraint. We have shown in Part II, and the reason why is shortly recalled here in Section~\ref{sec:convex}, that the absence of Frobenius algebras requires us not only to restrict the types of hypergraphs we consider, but also which matches correspond to syntactically-sound rewriting steps. This restriction on allowed matches has significant consequences in proving local confluence by critical pair analysis, which we will address in this section.


\subsection{Monogamous acyclic hypergraphs}\label{sec:ma-hypergraphs}

We first recall from~\cite{BGKSZ-parttwo} a combinatorial characterisation of the image of $\synTosem{\cdot}$. It is based on a few preliminary definitions. We call a sequence of hyperedges $e_1, e_2, \ldots, e_n$ a \textit{(directed) path} if at least one target of $e_k$ is a source for $e_{k+1}$ and a \textit{(directed) cycle} if additionally at least one target of $e_n$ is a source for $e_1$.
The \emph{in-degree} of a node $v$ in an hypergraph $G$ is the number of pairs $(h,i)$ where $h$ is an hyperedge with $v$ as its $i$-th target. Similarly, the \emph{out-degree} of $v$ is the number of pairs $(h,j)$ where $h$ is an hyperedge with $v$ as its $j$-th source. We call \emph{input} nodes those with in-degree $0$, \emph{output} nodes those with out-degree $0$, and \emph{internal} nodes the others. We write $\inp{G}$ for the set of inputs and $\out{G}$ for the set of outputs.

\begin{definition}
  A hypergraph $G$ is \textit{monogamous acyclic} (ma-hypergraph) if
    \begin{enumerate}
    \item it contains no cycle (acyclicity) and
    \item every node has at most in- and out-degree $1$ (monogamy).
  \end{enumerate}
  A cospan $n \tr{f} G \tl{g} m$ in $\FTerm{\Sigma}$ is monogamous acyclic (ma-cospan) when $G$ is an ma-hypergraph, $f$ is mono and its image is $\inp{G}$, and $g$ is mono and its image is $\out{G}$.
\end{definition}






\begin{theorem}[\cite{BGKSZ-parttwo}]\label{thm:charactImage} $n \tr{} G \tl{} m$ in $\FTerm{\Sigma}$ is in the image of $\synTosem{\cdot}$ iff it is
an ma-cospan. \end{theorem}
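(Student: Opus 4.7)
The plan is to establish the two implications separately, with the backward direction being the substantive one.

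For the forward direction ($\Rightarrow$), I would proceed by induction on the structure of $\Sigma$-terms, which makes sense because $\syntax{\Sigma}$ is freely generated. The base cases are the generators $o \in \Sigma$, the identity $\id_1$, and the symmetry $\sigma_{1,1}$; each is mapped to an ma-cospan by direct inspection of the definition of $\synTosem{\cdot}$. For the inductive step, I would show that the class of ma-cospans is closed under the two PROP operations. Closure under $\oplus$ (the coproduct in $\Hyp{\Sigma}$) is immediate, since the coproduct of two ma-hypergraphs is ma, and inputs/outputs of the coproduct are the disjoint unions of the components' inputs/outputs. Closure under $\poi$ (the pushout composition in $\FTerm{\Sigma}$) is the only point requiring care: the nodes being glued are outputs of the first cospan, each of in-degree $1$ and out-degree $0$ by the ma-property, and inputs of the second cospan, each of in-degree $0$ and out-degree $1$; after the pushout they become internal nodes of in-degree and out-degree exactly $1$, preserving monogamy. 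Acyclicity is preserved because any cycle in the composite graph, being confined to one or other side by the fact that every glued node has exactly one entering and one leaving tentacle, would project down to a cycle in one of the factor hypergraphs, contradicting acyclicity there.

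For the backward direction ($\Leftarrow$), I would proceed by induction on the number of hyperedges of $G$. In the base case with no hyperedges, every node of $G$ has in- and out-degree $0$, so every node is simultaneously an input and an output; together with the requirement that the input and output legs $f$ and $g$ be monos whose images are $\inp{G}$ and $\out{G}$ respectively, this forces $n=|G_\star|=m$ and reduces the cospan to a bijection, i.e., a permutation of $n$ elements. Such a permutation is built inside $\syntax{\Sigma}$ from $\id_1$ and $\sigma_{1,1}$ via $\poi$ and $\oplus$, hence lies in the image of $\synTosem{\cdot}$. For the inductive step with at least one hyperedge, I would use acyclicity to pick a topologically minimal hyperedge $e \colon k\to l$: since no other hyperedge points into a source of $e$, and by monogamy each source of $e$ has in-degree exactly $0$, all sources of $e$ are inputs of $G$. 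Up to a symmetry, I may assume these sources are precisely the first $k$ input nodes. I can then write $G$ as the $\poi$-composition of $\synTosem{o_e \oplus \id_{n-k}}$ (where $o_e$ is the generator labelling $e$) with a residual cospan $G' \leftarrow (l + (n-k))$, obtained from $G$ by deleting $e$ and promoting its $l$ targets to new inputs.

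The main obstacle is the inductive step of the backward direction: one must verify that the residual $G'$ is again an ma-cospan (so the induction hypothesis applies) and that the pushout recomposing the two pieces reproduces $G$ on the nose. Monogamy of $G'$ follows because the targets of $e$ had in-degree exactly $1$ in $G$, and removing $e$ drops that to $0$, making them legitimate inputs; no other node's degree changes. Acyclicity of $G'$ is inherited from $G$. The pushout check is a straightforward but notationally heavy diagram chase, essentially matching the newly exposed $l$ inputs of $G'$ with the $l$ outputs of $\synTosem{o_e}$ along $e$'s former targets. Once these verifications are in place, applying the induction hypothesis to $G'$ and composing with $\synTosem{o_e \oplus \id_{n-k}}$ (and the initial permutation) exhibits the original cospan as lying in the image of $\synTosem{\cdot}$, completing the proof.
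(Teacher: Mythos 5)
The paper does not prove this statement: it is quoted from Part~II~\cite{BGKSZ-parttwo}, so there is no in-paper argument to compare against, and I assess your proposal on its own terms. Your strategy is the natural (and essentially the original) one: structural induction on $\Sigma$-terms for the forward direction, via closure of ma-cospans under $\oplus$ and $\poi$, and induction on the number of hyperedges for the converse, peeling off a path-minimal hyperedge. The outline is sound, but two points need repair. First, an output node of an ma-cospan has in-degree \emph{at most} one, not exactly one (a node may be simultaneously an input and an output); your monogamy argument for closure under $\poi$ still goes through, but not as literally stated. Second, and more substantively, in the inductive step of the converse the residual cospan $G' \tl{} l+(n-k)$ cannot be obtained from $G$ merely ``by deleting $e$ and promoting its $l$ targets to new inputs'': you must also delete the $k$ source nodes of $e$. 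If they are retained, they become isolated nodes of $G'$ with in- and out-degree $0$, hence inputs and outputs of $G'$ not hit by the legs of $l+(n-k) \tr{} G' \tl{} m$, so $G'$ fails to be an ma-cospan and the induction hypothesis does not apply; moreover the recomposing pushout would then contain those nodes twice (once from $\synTosem{o_e}$ and once from $G'$) and would not reproduce $G$. With that correction, together with the routine observation that the $l$ targets of $e$ are pairwise distinct and disjoint from $\inp{G}$ (both forced by monogamy, so the promoted-input leg is mono onto $\inp{G'}$), the argument is complete.
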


We call a hypergraph with interface $G \tl{} J$ monogamous acyclic (ma-hypergraph with interface) if it is of the form $G \tl{[i,o]} n + m$ for an ma-cospan $n \tr{i} G \tl{o} m$, up to isomorphism between $J$ and $n+m$. Note that such an ma-cospan (and hence the ma-hypergraph with interface) may be uniquely fixed, so we can use the two representations interchangeably.


Finally, we say that a rule $L \tl{} i+j  \tr{} R$ is an ma-rule if $i \tr{} L \tl{} j$ and $i \tr{} R \tl{} j$ are ma-cospans.

\subsection{Convex rewriting and soundness}\label{sec:convex}
%
%
We are now in position to interpret PROP rewriting for $\syntax{\Sigma}$ in DPO rewriting
for ma-hypergraphs with interfaces, via the mapping
that takes string diagrams to ma-hypergraphs with interfaces.
Unfortunately, as shown in Part II~\cite{BGKSZ-parttwo}, this interpretation is generally \emph{unsound}.
There are several things that can go wrong in the absence of Frobenius structure, as illustrated in the next
two examples from Part II. These motivate our restrictions to PROP rewriting systems that make the
interpretation sound, as presented in the sections below.


First, as noted in Part I~\cite{BGKSZ-partone}, a DPOI rule can have multiple pushout complements when it is not left-linear, only some of which make sense without Frobenius structure.

\begin{example}\label{ex:unsoundcontext}
Consider $\Sigma = \{ \alpha_1 \colon 0 \to 1, \alpha_2 \colon 1\to 0, \alpha_3 \colon 1 \to 1\}$ and the PROP rewriting system $\mathcal{R} = \left\{\ \tikzfig{idwire}\  \Rew{} \ \tikzfig{unsound-context-rhs}\ \right\}$ on $\syntax{\Sigma}$. Its interpretation in $\FTerm{\Sigma}$ is given by the rule
\[ \scalebox{0.75}{\tikzfig{unsound-context-rule}} \]
The rule is not left-linear and therefore pushout complements are not necessarily unique for the application of this rule. For example, the following pushout complement yields a rewritten graph that can be interpreted as an arrow in an SMC
\[ \scalebox{0.75}{\tikzfig{sound-context-dpo}} \]
On the other hand, if we choose a different pushout complement, we obtain a rewritten graph that does not look like an SMC morphism
\[ \scalebox{0.75}{\tikzfig{unsound-context-dpo}} \]

The different outcome is due to the fact that $f$ maps $0$ to the leftmost and $1$ to the rightmost node, whereas $g$ swaps the assignments. Even though both rewriting steps could be mimicked at the syntactic level in $\syntax{\Sigma} + \frob$, the second hypergraph rewrite yields a hypergraph that is illegal for $\mathcal{R}$ in $\syntax{\Sigma}$. In particular, the rewritten graph in the second derivation is not monogamous: the outputs of $\alpha_1$ and $\alpha_3$ and the inputs of $\alpha_2$ and $\alpha_3$ have been glued together by the right pushout.
\end{example}

Next we can see that, even if a DPOI rewriting step yields a string diagram that can be expressed without Frobenius structure, it could be the case that equation itself cannot be proven in the SMT without introducing a feedback loop.

\begin{example}\label{ex:unsound}
Consider a $\Sigma = \{ e_1 \: 1 \to 2, {e_2 \: 2 \to 1}, {e_3 \: 1 \to 1} , e_4 \: 1 \to 1\}$ and the following rewriting rule in $\syntax{\Sigma}$
  \begin{eqnarray}
    \rrule{ \ \tikzfig{unsound-lhs}\ \colon 2 \to 2 \ }{\ \ \tikzfig{unsound-rhs}\ \colon 2\to 2 } \label{eq:counterexSoundness1}
  \end{eqnarray}
  Left and right side are interpreted in $\FTerm{\Sigma}$ as cospans
  \[
    \tikzfig{unsound-lhs-cospan}
    \qquad
    \tikzfig{unsound-rhs-cospan}
  \]
  We introduce another diagram $c \: 1\to 1$ in $\syntax{\Sigma}$ and its interpretation in $\FTerm{\Sigma}$
  \[
    \tikzfig{unsound-target}
    \quad \xmapsto{\synTosem{\cdot}} \quad
    \tikzfig{unsound-target-cospan}
  \]
  Now, rule \eqref{eq:counterexSoundness1} cannot be applied to $c$, even modulo the SMC equations. However, their interpretation yields a DPO rewriting step in $\FTerm{\Sigma}$ as below
  \[
    \scalebox{0.75}{\tikzfig{unsound-dpo}}
  \]
  Observe that the leftmost pushout above \emph{is} a boundary complement: the input-output partition is correct. Still, the rewriting step cannot be mimicked at the syntactic level using rewriting modulo the SMC laws. That is because, in order to apply our rule, we need to deform the diagram such that $e_3$ occurs outside of the left-hand side. This requires moving $e_3$ either before or after the occurence of the left-hand side in the larger expression, but both of these possibilities require a feedback loop
  \[ \tikzfig{e3-before} \qquad\qquad \tikzfig{e3-after} \]
  Hence, if the category does not have at least a traced symmetric monoidal structure~\cite{Joyal_tracedcategories}, there is no way to apply the rule.
\end{example}


%

The two examples motivate the definition of \emph{convex} rewriting, as a restriction of DPOI rewriting that rules out the above counterexamples and ensures soundness. We briefly recall the relevant definitions from Part II~\cite{BGKSZ-parttwo}, referring to the discussion therein for more examples and properties of convex rewriting. As a preliminary step, we need to introduce a suitable restriction of the notion of pushout complement, called \emph{boundary complement}.

\begin{definition}[Boundary complement] \label{def:boundarycomplement}
    For ma-cospans $i \xrightarrow{a_1} L \xleftarrow{a_2} j$ and $n \xrightarrow{b_1} G \xleftarrow{b_2} m$
    and mono $f : L \to G$, a pushout complement as depicted in $(\dagger)$ below
    \[ \xymatrix@C=50pt@R=20pt{
        L \ar[d]_f \ar@{}[dr]|{(\dagger)} & {i + j} \ar[l]_{a = [a_1,a_2]}
              \ar[d]^{c = [c_1,c_2]}
        \ar@{}[dl]|(.8){\text{\large $\urcorner$}}
         \\
       G & L^\perp \ar[l]^{g} \\
       & n+m \ar[ul]^{[b_1,b_2]} \ar@{-->}[u]_{[d_1,d_2]}
       } \]
    is called a \textit{boundary complement} if $[c_1,c_2]$ is mono and there exist
    $d_1\: n\to L^\perp$ and $d_2\: m \to L^\perp$ making the above triangle commute
    and such that
    \begin{equation}\label{eq:boundary} j+n \xrightarrow{[c_2,d_1]} L^\perp \xleftarrow{[c_1,d_2]} m+i \end{equation}
    is an ma-cospan.
\end{definition}

Note that boundary complements are unique when they exist~\cite{BGKSZ-parttwo}, and that the pushout complement of Example~\ref{ex:unsoundcontext} is not a boundary complement.

The next definition is a restriction on the possible matches, and rules out the other counterexample (Example~\ref{ex:unsound}).

\begin{definition}[Convex match] \label{def:convexampleatching}
We call $m : L \to G$ in $\Hyp{\Sigma}$ a \textit{convex match} if it is mono and its image $m[L]$ is convex, that is, for any nodes $v, v'$ in $m[L]$ and any  path $p$ from $v$ to $v'$ in $G$,
every hyperedge in $p$ is also in $m[L]$.
\end{definition}

We now have all the ingredients to recall the notion of convex rewriting step, which is essentially a DPOI rewriting step relying on a boundary complement and a convex match.

\begin{definition}\label{def:convex-dpoi}
Given $n \rightarrow D \leftarrow m$ and $n \rightarrow E \leftarrow m$ ma-cospans, \emph{$D$ rewrites convexely into $E$ with interface $n+m$} --notation
$(D \tl{} n+m) \rigidDPOstep{\mathcal{R}} (E \tl{} n+m)$-- if there exist ma-rule $L \tl{} i+j  \tr{} R$ in $\mathcal{R}$,
object $C$, and morphisms such that the diagram below commutes and the squares are pushouts
\begin{equation}\label{eq:dpo3}
\raise25pt\hbox{$
\xymatrix@R=15pt@C=20pt{
L \ar[d]_{f}   &  i+j \ar[d]
 \ar@{}[dl]|(.8){\text{\large $\urcorner$}}
 \ar@{}[dr]|(.8){\text{\large $\ulcorner$}}
 \ar[l]_{[a_1,a_2]} \ar[r]^{[b_1,b_2]}  & R \ar[d] \\
 D &  C \ar[l] \ar[r]  & E \\
&  n+m \ar[u] \ar[ur]_{[p_1,p_2]}  \ar[ul]^{[q_1,q_2]}
}$}
\end{equation}
and the following conditions hold
 \begin{itemize}
 \item $f \: L \to D$ is a convex match, and
 \item $i+j \to C \to D$ is a boundary complement in the leftmost pushout.
\end{itemize}
\end{definition}

Note that in the definition above we implicitly assume that $i \rightarrow L \leftarrow j $ is an ma-cospan.

\subsection{Failure of naive critical pair analysis for convex rewriting}\label{sec:failure-convex}

The main issue with using the critical pair analysis technique delineated before is that convexity is not preserved by clipping. That is, we can have critical overlaps $H$ which form non-convex subgraphs $H \subseteq G$. Hence, it could be the case that a branching $H_1 \LDPOstep{} H \DPOstep{} H_2$ is joinable using convex rewriting starting from $H$, but the lifted branching $G_1 \LDPOstep{} G \DPOstep{} G_2$ will not be joinable.

This is what happens in the following example, taken from Part II~\cite{BGKSZ-parttwo}.

\begin{example} \textit{Frobenius semi-algebras} are Frobenius algebras lacking the unit and counit equations. That is, they are the free PROP generated by the signature
\[
  \left\{\ \mu := \Wmult, \delta := \Wcomult\ \right\}
\]
modulo the following equations
\begin{equation}\label{eq:fsa-rules}
  \tikzfig{fsa-rules}
\end{equation}
which can be represented as the following rewriting system
\begin{center}
  \scalebox{0.7}{\tikzfig{fsa-rules-dpo}}
\end{center}

Suppose we start to consider the pre-critical pairs for this system. The following one is clearly joinable, since it involves parallel (and in fact totally disjoint) applications of \FS{3} and \FS{4}
\[ \scalebox{0.75}{\tikzfig{convex-clipping-fail}} \]

However, if we consider the middle graph in a larger context, e.g.
\[ \scalebox{0.75}{\tikzfig{convex-clipping-fail2-alt}} \]
it no longer becomes joinable by convex rewriting. For example, suppose we apply \FS{3} on the larger graph above
\[ \scalebox{0.75}{\tikzfig{convex-clipping-fail3}} \]
Then we are stuck: \FS{4} no longer has a convex matching, because applying \FS{3} introduced a new path from the output 5 to the input 0. Convexity guarantees we will not introduce cycles, and indeed in this case applying \FS{4} would introduce a new path from 0 to 5, and hence a cycle.



\end{example}


We present two solutions to this problem. The first is to put a strong restriction, called \textit{left-connectedness}, on the rewriting systems being considered. The second is to develop a more in-depth notion of critical pair analysis, which accounts for the context-sensitivity of convex rewriting using \textit{formal path extensions}.

They both rely on a more refined notion of pre-critical pair. One cannot simply reuse Definition~\ref{pre-critical}, as we want to enforce that the common source $S \tl{} J$ (cf. \eqref{diag:criticalpairInterface}) of the two derivations is an ma-hypergraph with interfaces, so that it is in the image of $\allTosem{\rewiring{\cdot}}$ and we can reason about pre-critical pairs `syntactically' in $\syntax{\Sigma}$.
However, while Lemma~\ref{lemma:precritical-discrete} guarantees that this is always the case for rewriting systems on $\syntax{\Sigma}+\frob$,
with Definition~\ref{pre-critical}, this is not guaranteed for $\syntax{\Sigma}$, as shown by the example below.

\begin{example} We concoct a pre-critical pair by instantiating~\eqref{diag:criticalpairInterface} as
shown below
\[ 
\cgr[height=1.5cm]{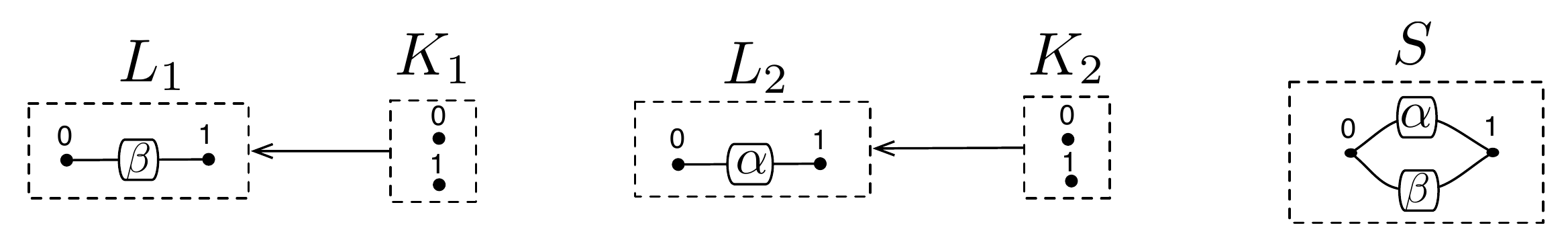}
\]
Although $L_1 \tl{} K_1$ and $L_2 \tl{} K_2$ are left-hand sides of left-connected rules, $S$ is \emph{not} monogamous, thus this pre-critical pair does not correspond to anything syntactic in $\syntax{\Sigma}$.
\end{example}

Recall from the end of Section~\ref{sec:ma-hypergraphs} that an ma-hypergraph with interface is required to have an interface corresponding exactly to the inputs and outputs of the ma-hypergraph. Because of this, using the notion of pre-critical pair from Definition~\ref{pre-critical} may yield too many nodes in the interface.

\begin{example}\label{ex:badpullback} Here is an example, where two rules match in an ma-hypergraph $G$, but the interface contains one extra node $4$ which is neither an input nor an output of $G$
\[\cgr{criticalPairNotMonogInterface.pdf}\]
\end{example}

Motivated by these two examples, we give the following definition.

\begin{definition}[Ma-pre-critical pair] Let $\mathcal{R}$ be a rewrite system consisting of ma-rules $L_1 \tl{} K_1 \tr{} R_1$ and $L_2 \tl{} K_2 \tr{} R_2$.
Consider two derivations with source $S\tl{} J$ 
\begin{equation}\label{diag:macriticalpairInterface}
\vcenter{
    \xymatrix@R=12pt{
    R_1 \ar[d] & \ar[l] K_1 \dlcorner  \ar[d] \ar[r] & L_1 \ar@{}[dr]|(.8){\text{\large $\ulcorner$}\qquad\quad} \ar[dr]^{f_1} && \ar[dl]_{f_2} \ar@{}[dl]|(.8){\qquad\quad\text{\large $\urcorner$}} L_2 & \ar[l] K_2 \ar[d] \drcorner\ar[r] & R_2 \ar[d] \\
    H_1  & \ar[l] C_1 \ar[rr] & & S  & & \ar[ll] C_2  \ar[r] & H_2 \\
    &&& \ar@/^/[ull] J \ar@{}[u]|{(\dagger)} \ar@/_/[urr] & & &
    }
}
\end{equation}
We say that $(H_1\tl{} J) \LDPOstep{} (S\tl{}J) \DPOstep{} (H_2\tl{} J)$ is an \emph{ma-pre-critical pair} if
 $[f_1,f_2] \colon L_1 + L_2 \to S$ is epi, $(\dagger)$ is a commuting diagram, and
 $S\tl{} J$ is an ma-hypergraph with interface;
it is \emph{joinable} if there exists an ma-hypergraph with interface $W\tl{}J$ such that
$(H_1\tl{}J) \DPOstep{}^* (W\tl{}J) {\,\,{}^*\!\!\LDPOstep{}\,\,\,} (H_2\tl{}J)$.
\end{definition}

Comparing it with Definition~\ref{pre-critical}, we are dropping the requirement that $(\dagger)$ is a pulback. However, note that 
up to an isomorphic choice of $J$, there is at most one ma-hypergraph with interface $S\tl{} J$ . Indeed, all 
ma-pre-critical pairs are also pre-critical pairs, and if $I$ is the pullback along $C_1 \rightarrow S \leftarrow C_2$, the 
uniquely induced monomorphism  $I\tl{} J$ just weeds out from $I$ those items whose image is
neither an input nor an output of $S$. 

\subsection{Confluence for left-connected rewriting in $\syntax{\Sigma}$}\label{sec:SymMonDpoRewriting}

\begin{definition}\label{def:left-connected} An ma-hypergraph $G$ is \emph{strongly connected} if for every input $x \in \inp{G}$ and output $y \in \out{G}$ there exists a path from $x$ to $y$. A DPO system with interfaces is called \textit{left-connected} if it is left-linear, every rule is an ma-rule and its left-hand side is strongly connected. We call a PROP rewriting system $\mathcal{R}$ on $\syntax{\Sigma}$ left-connected if $\allTosem{\rewiring{\mathcal{R}}}$ is left-connected.
\end{definition}
Non-commutative bimonoids (Example~\ref{exm:props}\ref{ex:ncbialgebras}, see also \S\ref{sec:confluenceBimonoids} below) and the Yang-Baxter rule of Example \ref{ex:mimram} are examples of left-connected rewriting systems.

Intuitively, in Definition~\ref{def:left-connected} strong connectedness prevents matches leaving ``holes'', as in Example~\ref{ex:unsound}, whereas left-linearity guarantees uniqueness of the pushout complements, and prevents the problem in Example~\ref{ex:unsoundcontext}. We are then able to prove the following.

%
\begin{theorem}[\cite{BGKSZ-parttwo}]\label{thm:stronglyconnected}
Let $\mathcal{R}$ be a left-connected rewriting system on $\syntax{\Sigma}$. Then,
\begin{enumerate}
\item if $d \Rew{\RS} e$ then $\allTosem{\rewiring{d}}  \rigidDPOstep{\allTosem{\rewiring{\mathcal{R}}}} \allTosem{\rewiring{e}}$;
\item if $\allTosem{\rewiring{d}} \rigidDPOstep{\allTosem{\rewiring{\mathcal{R}}}} \allTosem{\rewiring{e}}$ then $d \Rew{\RS} e$.
\end{enumerate}
\end{theorem}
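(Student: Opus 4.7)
The plan is to prove the two directions by leveraging the correspondence between syntactic contexts and cospan composition in $\FTerm{\Sigma}$, together with the characterisation of the image of $\synTosem{\cdot}$ via ma-cospans (Theorem~\ref{thm:charactImage}).

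For direction (1) (soundness), I would start from Definition~\ref{defn:rewprop}: a step $d \Rew{\RS} e$ means $d$ and $e$ decompose as in~\eqref{eq:rewpropmatch} via a rule $\rrule{l}{r} \in \mathcal{R}$ embedded in some monoidal context. Applying $\allTosem{\rewiring{\cdot}}$ is functorial, so the context translates into a cospan composition in $\FTerm{\Sigma}$, which in hypergraph terms amounts to glueing the ma-hypergraph $\allTosem{l}$ to the ma-hypergraph corresponding to the context precisely along their interfaces. From this one reads off a DPO square whose leftmost pushout is automatically a boundary complement (because the glueing respects the input/output partition of the left-hand side), and whose match is a convex embedding (because no hyperedge of the context interleaves with the image of $\allTosem{l}$). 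Left-linearity of the rule makes the pushout complement unique, so the resulting step is exactly $\allTosem{\rewiring{d}} \rigidDPOstep{\allTosem{\rewiring{\mathcal{R}}}} \allTosem{\rewiring{e}}$.

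For direction (2) (completeness), I would take a convex DPOI step and reconstruct a syntactic redex. Given a rule $L \leftarrow i+j \to R$ in $\allTosem{\rewiring{\mathcal{R}}}$ and a convex match $f\: L \to \allTosem{\rewiring{d}}$ with boundary complement $L^\perp$, by Definition~\ref{def:boundarycomplement} both $L$ and $L^\perp$ are ma-cospans (with the orientations prescribed by~\eqref{eq:boundary}). The key step is to show that $\allTosem{d}$ itself can be written, up to the rewiring transformation, as a cospan composition of the ma-cospans corresponding to $L^\perp$ and $L$. This is where I would invoke strong connectedness of $L$: the convexity condition rules out paths through $L$ that bypass it, and strong connectedness rules out paths that skip over $L$ entirely while still being ``trapped'' between some of its inputs and outputs (the scenario of Example~\ref{ex:unsound}). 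Together they guarantee that $L$ appears as a sub-ma-cospan of $\allTosem{d}$ that can be factored out along $\poi$ and $\tns$. Using Theorem~\ref{thm:charactImage} (to find syntactic preimages of the resulting ma-cospans) and faithfulness of $\synTosem{\cdot}$ (Proposition~\ref{prop:faithful}) yields a decomposition $d = C[l]$ with $C[r] = e$, i.e.\ $d \Rew{\RS} e$.

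The main obstacle is the factorisation step in direction (2): turning the abstract data of a convex match plus a boundary complement into an actual syntactic decomposition of $d$. The technical content is essentially a structural lemma stating that a convex ma-sub-hypergraph whose inputs and outputs are themselves the inputs and outputs of a sub-cospan can be ``cut out'' of the ambient ma-hypergraph by sequential and parallel composition of cospans. Strong connectedness is precisely what prevents intermediate hyperedges of the context from straddling the boundary of $L$ in a way that would force Frobenius structure; without it, one would need to reorder generators across the match using feedback loops, which is impossible in $\syntax{\Sigma}$. Once this structural lemma is established, the rest of the argument is bookkeeping within the framework of Part~II.
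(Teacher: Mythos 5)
The paper does not actually prove this theorem: it is imported verbatim from Part~II \cite{BGKSZ-parttwo}, so there is no in-paper proof to compare against, but your sketch follows the same strategy as the proof given there --- soundness by translating the syntactic context decomposition of Definition~\ref{defn:rewprop} into a DPO diagram with a boundary complement, completeness by factoring the ambient ma-cospan around the match --- and you correctly isolate the genuinely hard step, namely the structural lemma that a convex subobject satisfying the boundary-complement condition can be cut out of an ma-cospan using $\poi$ and $\tns$, which is the bulk of the work in Part~II and is only named, not established, in your proposal. One misattribution worth noting: in direction (2) the scenario of Example~\ref{ex:unsound} is excluded by convexity of the match, which is already built into the relation $\rigidDPOstep{}$ of Definition~\ref{def:convex-dpoi}, not by strong connectedness of $L$; strong connectedness and left-linearity are what make \emph{every} mono match of an ma-rule automatically convex (their role in Theorem~\ref{th:locconflwellformed}), and the Part~II version of this soundness/completeness equivalence in fact holds for convex rewriting without the left-connectedness hypothesis.
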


\begin{remark}
%
Note 
that for such rewriting systems the further restriction of
left-linearity is not particularly harmful, confluence-wise.
Indeed, an ma-hypergraph with interface $G \leftarrow J$ is not mono iff $G$ has one node that is both input and output, i.e., an isolated node. A rule with a strongly connected $L \leftarrow K$ is not left-linear precisely when $L$ is discrete, with a single node. Such a rule cannot be part of a terminating system, that is, one where local confluence implies confluence.\end{remark}
%
%

The above theorem allows us to use DPOI rewriting as a mechanism for rewriting $\syntax{\Sigma}$.

%
%
%
%
%
We could now recast in this setting the considerations on parallel and critical pairs, as well as on joinability,
as given in Definition~\ref{parallelpairs} and Proposition~\ref{confparallelpairs}, respectively.
We move instead directly to state the confluence theorem for left-connected systems.

 \begin{theorem}[Local confluence for left-connected systems]
 \label{th:locconflwellformed}
For a left-connected DPO system with interfaces,
if all ma-pre-critical pairs are joinable then rewriting is locally confluent: given an ma-hypergraph with interface $G_0 \tl{} I$
and $(G_{1,1} \tl{} I) \LDPOstep{} (G_0 \tl{} I) \DPOstep{} (G_{1,2} \tl{} I)$,
 there exists an ma-hypergraph with interface $W \tl{} I$ such that
 \[
\xymatrix@R=5pt@C=10pt{
&  \ar@{~>}[dl]  (G_0 \tl{} I) \ar@{~>}[dr] & \\
(G_{1,1} \tl{} I) \ar@{~>}[dr]^>>>{*} && (G_{1,2} \tl{} I) \ar@{~>}[dl]_>>>{*} \\
&(W \tl{} I).&
}
\]
 \end{theorem}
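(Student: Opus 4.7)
The plan is to mirror the proof strategy of Theorem~\ref{th:locconfl}, adapting Constructions~\ref{constr:extraction} and~\ref{constr:embedding} so that everything stays within ma-hypergraphs with interfaces. Given the branching $(G_{1,1} \tl{} I) \LDPOstep{} (G_0 \tl{} I) \DPOstep{} (G_{1,2} \tl{} I)$, I would first apply Construction~\ref{constr:extraction}: factor the joint match $[f_1,f_2] \colon L_1+L_2 \to G_0$ through its epi-mono image $G_0' \hookrightarrow G_0$ and clip the two derivations. Since $G_0$ is monogamous acyclic and $G_0'$ is a sub-hypergraph, $G_0'$ inherits acyclicity and the property that every node has in- and out-degree at most $1$; hence $G_0'$ is an ma-hypergraph.

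Next, I would assemble an ma-interface $J'$ for $G_0'$. The pullback interface produced by Construction~\ref{constr:extraction} may contain items which, under $G_0' \hookrightarrow G_0$, map to nodes of $G_0'$ that are internal (as noted in Example~\ref{ex:badpullback}); following the recipe sketched after that example, I would restrict the pullback to those items whose image is an input or output of $G_0'$, obtaining a well-defined ma-hypergraph with interface $G_0' \tl{} J'$. Since the rules are left-connected (in particular, ma-rules), the clipped rewriting steps are themselves convex rewrites with boundary complements, so the pair
$(G_{1,1}' \tl{} J') \LDPOstep{} (G_0' \tl{} J') \DPOstep{} (G_{1,2}' \tl{} J')$
is an ma-pre-critical pair in the sense of the preceding subsection. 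By hypothesis it is joinable, yielding derivations $(G_{1,1}' \tl{} J') \DPOstep{}^* (W' \tl{} J') {\,\,{}^*\!\!\LDPOstep{}\,\,\,} (G_{1,2}' \tl{} J')$ through an ma-hypergraph with interface $W' \tl{} J'$.

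Third, I would lift these joining derivations back to the original context via Construction~\ref{constr:embedding}. The ``context'' object $C_0$ is obtained as the pushout complement of $G_0' \hookrightarrow G_0$ along $J' \to G_0'$, so that the hypothesis \eqref{diag:embeddingAssumption} holds by construction. Applying the embedding construction then yields derivations $(G_{1,1} \tl{} J) \DPOstep{}^* (W \tl{} J) {\,\,{}^*\!\!\LDPOstep{}\,\,\,} (G_{1,2} \tl{} J)$ for $J$ the induced interface; isomorphism of the two candidate joining objects follows as in the proof of Theorem~\ref{th:locconfl}, via the fact that both are pushouts along $J \tl{} J' \to W'$. A final precomposition with $\iota \colon I \to J$ supplied by the universal property of the pullback recovers the original interface $I$.

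The main obstacle will be verifying that both constructions restrict properly to the ma-setting. For extraction this means ensuring that the interface restriction from the pullback to the inputs and outputs of $G_0'$ is compatible with the pushout complements involved in clipping, and that the clipped steps remain convex rewriting steps in the sense of Definition~\ref{def:convex-dpoi}; here left-connectedness is essential, since strong connectedness of left-hand sides prevents new non-convex contexts from appearing after clipping, and left-linearity guarantees uniqueness of pushout complements. For embedding, the delicate point is to confirm that each intermediate object $G_i$ produced by Construction~\ref{constr:embedding} is an ma-hypergraph with interface; this follows because convex ma-rewriting preserves ma-hypergraphs and the pushouts $(\gamma_i), (\delta_i), (\epsilon_i)$ amount to glueing along an ma-interface via a mono, so monogamy and acyclicity are preserved throughout.
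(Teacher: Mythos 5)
Your proposal is correct and takes essentially the same route as the paper, which itself only sketches the argument as ``analogous to Theorem~\ref{th:locconfl}'' while flagging that the essential difference is that ma-pre-critical-pair interfaces are no longer pullbacks and that left-connectedness lets Constructions~\ref{constr:extraction} and~\ref{constr:embedding} be reproduced. You have identified and elaborated exactly those two delicate points (replacing the pullback interface by its restriction to inputs/outputs, and checking that extraction and embedding stay within ma-hypergraphs, with left-linearity and strong connectedness doing the work), so your write-up is a faithful filling-in of the proof the authors leave implicit.
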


The proof of Theorem~\ref{th:locconflwellformed} follows steps analogous to the one of Theorem~\ref{th:locconfl}. The essential difference is that ma-pre-critical pairs now have interfaces that are not necessarily pullbacks. The assumption of left-connectedness is nevertheless enough to ensure that the fundamental pieces, namely Constructions~\ref{constr:extraction} and~\ref{constr:embedding}, can be reproduced.

\begin{corollary}\label{cor:confluenceSMTs}
Let $\mathcal{R}$ be a terminating left-connected rewriting system on $\syntax{\Sigma}$. Then confluence of $\rewr_{\mathcal{R}}$ is decidable.
\end{corollary}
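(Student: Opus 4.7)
The plan is to reduce the statement to Corollary~\ref{cor:comput} in the convex setting, mirroring the structure of the proof of Corollary~\ref{cor:confluenceFROB} but replacing plain DPOI rewriting with its convex variant and Lemma~\ref{lemma:precritical-discrete} with the fact that we are restricting from the outset to ma-pre-critical pairs. Concretely, by Theorem~\ref{thm:stronglyconnected}, $d \Rew{\mathcal{R}} e$ iff $\allTosem{\rewiring{d}} \rigidDPOstep{\allTosem{\rewiring{\mathcal{R}}}} \allTosem{\rewiring{e}}$, so confluence of $\Rew{\mathcal{R}}$ is equivalent to confluence of the convex DPOI system $\allTosem{\rewiring{\mathcal{R}}}$ restricted to the ma-hypergraphs with interfaces in the image of $\allTosem{\rewiring{\cdot}}$. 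Moreover, since $\mathcal{R}$ terminates, so does $\allTosem{\rewiring{\mathcal{R}}}$, and by Newman's lemma it suffices to decide local confluence.

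First I would argue the equivalence: $\Rew{\mathcal{R}}$ is (locally) confluent iff every ma-pre-critical pair of $\allTosem{\rewiring{\mathcal{R}}}$ is joinable. The ``if'' direction is exactly Theorem~\ref{th:locconflwellformed} combined with Newman. For ``only if'', observe that the common source $S \tl{} J$ of an ma-pre-critical pair is by construction an ma-hypergraph with interface, hence by Theorem~\ref{thm:charactImage} lies in the image of $\allTosem{\rewiring{\cdot}}$; so it equals $\allTosem{\rewiring{d}}$ for some $d \in \syntax{\Sigma}$, and Theorem~\ref{thm:stronglyconnected} transports the two diverging convex rewrites back to syntactic rewrites $d_1 \,{}_{\mathcal{R}}\!\!\Leftarrow d \Rew{\mathcal{R}} d_2$. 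Any common reduct of $d_1$ and $d_2$ in $\syntax{\Sigma}$ would again yield a common ma-reduct under $\allTosem{\rewiring{\cdot}}$; hence non-joinability of the ma-pre-critical pair witnesses non-confluence of $\Rew{\mathcal{R}}$.

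Second, I would show the test is effective. The set of ma-pre-critical pairs is a subset of the pre-critical pairs of $\allTosem{\rewiring{\mathcal{R}}}$: as in Corollary~\ref{cor:comput} the latter is finite and computable because, for any two rules, there are only finitely many quotients of $L_1 + L_2$, and the monogamicity/acyclicity of $S$ together with the shape of $J$ can be checked on each candidate. Given such a pair, joinability is decidable: each convex rewriting step is computable (the possible boundary complements are finitely enumerable and convexity of a match is decidable on finite hypergraphs), and termination of $\allTosem{\rewiring{\mathcal{R}}}$ bounds the depth of the search for a common reduct of $H_1 \tl{} J$ and $H_2 \tl{} J$.

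The main obstacle is the ``only if'' direction: one must ensure that the reduction of $\Rew{\mathcal{R}}$-confluence to joinability of \emph{ma}-pre-critical pairs (rather than all pre-critical pairs) remains sound. This hinges on the fact that ma-pre-critical pairs are exactly the syntactically meaningful overlaps—guaranteed by the fact that $S \tl{} J$ is an ma-hypergraph with interface—so that Theorem~\ref{thm:stronglyconnected} applies in both directions. The left-connectedness hypothesis is precisely what makes Theorem~\ref{thm:stronglyconnected} available, so once that correspondence is in place the decidability argument is the convex analogue of Corollary~\ref{cor:confluenceFROB}.
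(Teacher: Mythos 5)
Your proposal is correct and follows essentially the same route as the paper, whose proof is a two-line appeal to Theorems~\ref{thm:stronglyconnected} and~\ref{th:locconflwellformed} together with termination and computability of $\allTosem{\rewiring{\mathcal{R}}}$. You merely spell out the details the paper leaves implicit — in particular the ``only if'' direction via the fact that the source of an ma-pre-critical pair lies in the image of $\allTosem{\rewiring{\cdot}}$, mirroring the argument of Corollary~\ref{cor:confluenceFROB} — which is a faithful elaboration rather than a different argument.
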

\begin{proof} By Theorem~\ref{thm:stronglyconnected} and~\ref{th:locconflwellformed}, it is enough to check whether pre-critical pairs in
$\allTosem{\rewiring{\mathcal{R}}}$ are joinable. This is decidable since $\mathcal{R}$ is terminating and $\allTosem{\rewiring{\mathcal{R}}}$ is computable. \end{proof}

\begin{example}  \label{eq:mimramConfluent} The PROP rewriting system $\mathcal{R}$ of Example~\ref{ex:mimram} is left-connected. Once interpreted as the DPO rewriting system with interfaces of Example~\ref{ex:mimramdpo}, we can do critical pair analysis. The ma-pre-critical pair below (where the middle grey graph acts as the interface for the rewriting steps) is not joinable, meaning that $\mathcal{R}$ is not confluent

\[\cgr[width=12.5cm]{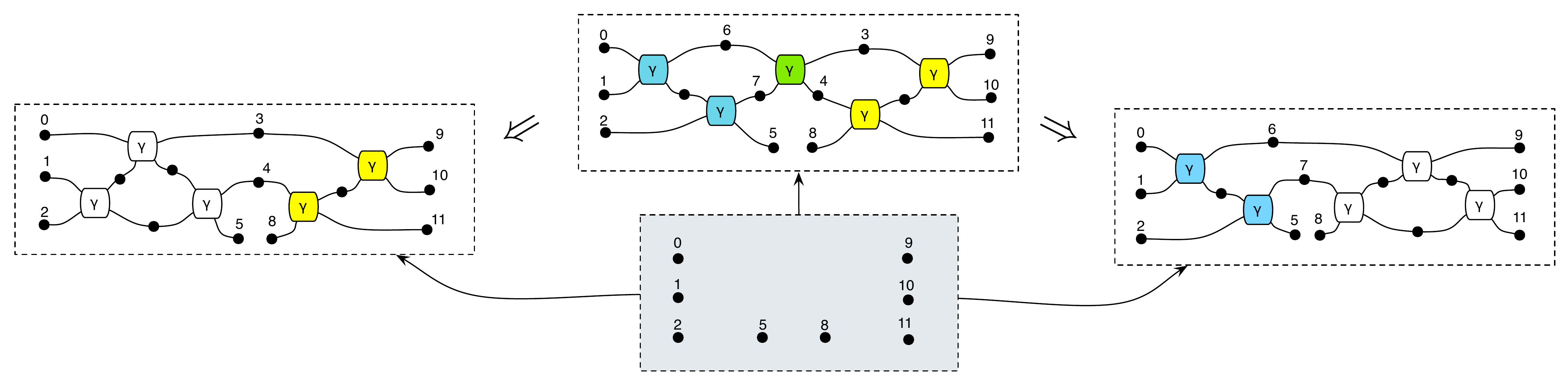}\]
We emphasise that the decision procedure relies on the fact that there are only \emph{finitely many} pre-critical pairs to consider, the above one being the only one to feature a 
non-trivial overlap of rule applications. This is in contrast with a naive, `syntactic' analysis, which as we observe in Example~\ref{ex:mimram} yields infinitely many pre-critical 
pairs for $\mathcal{R}$.
%
\end{example}

\subsection{Convex critical pair analysis via formal path extensions}\label{sec:path-extensions}
\label{convex confluence}

It is natural to ask whether we can extend critical pair analysis for convex rewriting beyond left-connected systems.
It turns out that this is true, but the usual checking of critical pairs does not suffice: they need to be checked in a variety of contexts
to account for the possible existence of paths from an output of the critical pair to an input.
However, while it might seem necessary to check infinitely many contexts to account for every way
a critical pair can be embedded in a larger graph, we get around this problem by considering
\emph{formal path contexts}. These abstract over the particular graph in which a critical pair is embedded,
and only capture whether certain paths exist.




\begin{definition}\rm
For an ma-hypergraph $G$ and a mono $m : G \to H$, the \textit{path relation} of $m$, $R_m \subseteq \out{G} \times \inp{G}$ is defined by letting $(y,x) \in R_m$ if and only if there is a path from the image of $y$ to
the image of $x$ in $H$.
 We say a mono $m' : G \to H'$ \textit{path-covers} $m$, written $m \lesssim m'$, if $R_m \subseteq R_{m'}$.
\end{definition}

Any morphism path-covers itself and path-covering is transitive, so $\lesssim$ is a pre-order (but not a partial order). We will write $m \sim m'$ for $m \lesssim m'$ and $m' \lesssim m$.

\begin{lemma}\label{lem:path-pre}
For an ma-hypergraph $G$ and monos $m : G \to H, m' : G \to H'$ such that $m \lesssim m'$, we have for any mono $k : K \to G$, $m \circ k \lesssim m' \circ k$.
\end{lemma}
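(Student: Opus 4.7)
The goal is to show that $R_{m \circ k} \subseteq R_{m' \circ k}$, i.e., that any pair $(y, x) \in \out{K} \times \inp{K}$ witnessed by a path in $H$ from $m(k(y))$ to $m(k(x))$ is also witnessed by a path in $H'$ from $m'(k(y))$ to $m'(k(x))$. The plan is to translate such a witnessing path into an entry in $R_m$, apply the hypothesis $R_m \subseteq R_{m'}$, and translate back.

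The first step uses the ma-structure of the ambient hypergraph $G$: for $y \in \out{K}$, the node $k(y) \in G$ has out-degree at most one (by monogamy of $G$), so iteratively following the unique out-edge yields a deterministic $G$-path $\alpha : k(y) \to \tilde y$ that terminates (by acyclicity and finiteness of $G$) at some $\tilde y \in \out{G}$. Dually, backward traversal from $k(x)$ produces a $G$-path $\beta : \tilde x \to k(x)$ with $\tilde x \in \inp{G}$. Applying $m$ and $m'$ to $\alpha$ and $\beta$ gives paths in $H$ and $H'$ respectively, since both are hypergraph homomorphisms.

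The crux of the argument is then to establish $(\tilde y, \tilde x) \in R_m$, that is, a path in $H$ from $m(\tilde y)$ to $m(\tilde x)$. I would obtain this by showing that the given path $p$ from $m(k(y))$ to $m(k(x))$ can be combined with the $m$-images of $\alpha$ and $\beta$ to form such a path. The argument proceeds by decomposing $p$ into alternating internal segments (edges in $m(G)$, which correspond bijectively to $G$-paths via the mono $m$) and external segments (edges in $H \setminus m(G)$); monogamy in $G$ is used to constrain the internal segments, forcing them to align with the canonical traversals extending $\alpha$ and $\beta$, so that $p$ can be ``rerouted'' to start at $m(\tilde y)$ and end at $m(\tilde x)$. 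Once $(\tilde y, \tilde x) \in R_m$ is in hand, the hypothesis $m \lesssim m'$ yields $(\tilde y, \tilde x) \in R_{m'}$, producing a path $q^*$ in $H'$ from $m'(\tilde y)$ to $m'(\tilde x)$. Concatenating $m'(\alpha) \cdot q^* \cdot m'(\beta)$ gives the desired path in $H'$ from $m'(k(y))$ to $m'(k(x))$, witnessing $(y, x) \in R_{m' \circ k}$.

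The main obstacle is the rerouting step: ensuring that the existence of $p$, together with the canonical $G$-paths $\alpha$ and $\beta$, genuinely produces a directed path from $m(\tilde y)$ to $m(\tilde x)$ in $H$. The structural tension is that $\alpha$ and $\beta$ point the ``wrong way'' to concatenate naïvely with $p$, so a careful combinatorial argument is needed to show that monogamy and acyclicity of $G$ force any such path $p$ to factor through the extensions in a manner compatible with the desired witness. The remaining steps (invoking the hypothesis and reassembling in $H'$) are routine once this step is carried out.
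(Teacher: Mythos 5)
Your argument is essentially the paper's: its proof splits the witnessing path into a segment inside $m[G]$ from an output of $m\circ k[K]$ to an output of $m[G]$, a segment from an output to an input of $m[G]$ (i.e.\ an element of $R_m$), and a segment inside $m[G]$ ending at an input of $m\circ k[K]$ --- precisely your $m(\alpha)$, your pair $(\tilde y,\tilde x)\in R_m$, and your $m(\beta)$. The ``rerouting'' obstacle you flag dissolves immediately: by monogamy, every node of $m[G]$ with positive out-degree in $G$ has its unique out-edge lying in $m[G]$, so any path leaving $m(k(y))$ is forced to coincide with $m(\alpha)$ until it first reaches $m(\tilde y)$ (and dually at the other end), which is exactly the three-part decomposition the paper's proof asserts.
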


\begin{proof}
  If there is a path from an output to an input of $m \circ k [K]$ in $H$, it must split into 3 parts: a path from an output of $m \circ k[K]$ to an output of $m[G]$, a path from an output of $m[G]$ to an input of $m[G]$, and a path from an input of $m[G]$ to an input of $m \circ k[K]$. The first and third parts will also be present in the image of $m' \circ k$, and the second part will be whenever $m \lesssim m'$. Hence $m \circ k \lesssim m' \circ k$.
\end{proof}

We extend $\Sigma$ with 3 new formal path generators $\mathcal P = \{\ \tikzfig{fpath-mul}, \tikzfig{fpath-comul}, \tikzfig{fpath-map} \}$, and introduce a family of monos which can produce any path relation.

\begin{definition}\label{def:path-ext}\rm
  For an ma-hypergraph $G$ and a binary relation $R \subseteq \out{G} \times \inp{G}$, a mono $p : G \to P$ is called a \textit{path extension} if $P$ consists of $p[G] \cong G$, augmented by additional vertices and $\mathcal P$-labelled hyperedges such that there is a path from an output $y \in j$ to an input $x \in i$ if and only if $(y, x) \in R$.
\end{definition}

It follows by construction that $R = R_p$. Note that there is more than one way to construct a path extension for a given $R$, but if $R = R_p = R_q$ then $p \sim q$.



\begin{lemma}\label{lem:path-ext-rewrite}
Let $L \leftarrow K \rightarrow R$ be a left-linear ma-rule in $\Hyp{\Sigma}$, $p: G' \to P$ a path extension,
and $m : L \to P$ a convex match. Then, $m$ factors as $L \xrightarrow{m'} G' \xrightarrow{p} P$,
and the convex rewrite of $G'$ at $m'$ extends to a convex rewriting step of $P$ at $m$ as follows
  \begin{equation}\label{eq:path-cut}
    \xymatrix{
      L\ar[d]_{m'} & \ar[l] K\ar[d] \ar[r] & R\ar[d]^{n'} \\
      G'\ar[d]_{p} & \ar[l] C'\ar[d] \ar[r] & H'\ar[d]^{q} \\
      P & \ar[l] D \ar[r] & Q \\
    }
  \end{equation}
  and furthermore $q$ is a path extension.
\end{lemma}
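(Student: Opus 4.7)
The plan is to establish the three assertions in order: the factorization of $m$, the construction of the extended convex rewriting step, and the verification that $q$ is a path extension.

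First, for the factorization: since $L$ is labelled over $\Sigma$ while every hyperedge of $P$ outside $p[G']$ is $\mathcal{P}$-labelled (by Definition~\ref{def:path-ext}), every hyperedge in $m[L]$ must lie in $p[G']$, and therefore every vertex of $L$ adjacent to some hyperedge maps into $p[G']$. Isolated vertices of $L$ (the degenerate identity-wire case) can be handled separately using convexity of $m$, after which left-linearity of the rule yields the unique $m' : L \to G'$ with $m = p \circ m'$.

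Next, I would verify that $m'$ is a convex match in $G'$. Any directed path in $G'$ embeds into $P$ via $p$, and convexity of $m$ in $P$ forces every hyperedge on such a path between nodes of $m[L]$ to lie in $m[L]$; being $\Sigma$-labelled, these hyperedges already belong to $p[G']$, and hence come from $m'[L]$ uniquely. Thus the top two rows of \eqref{eq:path-cut} form a valid convex rewriting step of $G'$ at $m'$ by the assumption on the rule being left-linear and an ma-rule.

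The third and most delicate step is to show that this rewrite extends to a convex rewriting step on $P$, with $q$ a path extension. By Definition~\ref{def:path-ext}, $P$ is obtained by attaching a hypergraph $X$ (the added vertices and $\mathcal{P}$-hyperedges) to $G'$ along a set $Y \subseteq \inp{G'} \cup \out{G'}$, so $P \cong G' \cup_Y X$. The key combinatorial observation is that $Y \cap m'[L] \subseteq m'[K]$: by monogamy, if an input $v$ of $G'$ lies in $m'[L]$ then its preimage in $L$ must itself have in-degree $0$ (any incoming $\Sigma$-hyperedge in $L$ at the preimage would lift to an incoming hyperedge at $v$ in $G'$, contradicting that $v$ is an input), hence be an input of $L$, and inputs of $L$ lie in the image of the left leg $i \to L$ of the ma-rule interface $K = i + j$; a symmetric argument handles outputs. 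Consequently $Y$ factors through $C'$ (viewed as a subobject of $G'$), so by pushout-pasting I may set $D := C' \cup_Y X$ and $Q := H' \cup_Y X$, yielding both pushouts of \eqref{eq:path-cut}. Because inputs/outputs of $G'$ sitting in $m'[K]$ are preserved by the ma-rule through $K \to R$ (neither $R$ nor $C'$ introduces new incoming/outgoing hyperedges at these vertices), they remain inputs/outputs in the pushout $H'$. Hence $Q$ is precisely $H'$ augmented by the same added vertices and $\mathcal{P}$-hyperedges, attached at the images of the same boundary vertices, exhibiting $q : H' \to Q$ as a path extension.

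The main obstacle is the containment $Y \cap m'[L] \subseteq m'[K]$: it is what makes the DPO rewrite compatible with the pushout defining the path extension, and therefore what enables the pushout-pasting argument at all. It is not automatic, and rests on the interplay between monogamy, convexity of $m$, and the ma-cospan structure of the rule's interface. Once established, the fact that the left square of \eqref{eq:path-cut} is a boundary complement follows from the analogous property of $G' \leftarrow C' \to H'$ together with the fact that the path extension only adds attachments to inputs and outputs.
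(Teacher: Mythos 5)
Your proof is correct and reaches the same conclusions as the paper's, but it takes a more explicit route for the bottom row of \eqref{eq:path-cut}. The paper obtains $D$ abstractly as a pushout complement (existence from the gluing conditions, since $C'$ contains the image of $K$) and then identifies the resulting double square with the rewrite of $P$ at $m$ by uniqueness of pushout complements for left-linear rules; the fact that $q$ is a path extension is dispatched by observing that all $\mathcal P$-hyperedges of $Q$ must come from $D$. You instead decompose $P \cong G' \cup_Y X$, where $X$ is the adjoined gadget and $Y$ its attaching set of inputs and outputs of $G'$, and build $D$ and $Q$ by regluing $X$ onto $C'$ and $H'$; this hinges on your containment $Y \cap m'[L] \subseteq m'[K]$, which is correct (an input of $G'$ lying in $m'[L]$ must come from an input of $L$, hence from the image of $i$, and dually for outputs) and makes both the pushout-pasting argument and the preservation of the attaching boundary completely concrete. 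The extra care you take in checking that $m'$ is convex and that the left square over $P$ is a boundary complement is welcome, as the paper leaves these points implicit. One small correction: convexity of $m$ does not handle isolated vertices of $L$ --- a convex match could perfectly well send an isolated vertex to one of the fresh vertices of $X$, since convexity only constrains paths between nodes of the image. The correct argument, which the paper uses, is that left-linearity of an ma-rule already forbids isolated vertices in $L$ (such a vertex would be both an input and an output, so $[a_1,a_2]\colon i+j \to L$ could not be mono); left-linearity is what excludes them, whereas uniqueness of the factorisation $m = p \circ m'$ follows simply from $p$ being mono.
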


\begin{proof}
  Because $L$ contains no $\mathcal P$-hyperedges, the image of every hyperedge in $L$ under $m$ must be in the image of $G'$ under $p$. Furthermore, by left-linearity $L$ contains no isolated vertex, so every vertex in the image of $m$ is in the image of $G'$ under $p$. Hence $m$ factors as $p \circ m'$, as required.

  The top pushouts in \eqref{eq:path-cut} are constructed as a convex DPO rewriting step. The bottom-left pushout is constructed as a pushout complement, which exists because $m$ satisfies the gluing conditions with respect to $K$, so $p$ satisfies the gluing conditions with respect to $C'$ (which contains $K$). The bottom-right square is a pushout. This corresponds to the original rewrite $P \DPOstep{} Q$ by uniqueness of the pushout complement $D$.

  It only remains to show that $q$ is a path extension. This follows from the fact that any $\mathcal P$-hyperedges in the pushout yielding $Q$ must come from $D$.
\end{proof}

\begin{lemma}\label{lem:path-ext-rewrite2}
Let $L \leftarrow K \rightarrow R$, $p$, $q$, $m'$, and $n'$ be given as in Lemma~\ref{lem:path-ext-rewrite}.
Then, for any mono $k : G' \to G$ such that $k \lesssim p$ the convex rewriting of $G'$ at $m'$
extends to a convex rewriting of $G$ at $k \circ m'$ as follows, where $l \lesssim q$
  \begin{equation}\label{eq:path-glue}
    \xymatrix{
      L\ar[d]_{m'} & \ar[l] K\ar[d] \ar[r] & R\ar[d]^{n'} \\
      G'\ar[d]_{k} & \ar[l] C'\ar[d] \ar[r] & H'\ar[d]^{l} \\
      G & \ar[l] C \ar[r] & H \\
    }
  \end{equation}
\end{lemma}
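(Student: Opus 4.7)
The plan is to verify each of the three claims of the conclusion in turn: that $k \circ m' : L \to G$ is a convex match, that the rewrite of $G$ at this match fits into \eqref{eq:path-glue} with induced vertical maps, and finally that $l \lesssim q$.

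For convexity of $k \circ m'$, the composite of monos is mono, so the content is that its image $k[m'[L]]$ is convex in $G$. I would take a path $\pi$ in $G$ between two vertices of $k[m'[L]]$. Since $G$ is ma, monogamy forces every vertex of $k[G']$ that was internal in $G'$ to have exactly the same adjacent hyperedges in $G$ as in $G'$, so $\pi$ can exit $k[G']$ only at (the image of) an output of $G'$ and re-enter only at (the image of) an input of $G'$. Each such excursion witnesses a pair in $R_k$, hence in $R_p$ by hypothesis, and by Definition~\ref{def:path-ext} it is realised in $P$ by a path of $\mathcal P$-hyperedges. Concatenating the $p$-images of the portions of $\pi$ inside $k[G']$ with these $\mathcal P$-hyperedge detours produces a path in $P$ between two vertices of $m[L]$. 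Convexity of $m$ in $P$, together with the absence of $\mathcal P$-hyperedges in $L$, then forces this path to traverse no $\mathcal P$-hyperedge at all, so in fact $\pi$ never exits $k[G']$. Pulling $\pi$ back along the mono $k$ gives a path in $G'$; pushing it forward along $p$ and invoking convexity of $m$ a second time shows all its hyperedges lie in $m'[L]$, whence those of $\pi$ lie in $k[m'[L]]$.

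Having established a convex match, the rewriting of $G$ at $k \circ m'$ exists and is unique by left-linearity of the rule. To obtain the vertical comparison maps $C' \to C$ and $l : H' \to H$ of \eqref{eq:path-glue}, note that since $G'$ and $G$ are ma-hypergraphs and $k$ is mono, monogamy prevents any internal vertex of $G'$ from gaining adjacencies in $G$; hence every extra hyperedge of $G$ attaches to $k[G']$ only at boundary vertices, and $G$ admits a pushout decomposition $G \cong G' +_J C_0$ where $J$ records the inputs and outputs of $G'$. Applying the embedding technique of Construction~\ref{constr:embedding} to this decomposition extends the rewrite $(G' \tl{} J) \DPOstep{} (H' \tl{} J)$ to $(G \tl{} J) \DPOstep{} (H \tl{} J)$, yielding the required pushouts of~\eqref{eq:path-glue} together with the vertical maps $C' \to C$ and $l$. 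For $l \lesssim q$, consider $(y, x) \in R_l$ realised by a path $\tau$ in $H$ from $l(y)$ to $l(x)$. By construction of the embedding, $H$ differs from $l[H']$ by exactly the same context $C_0$ that decorates $G'$ inside $G$, unchanged by the rewrite. A path analysis analogous to Step~1 shows that any excursion of $\tau$ outside $l[H']$ exits and re-enters only at boundary vertices, each such excursion witnessing a pair in $R_k \subseteq R_p$. By Definition~\ref{def:path-ext} this pair is realised in $P$ by a $\mathcal P$-hyperedge path; by Lemma~\ref{lem:path-ext-rewrite} all $\mathcal P$-hyperedges of $P$ descend unchanged through the pushout complement $D$ into $Q$, so the same excursion is available in $Q$ between the $q$-images of the corresponding boundary vertices of $H'$. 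The portions of $\tau$ inside $l[H']$ push forward through the mono $q$ to paths in $q[H']$. Concatenating these segments yields a path in $Q$ from $q(y)$ to $q(x)$, so $(y, x) \in R_q$.

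The principal obstacle is the path analysis of the first paragraph: marrying monogamy of $G$ (which pins every non-boundary vertex of $k[G']$ to its original adjacency) with the hypothesis $k \lesssim p$ (which guarantees that any shortcut the context introduces around $k[G']$ is already visible as a $\mathcal P$-hyperedge path in $P$) is exactly what lets one contradict convexity of $m$ and rule out excursions of $\pi$ outside $k[G']$. Once convexity is in hand, the construction of $C$ and $l$ and the verification $l \lesssim q$ follow the standard templates of adhesive DPO rewriting with interfaces, combined with the fact, already exploited in Lemma~\ref{lem:path-ext-rewrite}, that path extensions preserve their $\mathcal P$-hyperedge structure across rewriting.
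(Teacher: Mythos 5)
Your proposal is correct, and for two of its three steps it matches the paper's proof: the bottom squares are obtained exactly as you say, by decomposing $G$ over the boundary of $k[G']$ and invoking Construction~\ref{constr:embedding}, and convexity of $k \circ m'$ is established by transporting a putative offending path into $P$ and contradicting convexity of $m$ (the paper compresses your inline monogamy/excursion analysis into an appeal to Lemma~\ref{lem:path-pre} plus the observation that a non-convex match of an ma-hypergraph must exhibit an output-to-input path, but the content is the same). Where you genuinely diverge is the verification of $l \lesssim q$. The paper argues in two stages: it first shows that rewriting preserves the path relation on each side, i.e.\ $R_p = R_q$ from the bottom squares of \eqref{eq:path-cut} and $R_k = R_l$ from those of \eqref{eq:path-glue} (in both cases because the witnessing paths live in the untouched context), and then chains $R_l = R_k \subseteq R_p = R_q$. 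You instead prove the single inclusion $R_l \subseteq R_q$ directly, by decomposing a witnessing path in $H$ into segments inside $l[H']$ (pushed forward along $q$) and excursions through the context (each yielding a pair of $R_k \subseteq R_p$, re-realised in $Q$ by the surviving $\mathcal P$-hyperedge paths). Your route is slightly longer but arguably more robust: it never needs to claim that the path relation is \emph{exactly} preserved across the rewrite, only the one inclusion that the statement requires, and it handles explicitly the case of paths that weave in and out of the image, which the paper's ``must be in $C'$'' step treats rather tersely. Both arguments rest on the same two facts — monogamy confines entry/exit to boundary nodes, and $\mathcal P$-hyperedges are untouched by any rule over $\Sigma$ — so either would serve.
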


\begin{proof}
  The bottom pushout squares are constructed from the rewrite of $G'$ as in Construction~\ref{diag:embedding}. We first need to show that $k \circ m'$ is convex. If that were not the case, there would be a path from an output of the image of $L$ to an input in $G$. But then, since $k \lesssim p$, we have by Lemma~\ref{lem:path-pre} that $k \circ m' \lesssim p \circ m'$. But then there is a path from an output of the image of $L$ in $P$ to an input, which contradicts convexity of $m = p \circ m'$. Hence $k \circ m'$ is convex.

  It only remains to show that $l \lesssim q$. Inspecting the bottom pushout squares of \eqref{eq:path-cut}, we note that, because of monogamy of $G'$, it must be the case that any path from an output to an input of the image $G'$ in $P$ must be in $C'$. Hence, the same path will be in the image of $H'$ in $Q$, so $R_p \subseteq R_q$. By the symmetry of the DPO construction, it is also the case that $R_q \subseteq R_p$ so $R_p = R_q$.

  Applying the same argument to the bottom pushout squares of \eqref{eq:path-glue}, we conclude that $R_l = R_k$. Since $k \lesssim p$, we have $R_l = R_k \subseteq R_p = R_q$, so $l \lesssim q$.
\end{proof}

\begin{lemma}\label{lem:lift-joins}
For ma-rules $L_1 \leftarrow K_1 \rightarrow R_1$ and $L_2 \leftarrow K_2 \rightarrow R_2$ and an ma-pre-critical pair
$(G'_{1,1} \leftarrow J') \LDPOstep{} (G'_0 \leftarrow J') \DPOstep{} (G'_{1,2} \leftarrow J')$, let $k : G'_0 \to G_0$
be a mono such that the induced matches $L_1 \to G'_0 \to G_0$ and $L_2 \to G'_0 \to G_0$ are convex and
$p : G'_0 \to P_0$ a path extension such that $k \lesssim p$. Then, we can obtain the following rewrites by extending
the critical pair along $k$ and $p$, respectively
  \[
    (G_{1,1} \leftarrow J) \LDPOstep{} (G_0 \leftarrow J) \DPOstep{} (G_{2,1} \leftarrow J)
    \qquad
    (P_{1,1} \leftarrow K) \LDPOstep{} (P_0 \leftarrow K) \DPOstep{} (P_{2,1} \leftarrow K)
  \]
  If the right branching is joinable by convex rewriting, then so is the left one.
\end{lemma}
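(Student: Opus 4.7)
The plan is to construct the two branchings by direct application of Lemmas~\ref{lem:path-ext-rewrite} and~\ref{lem:path-ext-rewrite2}, and then to propagate the assumed joining derivations on the $P$-side back down to the $G$-side one step at a time, using those same two lemmas in a reflecting/lifting mode.

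First, I would build the $P$-side branching by invoking Lemma~\ref{lem:path-ext-rewrite} on each leg of the pre-critical pair: since $p\colon G'_0 \to P_0$ is a path extension and each match $L_i \to G'_0$ (inherited from the pre-critical pair) is convex by hypothesis, the lemma produces convex rewrites $(P_0 \tl{} K) \DPOstep{} (P_{1,i} \tl{} K)$ equipped with path extensions $q_i\colon G'_{1,i} \to P_{1,i}$. Then applying Lemma~\ref{lem:path-ext-rewrite2} with $k\colon G'_0 \to G_0$ (for which $k \lesssim p$ by hypothesis) yields the $G$-side branching together with monos $l_i\colon G'_{1,i} \to G_{1,i}$ such that $l_i \lesssim q_i$.

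Assume now that the $P$-side is joined by convex derivations $P_{1,1} \DPOstep{}^* W \DPOtlrewr{} P_{1,2}$. I would lift these one step at a time by induction, maintaining the invariant that, after $j$ joint steps, we have commuting triples consisting of a convex $P$-step, a reflected $G'$-step (obtained via Lemma~\ref{lem:path-ext-rewrite}, whose factoring argument applies because no rule left-hand side contains $\mathcal P$-hyperedges, forcing any convex match on the current $P^{(j)}$ to factor through the underlying $G'^{(j)}$), and an embedded $G$-step coming from Lemma~\ref{lem:path-ext-rewrite2}, with the path-covering $l^{(j)} \lesssim q^{(j)}$ preserved throughout. Iterating along both joining legs produces derivations $G'_{1,1} \DPOstep{}^* W'_1$ and $G'_{1,2} \DPOstep{}^* W'_2$ at the $G'$-level, and corresponding lifts $G_{1,1} \DPOstep{}^* W^a$ and $G_{1,2} \DPOstep{}^* W^b$ at the $G$-level.

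The main obstacle is the concluding identification: showing $W^a \cong W^b$ as ma-hypergraphs with interface, providing a common join $W_G$. The key observation is that no rule touches $\mathcal P$-hyperedges, so the non-$\mathcal P$ part of $W$ is a canonically defined ma-hypergraph $W'$, which forces $W'_1 \cong W'_2 \cong W'$ up to iso because each reflected step is uniquely determined by the $P$-step via the factoring through the path extension. At the $G$-level, Lemma~\ref{lem:path-ext-rewrite2} delivers exactly the pushout embedding of Construction~\ref{constr:embedding} applied to the underlying rewrite with the context determined by the fixed mono $k$. Since this context is preserved through successive embeddings and both sequences terminate at the same $W'$, the final objects must coincide: $W^a \cong W^b =: W_G$, which establishes joinability of the original branching on the $G$-side.
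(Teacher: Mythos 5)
Your proposal is correct and follows essentially the same route as the paper's proof: both construct the two branchings via Lemmas~\ref{lem:path-ext-rewrite} and~\ref{lem:path-ext-rewrite2}, iterate the reflect-then-embed step along the joining derivations while maintaining the path-covering invariant, and conclude by stripping the $\mathcal P$-hyperedges from the common join $P_{m,1} \cong P_{n,2}$ to obtain the isomorphism at the $G'$-level and hence at the $G$-level. The only difference is presentational: you spell out the induction invariant and the final identification of $W^a$ and $W^b$ in slightly more detail than the paper does.
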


\begin{proof}
  We apply essentially the same technique as the proof of Theorem~\ref{th:locconfl}, except that we additionally need to show that, when we extend derivations from the critical pair $G'_0$ to the full graph $G_0$ following Construction~\ref{diag:embedding}
  \[  \]
  each of the matches $L_i \to G'_i \to G_i$ is convex.

  If all the critical pairs in $\mathcal R$ are path-joinable, then there exists a path-extension $p: G'_0 \to P_0$ that path-covers $G_0' \to G_0$ and $(P_{1,1} \leftarrow I) \LDPOstep{} (P_{0} \leftarrow I) \DPOstep{} (P_{1,2} \leftarrow I)$ is joinable. First, we can apply Lemma~\ref{lem:path-ext-rewrite} to translate a convex rewrite $(P_0 \leftarrow I) \DPOstep{} (P_{1,i} \leftarrow I)$ to a convex rewrite $(G'_0 \leftarrow J) \DPOstep{} (G'_{1,i} \leftarrow J)$. We can then apply Lemma~\ref{lem:path-ext-rewrite2} to extend this to a convex rewrite $(G_0 \leftarrow J) \DPOstep{} (G_{1,i} \leftarrow J)$. This yields a path-extension $q: G'_{1,i} \to P_{2,i}$ that path-covers $G'_{1,i} \to G_{1,i}$, hence we can iterate this process to get a convex rewrite $(G_{1,i} \leftarrow J) \DPOstep{} (G_{2,i} \leftarrow J)$, and so on.

  When we path-join the critical pair, we obtain $P_{m,1} \cong P_{n,2}$. If we remove all of the $\mathcal P$ hyperedges (and nodes connected only to $\mathcal P$ hyperedges), this will restrict to an isomorphism $G'_{m,1} \cong G'_{n,2}$, which in turn yields an isomorphism $G_{m,1} \cong G_{n,2}$. Hence the branching $(G_{1,1} \leftarrow J) \LDPOstep{} (G_0 \leftarrow J) \DPOstep{} (G_{1,2} \leftarrow J)$ is joinable by convex rewriting.
\end{proof}

\begin{definition}\label{def:max-path}
Given ma-rules $L_1 \leftarrow K_1 \rightarrow R_1$ and $L_2 \leftarrow K_2 \rightarrow R_2$ and an ma-pre-critical pair
$(G'_0 \leftarrow J)$, a maximal path relation is a binary relation $R \subseteq \out{G'_0} \times \inp{G'_0}$ such that
  \begin{enumerate}
    \item a mono $m : G'_0 \to H$ exists for an ma-hypergraph $H$ with $R_m = R$,
    \item the induced matchings $L_1 \to G'_0 \to H$ and $L_2 \to G'_0 \to H$ are convex, and
    \item no relation satisfying (1) and (2) is a proper superset of $R$.
  \end{enumerate}
\end{definition}

\begin{definition}\label{def:path-joinable}
Given ma-rules $L_1 \leftarrow K_1 \rightarrow R_1$ and $L_2 \leftarrow K_2 \rightarrow R_2$, an ma-pre-critical pair
$(G'_{1,1} \leftarrow J') \LDPOstep{} (G'_0 \leftarrow J') \DPOstep{} (G'_{1,2} \leftarrow J')$ is called \textit{path joinable}
if for any maximal path relation $R$, there exists a path extension $p : G'_0 \to P$ with $R_p = R$ such that the branching
$(P_{1,1} \leftarrow K) \LDPOstep{} (P_0 \leftarrow K) \DPOstep{} (P_{1,2} \leftarrow K)$ obtained by lifting the two rewriting steps 
in the critical pair along $p$ is joinable by convex rewriting.
\end{definition}

\begin{theorem}\label{thm:convex-critical-pair}
  Let $\mathcal R$ be a convex DPOI rewriting system. If all ma-pre-critical pairs are path joinable, then $\mathcal R$ is locally confluent.
\end{theorem}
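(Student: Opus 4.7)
The plan is to reduce local confluence to path joinability of ma-pre-critical pairs, by adapting the extraction procedure from Construction~\ref{constr:extraction} to the convex/ma setting and then invoking Lemma~\ref{lem:lift-joins} to transport the solution back to an arbitrary context.

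Given two convex rewrites $(G_{1,1} \leftarrow I) \LDPOstep{} (G_0 \leftarrow I) \DPOstep{} (G_{1,2} \leftarrow I)$ with convex matches $f_i \colon L_i \to G_0$, I would first extract an ma-pre-critical pair. Factoring $[f_1,f_2] \colon L_1 + L_2 \to G_0$ through its image yields an epi onto some $G'_0$ and a mono $k \colon G'_0 \to G_0$. Clipping both rewrites along this factorisation (Construction~\ref{constr:clipping}) produces a branching $(G'_{1,1} \leftarrow J') \LDPOstep{} (G'_0 \leftarrow J') \DPOstep{} (G'_{1,2} \leftarrow J')$ with jointly epi induced matches $m'_i \colon L_i \to G'_0$ such that $f_i = k \circ m'_i$, where $J'$ arises from the pullback of the clipped pushout complements, trimmed to the boundary of $G'_0$ as in Example~\ref{ex:badpullback}. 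Monogamy and acyclicity of $G'_0$ are inherited from $G_0$ via the mono $k$, so $G'_0 \leftarrow J'$ is a genuine ma-hypergraph with interface and the clipped branching is an ma-pre-critical pair. Each $m'_i$ is convex in $G'_0$ because any path witnessing non-convexity inside $G'_0$ would lift along $k$ to a witness of non-convexity of $f_i$ in $G_0$, contradicting convexity of the original matches.

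Next, I would pick a maximal path relation $R$ (Definition~\ref{def:max-path}) with $R_k \subseteq R$. The relation $R_k$ itself is realised by $k$, and both induced matches $L_i \to G'_0 \to G_0$ are convex, so $R_k$ satisfies conditions (1)--(2) of Definition~\ref{def:max-path}; maximality is then achieved by a finite enlargement inside the finite set $\out{G'_0} \times \inp{G'_0}$. By the hypothesis of path joinability, there is a path extension $p \colon G'_0 \to P_0$ with $R_p = R$ such that the lifted branching $(P_{1,1} \leftarrow K) \LDPOstep{} (P_0 \leftarrow K) \DPOstep{} (P_{1,2} \leftarrow K)$ is joinable by convex rewriting. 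Since $R_k \subseteq R_p$, we have $k \lesssim p$, so Lemma~\ref{lem:lift-joins} applies directly and yields a common convex reduct $W \leftarrow I$ of $(G_{1,1} \leftarrow I)$ and $(G_{1,2} \leftarrow I)$, which is precisely local confluence.

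The main obstacle is the extraction step. Unlike in the proof of Theorem~\ref{th:locconfl}, which takes place in an arbitrary adhesive category and can freely use pullback interfaces, here I must verify that clipping along the image factorisation preserves monogamy, acyclicity, and convexity, and that the raw pullback interface has to be cut down to the actual boundary of $G'_0$ to obtain a well-formed ma-hypergraph with interface. A secondary, milder point is the existence of a maximal path relation above $R_k$, which follows from finiteness of $\out{G'_0}\times\inp{G'_0}$ together with the observation that the witnessing mono $k$ already certifies realisability and convexity of $R_k$. Once these combinatorial checks are in place, Lemma~\ref{lem:lift-joins} handles all the diagrammatic bookkeeping at once.
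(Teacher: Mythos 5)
Your proposal is correct and follows essentially the same route as the paper's proof: extract an ma-pre-critical pair whose embedding $e\colon G'_0\to G_0$ has a path relation satisfying conditions (1)--(2) of Definition~\ref{def:max-path}, enlarge it to a maximal path relation, invoke path joinability to get a joinable path extension with $e\lesssim p$, and conclude via Lemma~\ref{lem:lift-joins}. In fact you spell out several details the paper leaves implicit (the image factorisation and clipping, preservation of monogamy/acyclicity/convexity under the mono $k$, and the finiteness argument for the existence of a maximal path relation above $R_k$), all of which are sound.
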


\begin{proof}
  For any branching $(G_{1,1} \leftarrow J) \LDPOstep{} (G_0 \leftarrow J) \DPOstep{} (G_{1,2} \leftarrow J)$, we can find an ma-pre-critical pair based at $G_0'$ where the embedding $e : G'_0 \to G_0$ has a path relation $R_e$ satisfying conditions (1) and (2) in Definition~\ref{def:max-path}. Hence, there exists a path-extension $p : G'_0 \to P_0$ where $R_e \subseteq R_p$ and the associated branching is joinable by convex rewriting. Hence by Lemma~\ref{lem:lift-joins}, the branching based at $G_0$ is also joinable by convex rewriting. Therefore $\mathcal R$ is locally confluent.
\end{proof}

The converse of this theorem is almost true, but with a small caveat that one needs to consider local confluence of ma-hypergraphs over the full signature $\Sigma + \mathcal P$ containing the formal path generators, rather than just $\Sigma$.

\begin{theorem}
  Let $\mathcal R$ be a convex DPOI rewriting system. If it is locally confluent for all ma-hypergraphs labelled by $\Sigma + \mathcal P$, then all ma-pre-critical pairs are path joinable.
\end{theorem}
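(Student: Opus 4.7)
The plan is to fix an arbitrary ma-pre-critical pair and a maximal path relation, build a path extension realising that relation, lift the two rewriting steps to the extension, and invoke the hypothesised local confluence over $\Sigma + \mathcal{P}$.

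First, fix ma-rules $L_1 \tl{} K_1 \tr{} R_1$ and $L_2 \tl{} K_2 \tr{} R_2$ together with an ma-pre-critical pair $(G'_{1,1} \tl{} J') \LDPOstep{} (G'_0 \tl{} J') \DPOstep{} (G'_{1,2} \tl{} J')$ and a maximal path relation $R \subseteq \out{G'_0} \times \inp{G'_0}$. By Definition~\ref{def:max-path}, there is a witnessing mono $m \colon G'_0 \to H$ into an ma-hypergraph $H$ with $R_m = R$ such that both composite matches $L_i \to G'_0 \to H$ ($i=1,2$) are convex. By Definition~\ref{def:path-ext}, we can also construct a path extension $p \colon G'_0 \to P_0$ with $R_p = R$, where $P_0$ is an ma-hypergraph over $\Sigma + \mathcal{P}$.

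The crux of the argument is to transfer convexity from $H$ to $P_0$, i.e.\ to show that the induced matches $L_i \to G'_0 \to P_0$ are convex. Suppose for contradiction that some such match is not convex: then there are nodes $v, v'$ in the image of $L_i$ and a path from $v$ to $v'$ in $P_0$ containing a hyperedge outside $L_i$'s image. We split into two cases. If the offending path lies entirely within $p[G'_0]$, then via the mono $G'_0 \hookrightarrow H$ we obtain a path in $H$ with exactly the same defect, contradicting the convexity of the match into $H$. Otherwise, the path uses at least one $\mathcal{P}$-hyperedge; by monogamy of $P_0$, every excursion into the $\mathcal{P}$-part must leave $p[G'_0]$ at some output $y$ and re-enter at some input $x$ with $(y,x) \in R_p = R_m$. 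Using $R_m = R$, each such excursion is matched by a genuine path from $y$ to $x$ in $H$, so stitching these with the unchanged segments inside $G'_0$ produces a non-convex path from $v$ to $v'$ in $H$, once again contradicting convexity there.

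With convexity in $P_0$ established, Lemma~\ref{lem:path-ext-rewrite} lifts each rewriting step of the critical pair to a convex rewriting step of $P_0$, yielding a branching $(P_{1,1} \tl{} K) \LDPOstep{} (P_0 \tl{} K) \DPOstep{} (P_{1,2} \tl{} K)$ over $\Sigma + \mathcal{P}$, where $K$ is the canonical interface of $P_0$ as an ma-hypergraph with interface. The hypothesis of local confluence for ma-hypergraphs over $\Sigma + \mathcal{P}$ then supplies a join of this branching by convex rewriting, which is precisely what path joinability for the chosen $R$ requires. Since $R$ was arbitrary, the ma-pre-critical pair is path joinable, and since the critical pair was arbitrary, all ma-pre-critical pairs are path joinable.

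The main obstacle is the convexity transfer argument: one has to show that every putative convexity-breaking path in the formal extension $P_0$ can be mirrored by a genuine convexity-breaking path in the concrete witness $H$. This is precisely the step where the definition of maximal path relation --- which explicitly demands the existence of a convex witness $H$ realising $R$ --- is indispensable, and it is also the step that forces the statement to quantify over ma-hypergraphs over the enlarged signature $\Sigma + \mathcal{P}$ rather than $\Sigma$ alone.
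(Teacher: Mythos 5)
Your proof is correct and follows essentially the same route as the paper, which dispatches this direction in one line: a non-path-joinable ma-pre-critical pair would exhibit, via its path extension, a non-joinable branching of an ma-hypergraph over $\Sigma + \mathcal P$, contradicting the assumed local confluence. The only substantive addition is your explicit convexity-transfer argument from the witness $H$ to the path extension $P_0$ (needed so that the lifted branching is a genuine convex branching over $\Sigma + \mathcal P$), a point the paper leaves implicit; that argument is sound, since any convexity-violating path in $P_0$ either lies wholly in $p[G'_0]$ or decomposes, by monogamy, into excursions through $\mathcal P$-hyperedges that $R_m = R_p$ lets you replace by genuine paths in $H$.
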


The proof is immediate, since failing to join the path-extension of a ma-pre-critical pair witnesses a failure of local confluence. There is no reason \textit{a priori} that the above theorem would hold just for ma-hypergraphs. Hence, one can see the inclusion of the formal path generators as a sort of `stabilisation' of the theory that rules out certain degenerate cases of convex rewriting systems, such as those where certain paths never exist or can always be broken by rewriting.

Theorem~\ref{thm:convex-critical-pair} gives us an effective way to check local confluence. For an ma-pre-critical pair, we need to enumerate all the maximal path relations, and for each one, construct a path extension and check if it is joinable. While there could in principle be exponentially many of these for each critical pair, conditions (1)-(3) in Definition~\ref{def:max-path} rule many of them out. We will see this process in action in the case study in Section~\ref{sec:non-left-conn}.

\section{Case studies}
\label{case studies}
We close the paper by providing two positive examples of our confluence results. Both of them
concern ma-hypergraphs, distinguishing between left-connected systems and convex rewriting.

\subsection{Left-connected and confluent: non-commutative bimonoids}\label{sec:confluenceBimonoids}
First case study is an application of the results on left-connected systems, showing confluence of the theory $\NCB{}$ of non-commutative bimonoids 
(Example \ref{exm:props}\ref{ex:ncbialgebras}). Below is the interpretation of the theory as a DPO system 
$\allTosem{\rewiring{\mathcal{R}_{\textbf{NBiM}}}}$, which was shown to be terminating in~\cite{BGKSZ-parttwo}
\[
    \begin{array}{rrclrrcl}
          \NCB{1} := & \!\!\!\!\cgr{NBassocMultL.pdf}\!\!\!& \!\!\!\leftarrow \!\!\!\cgr{fourDotsFramed.pdf} \!\!\!\!\! \rightarrow & \!\!\!\!\!\cgr{NBassocMultR.pdf}
          &
              \NCB{2} := & \!\!\!\!\cgr{NBassocComultL.pdf}\!\!\!\! & \!\!\leftarrow \!\!\! \cgr{fourDotsFramed.pdf} \!\!\!\!\!  \rightarrow\! & \!\!\!\!\!\!\cgr{NBassocComultR.pdf}\!\!\!\!
              \\  
      \NCB{3} := & \!\!\!\!\cgr{NBunitMultDL.pdf}\!\!\! & \leftarrow \!\!\!\!\cgr{twoDotsFramed.pdf}\!\!\!\!\! \rightarrow & \!\!\!\!\cgr{oneDotZeroOne.pdf}\!\!\!\!
      &
      \NCB{4} := & \!\!\!\!\cgr{NBunitMultUL.pdf}\!\!\! & \leftarrow \!\!\!\!\cgr{twoDotsFramed.pdf}\!\!\!\!\! \rightarrow & \!\!\!\!\cgr{oneDotZeroOne.pdf}\!\!\!\!
      \\
      \NCB{5} := & \!\!\!\!\cgr{NBcomultCounitDL.pdf}\!\!\! & \leftarrow \!\!\!\!\cgr{twoDotsFramed.pdf}\!\!\!\!\! \rightarrow & \!\!\!\!\cgr{oneDotZeroOne.pdf}\!\!\!\!
      &
      \NCB{6} := &  \!\!\!\!\cgr{NBcomultCounitUL.pdf}\!\!\! & \leftarrow \!\!\!\!\cgr{twoDotsFramed.pdf}\!\!\!\!\! \rightarrow & \!\!\!\!\cgr{oneDotZeroOne.pdf}\!\!\!\!
      \\
      \NCB{7} := & \!\!\!\!\cgr{NBmultcounitL.pdf}\!\!\! & \leftarrow \!\!\!\!\cgr{twoDotsFramed.pdf}\!\!\!\!\! \rightarrow & \!\!\!\!\!\cgr{NBmultcounitR.pdf}\!\!\!\!
      &
      \NCB{8} := & \!\!\!\!\cgr{NBunitcomultL.pdf}\!\!\!\!\! & \leftarrow \!\!\!\!\cgr{twoDotsFramed.pdf}\!\!\!\!\! \rightarrow & \!\!\!\!\cgr{NBunitcomultR.pdf}\!\!\!\!
      \\
      \NCB{9} := & \!\!\!\!\cgr{NBbialgL.pdf}\!\!\!\!\!\! & \leftarrow \!\!\!\! \cgr{fourDotsFramed.pdf} \!\!\!\!\! \rightarrow & \!\!\!\!\!\!\cgr{NBbialgR.pdf}\!\!
      &
      \NCB{10} := & \!\!\!\!\cgr{NBboneL.pdf}\!\!\! & \leftarrow \!\!\! \cgr{emptyFrame.pdf} \!\!\! \rightarrow &  { \!\!\!\!\cgr{emptyFrame.pdf}\!\!\!\!}
    \end{array}
  \]
Given that the system is terminating, it suffices to show local confluence. Observe that $\allTosem{\rewiring{\mathcal{R}_{\textbf{NBiM}}}}$ is left-connected: monogamy is ensured by the fact that it is in the image of $\allTosem{\rewiring{\cdot}}$; strong connectedness and left-linearity hold by inspection of the set of rules. We can thus use Theorem~\ref{th:locconflwellformed} and local confluence follows from joinability of the ma-pre-critical pairs. Among them, the pairs without overlap of rule applications pose no problem: they are trivially joinable. One example is given below, with the middle grey graph acting as the interface for all depicted derivation steps

\[ \cgr[width=11.5cm]{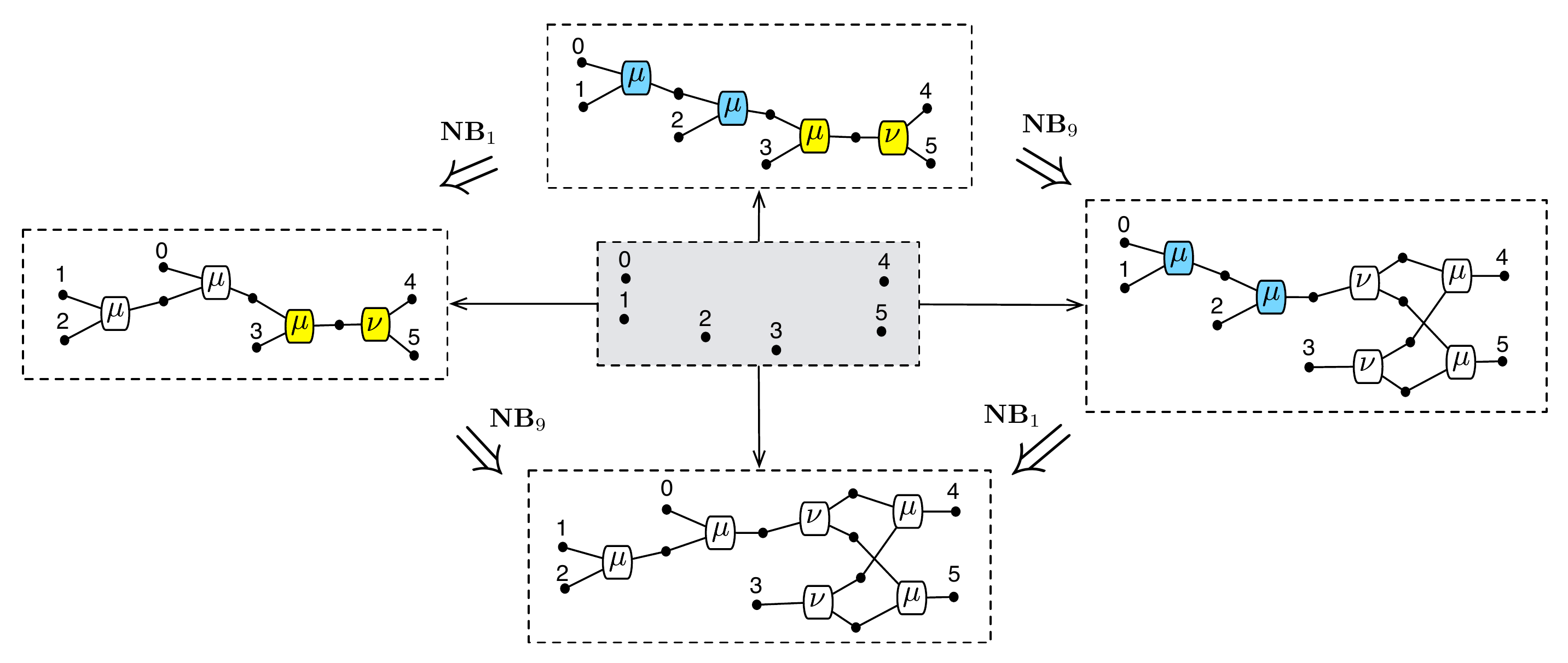} \]
Thus we confine ourselves to analysing actual critical pairs, with overlapping rule applications. One such pair is given below, also involving rules $\NCB{1}$ and $\NCB{9}$. Again, we show how it is joined, with the interface of each step drawn in the centre
\[ \cgr{criticalPairOverlapJoinColour} \]
Overall there are 22 critical pairs to consider. For each of them we only show the graph exhibiting the overlap. It is straightforward to check that the corresponding pairs are all joinable
\[ \cgr[width=12cm]{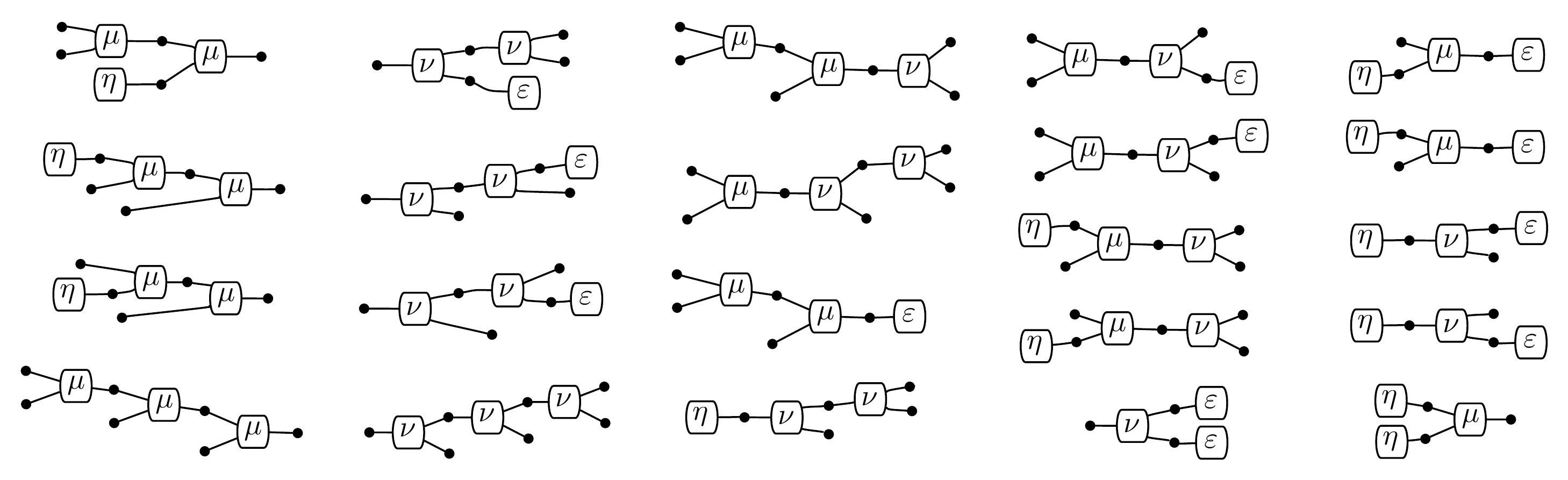} \]
We can thereby conclude that $\NCB{}$ is a confluent rewriting system. Since it is also terminating, equivalence of terms in $\NCB{}$ is decidable by means of rewriting. Note that, by virtue of Corollary \ref{cor:confluenceSMTs}, the above pre-critical pair analysis can be automated.

\subsection{A confluent, non-left-connected example}\label{sec:non-left-conn}
\label{nlc confluence}
We now consider an example of a rewriting system that is not left-connected, and demonstrate a proof of confluence by means of path extensions. Let $\Sigma = \{ f : 2 \to 2, g : 1 \to 0, h : 0 \to 1 \}$, satisfying one equation
\[ \scalebox{0.75}{\tikzfig{case-eq}} \]
which translates into the following ma-rule
\[ \NLC{1} := \scalebox{0.75}{\tikzfig{case-rule}} \]
The rule strictly decreases the number of $f$-labelled hyperedges in a graph, so \NLC{} is clearly terminating. Hence, it suffices to check local confluence to prove that
 \NLC{} is confluent.

The rule \NLC{} has two types of ma-pre-critical pairs. The first type is a genuine critical pair
\begin{equation}\label{eq:case-cp1}
  \scalebox{0.75}{\tikzfig{case-crit-pair}}
\end{equation}
and the second is a parallel pair
\begin{equation}\label{eq:case-cp2}
  \scalebox{0.75}{\tikzfig{case-pre-crit-pair}}
\end{equation}

All of the other ma-pre-critical pairs are variations of \eqref{eq:case-cp1} and \eqref{eq:case-cp2}, obtained by joining some outputs to some inputs in such a way that the two matchings of \NLC{1} remain convex.

The maximal path relations for \eqref{eq:case-cp1} and \eqref{eq:case-cp2} can be computed by exhaustive enumeration (see Appendix~\ref{app:extensions}). For \eqref{eq:case-cp1}, there is only one maximal path relation $R_p = \{ (7, 3) \}$. Hence, we can form the extension by adjoining a $\mathcal P$-hyperedge that creates a path from output 7 to input 3. This is then joinable
\[ \scalebox{0.75}{\tikzfig{case-crit-path-ext-join}} \]

The ma-pre-critical pair~\eqref{eq:case-cp2} has three maximal path relations
\begin{align*}
  R_p & = \{ 6, 7, 8 \} \times \{ 3, 4, 5 \} \\
  R_q & = \{ 9, 10, 11 \} \times \{ 0, 1, 2 \} \\
  R_r & = \{ (11, 2), (8, 5) \} \\
\end{align*}

We can construct path extensions for each of these as follows
\begin{align*}
  R_p & :\ \ \scalebox{0.75}{\tikzfig{case-pre-crit-ext1}} \\
  R_q & :\ \ \scalebox{0.75}{\tikzfig{case-pre-crit-ext2}} \\
  R_r & :\ \ \scalebox{0.75}{\tikzfig{case-pre-crit-ext3}} \\
\end{align*}

Each of these is joinable by convex rewriting. The 24 variations of \eqref{eq:case-cp1} and \eqref{eq:case-cp2} are all basically identical to these two cases, with the only difference being that one or more paths in the path extension is replaced by an output connected directly to an input. These are also all joinable, and hence the system \NLC{} is confluent.

\section{Conclusions}
\label{conclusions}

The starting observation of this paper (Theorem~\ref{th:locconfl}) is that the Knuth-Bendix property holds for DPOI rewriting.
As a consequence (Corollary~\ref{cor:comput}), confluence is decidable for terminating systems. 

The relevance of this is two-fold. On the conceptual side, it puts graph rewriting in tight correspondence with term rewriting: 
When considering rewriting with interfaces, confluence is decidable for graphs as it is for terms~\cite{knuth1970simple}, 
while the appropriate notion of ground confluence is undecidable in both cases~\cite{kapur1990ground,Plump1993}.

On the side of applications, our result allows one to study confluence for string diagrams.
One consequence of Theorem~\ref{th:locconfl} and of our previous work in~\cite{BonchiGKSZ_lics16} is that, for all those 
symmetric monoidal theories including a special Frobenius structure, which are already commonplace in computer science~\cite{Bruni2006,Bruni2011,Sobocinski2014,Coecke2008,Coecke2012,Bonchi2015,BaezErbele-CategoriesInControl,Fong2015}, 
local confluence can be checked by means of critical pair analysis. 
Moreover, confluence can be decided automatically when termination is guaranteed (Corollary~\ref{cor:confluenceFROB}). 
Analogous results on critical pairs hold for those theories that do not include a special Frobenius structure, albeit with 
a few caveats.
More precisely, confluence is accomplished by two kinds of restrictions. The first choice is to curb the family of admissible rules  
to left-connected systems (Theorem~\ref{th:locconflwellformed}): The notion of critical pair is substantially unchanged,
so that confluence can be decided automatically along the same lines as in the Frobenius case (Corollary~\ref{cor:confluenceSMTs}). 
The second option is to restrict the family of admissible rewriting steps to convex matches, at the cost of checking
a larger family of critical pairs in order to include path extensions (Theorem~\ref{thm:convex-critical-pair}). The family is still computable, 
as witnessed by the algorithm proposed in Appendix~\ref{app:extensions}, thus obtaining again confluence decidability for terminating systems.

Our results apply to a variety of other non-Frobenius theories, such as those in~\cite{Lafont2003,Ghica13,Fiore2013}. 
In any case, in  all the proposed scenarios, these decision procedures are amenable to implementation in string diagram 
rewriting tools like Quantomatic 
\cite{Kissinger_quantomatic} (via an encoding of hypergraphs) or directly in hypergraph-based rewriting tools.


\bibliographystyle{plain}
\bibliography{catBib3Rev}

\newpage

\appendix

\section{Competing Interest Declaration}
Competing interests: The authors declare none.

\section{Enumeration of Maximal Path Relations}\label{app:extensions}

The case study in section~\ref{sec:non-left-conn} makes use of the following Python code for enumerating maximal path relations. The main function \texttt{find\_extensions} takes as arguments
\begin{itemize}
  \item \texttt{inputs}: a set of input vertices,
  \item \texttt{outputs}: a set of output vertices,
  \item \texttt{paths}: a set of pairs indicating a path connects the given input to the output in the pre-critical pair,
  \item \texttt{forbidden\_paths}: a set of pairs indicating that a given path must \textit{not} exist from an output to an input, due to convexity.
\end{itemize}

\lstinputlisting{notes/max_ext.py}
\end{document}